\newtheorem{theorem}{Theorem}[section]
\newtheorem{conjecture}{Conjecture}[section]
\newtheorem{lemma}[theorem]{Lemma}
\newtheorem{remark}{Remark}
\newtheorem{proposition}[theorem]{Proposition}
\newtheorem{corollary}[theorem]{Corollary}
\journal{XX}
\numberwithin{equation}{section}
\numberwithin{equation}{section}
\begin{document}
	
	\begin{frontmatter}
		
		\title{Dynamical Analysis of a Lotka-Volterra Competition Model with both Allee and Fear Effect}
		
		\author{Shangming Chen$^{1}$}
		\ead{210320019@fzu.edu.cn}
		
		\author{Fengde Chen$^{1}$}
		\ead{fdchen@fzu.edu.cn }
		
		\author{Vaibhava Srivastava$^{2}$}
		\ead{vaibhava@iastate.edu}

		\author{Rana D. Parshad$^{2}$\corref{my corresponding author}}
		\cortext[my corresponding author]{Corresponding author}
		\ead{rparshad@iastate.edu}

		\address{1) School of Mathematics and Statistics,
			Fuzhou University,\\Fuzhou, Fujian, 350108, P. R. China \\ 2) Department of Mathematics, Iowa State University,\\ Ames, IA 50011, USA.}
		
		\begin{abstract}
			
			Population ecology theory is replete with density dependent processes. However trait-mediated or behavioral indirect interactions can both reinforce or oppose density-dependent effects.
			This paper presents the first two species competitive ODE and PDE systems where an Allee effect, which is a density dependent process and the fear effect, which is non-consumptive and behavioral are \emph{both} present. The stability of the equilibria is discussed analytically using the qualitative theory of ordinary differential equations. It is found that the Allee effect and the fear effect change the extinction dynamics of the system and the number of positive equilibrium points, but they do not affect the stability of the positive equilibria. We also observe some special dynamics that induce bifurcations in the system by varying the Allee or fear parameter. Interestingly we find that the Allee effect working in conjunction with the fear effect, can bring about several qualitative changes to the dynamical behavior of the system with only the fear effect in place, in regimes of small fear. That is, for small amounts of the fear parameter, it can change a competitive exclusion type situation to a strong competition type situation. It can also change a weak competition type situation to a bi-stability type situation. However for large fear regimes the Allee effect reinforces the dynamics driven by the fear effect.
			The analysis of the corresponding spatially explicit model is also presented. To this end the comparison principle for parabolic PDE is used. The conclusions of this paper have strong implications for conservation biology, biological control as well as the preservation of biodiversity.
		\end{abstract}
		
		\begin{keyword}
			Competition Model \sep Allee Effect \sep Fear Effect \sep Stability \sep Bifurcation \sep Reaction-Diffusion System \\
		\end{keyword}
		
	\end{frontmatter}

	\section{Introduction}
	
	The mechanisms of competition, non-consumptive effects or trait mediated indirect interactions and density dependent effects such as Allee effects are central tenets of population ecology theory. They shape the dynamics of many eco-systems \cite{Chesson08, peckarsky2008revisiting, stephens1999consequences, Chess00, courchamp2008allee}. 
	It is typically assumed that species interactions are governed by their respective densities. However, a species could react to the presence of a second species by altering it's phenotype or behavior, consequently effecting population density or fitness of the other species. Thus such trait-mediated or behavioral indirect interactions can both reinforce or oppose density-dependent effects \cite{werner2003review, schmitz2004trophic}, which has strong consequences for community ecology, and food chain dynamics.
	We present and analyze in this work a first model that \emph{connects} trait-mediated indirect interactions focusing on fear effects with density dependent effects herein Allee effects, in a two species competition setting. Our analysis reveals that (1) the Allee effect opposes the fear effect, in the regimes of small fear, (2) the Allee effect reinforces the fear effect, in the regimes of large fear. The details of how these two mechanisms work in conjunction to create such novel dynamics is investigated and presented in the current manuscript.
	
	We begin by fixing ideas about the Allee effect, which is crucial to the ensuing analysis. The ecologist Allee \cite{1} discovered in 1931 that a population's growth rate was correlated with its density - that is being rare may introduce a fitness cost, leading to population decline. He also illustrates that ``clustering" benefits population growth and survival, whereas extreme sparseness and overcrowding can prevent growth and negatively affect reproduction, thus each species has its optimum density. In 1953, Odum \cite{2}, another ecologist, introduced the term ``Allee effect" for this phenomenon. The Allee effect can be divided into three types: species size, structure, and behavioral effects. The species size says that when a population size falls below a certain threshold, the reproduction and maintenance of the population will be affected, leading to a significant increase in the likelihood of population extinction. The single species model with an Allee effect can be modeled by the following ordinary differential equation, 
	
	\begin{equation}
		\displaystyle\frac{\mathrm{d} x}{\mathrm{d} t} =rx\left ( 1-\frac{x}{K} \right ) \left ( x-m \right ) ,
		\label{1.1}
	\end{equation}
	is called the single species model with multiplicative Allee effect. $r$ is the endogenous growth rate of the species, $K$ is the environmental accommodation, and $(x - m)$ is the Allee effect. If $0 < m < K$, we claim that the species is subject to a strong Allee effect. When the species size is below the threshold $m$, the endogenous growth rate of the population is negative, and there is a risk of extinction. If $-K < m \le 0$, the species suffers from a weak Allee effect. At this time, the species' growth slows down, but there is no risk of extinction.
	
	Research on the Allee effect has important implications regarding the conservation of species diversity. According to \eqref{1.1}, Zhu et al. \cite{3} proposed a single-species logistic model with strong Allee effects and feedback control:
	\begin{equation}
		\left\{\begin{array}{l}
			\displaystyle\frac{d x}{d t}=r x\left(1-\displaystyle\frac{x}{k}\right)(x-m)-a x u \vspace{2ex}\\
			\displaystyle\frac{d u}{d t}=-b u+c x
		\end{array}\right.
		\label{1.2}
	\end{equation}
	The authors' results suggest species will become extinct if the feedback control variables and the Allee effect are large enough. In addition, the authors also studied the saddle-node bifurcation, supercritical, and subcritical Hopf bifurcation that occurs with parameter variation. By calculating the universal unfolding near a cusp, it is concluded that system \eqref{1.2} has a Bogdanov-Takens bifurcation of codimension 2. For more related studies on the multiplicative Allee effect, see \cite{4,5}. The Allee effect could also be \emph{weak}, where the growth rate is always positive, but less pronounced at lower densities \cite{Cet99,SS99}. It has been observed that an Allee effect is significant at very low population size and with bias in sex ratio \cite{PK17,W12}. Researchers \cite{Net18,PK17} observed that extinction of Atlantic cod ({\it Gadus morhua}) in the southern Gulf of St. Lawrence and the depletion of Atlantic herring ({\it Clupea harengus}) population in the North Sea are due to predation-driven Allee effect, therein prompting the risk of population extinction.
	
	The study of the predator-prey model is also a central topic in ecology and evolutionary biology \cite{Ab00}. It was once commonly believed that predators could only affect prey populations by direct consumption of prey. However, even the presence of a predator may alter the behavior and physiology of prey. Prey perceive the risk of predation and respond with a range of anti-predatory responses, such as changes in habitat selection and foraging behavior \cite{peckarsky2008revisiting, Pol89}. These changes, in various forms, may ultimately affect the overall reproductive rate of the prey population. We refer to this particular biological phenomenon as the ``fear" effect. The first experiment on the species fears effect was done by Zanette et al. \cite{6}. They isolated the effects of perceived predation risk in a free-living population of song sparrows by actively eliminating direct predation and used playbacks of predator calls and sounds to manipulate perceived risk. The research showed that under the influence of fear effect only, the number of offspring produced by the species was reduced by $40\%$ annually.
	
	In 2016, Wang et al. \cite{7} considered the fear effect for the first time based on the classical two-species Lotka-Volterra predator-prey model:
	\begin{equation}
		\left\{\begin{array}{l}
			\displaystyle\frac{d x}{d t}=r x f(k, y)-d x-a x^{2}-g(x) y, \vspace{2ex}\\
			\displaystyle\frac{d y}{d t}=-m y+c g(x) y,
		\end{array}\right.
		\label{1.3}
	\end{equation}
	where $a$ represents the mortality rate due to intraspecific competition of the prey, $g(x)$ is the functional predation rate of the predator, and $f(k, y)=\displaystyle\frac{1}{1+ky}$ represents the anti-predation response of the prey due to the fear of the predator, i.e., the fear effect function. The researchers found that under conditions of Hopf bifurcation, an increase in fear level may shift the direction of Hopf bifurcation from supercritical to subcritical when the birth rate of prey increases accordingly. Numerical simulations also suggest that the anti-predator defenses of animals increase as the rate of predator attack increases.
	
	Based on \eqref{1.3}, Sasmal et al. \cite{8} proposed for the first time a predator-prey system with multiplicative Allee effect and fear effect for prey species. The model is specified as follows:
	\begin{equation}
		\left\{\begin{array}{l}
			\displaystyle\frac{d x}{d t}=r x\left(1-\displaystyle\frac{x}{k}\right)(x-\theta) \displaystyle\frac{1}{1+f y}-a x y, \vspace{2ex}\\
			\displaystyle\frac{d y}{d t}=a \alpha x y-m y.
		\end{array}\right.
		\label{1.4}
	\end{equation}
	The authors' study showed that the fear effect did not change the stability of the equilibrium point. However, with the more substantial fear effect, the final population density of the predator will decrease. The Multiplicative Allee effect will cause a subcritical Hopf bifurcation in system \eqref{1.4}, thus producing a stable limit cycle. More related studies can be found in \cite{9,10,11,12,13,14,15}. We note that the Allee effect in competitive systems have been considered as well. This starts with the work of Wang \cite{wang1999competitive}. Herein a weak Allee effect is modeled as affecting both competitors. Jang modeled the strong Allee effect in both competitors \cite{jang2013lotka}. Also, Desilva and Jang model a two species competitive system, with a strong Allee effect in one competitor and stocking effect. These works primarily focus on equilibrium analysis and not bifurcation analysis.
	
	The effect of fear on predator-prey systems has been extensively studied, but in competitive systems fear has rarely been considered. However, there is strong evidence that fear exists in purely competitive systems \emph{without} predation effects, or where predation effects are negligible \cite{Pringle19, 18, Chesson08}.
	The Barred Owl (\emph{Strix varia}) is a species of Owl native to eastern North America. During the last century, they have expanded their range westward and have been recognized as an invasion of the western North American ecosystem. Currently, their range overlaps with the Spotted Owl (\emph{Strix occidentalis}). The Spotted Owl is native to northwestern and western North America, which has led to intense competition between two species \cite{16}. The Barred Owl has a strong negative impact on the  Spotted Owl, and field observations have reported that barred owls frequently attack spotted owls \cite{17}. There is also evidence that barred owls actively and unilaterally drive spotted owls out of shared habitat \cite{18}. 
	There is also other very recent empirical evidence to support such investigations. In \cite{Pringle19} a series of 6 year long experiments are conducted in various Caribean islands that aim to refute the theory of adaptive predation - which suggests that predators reduce dominant competitors, thus preventing competitive exclusion and enhancing coexistence in food webs. However, non-consumptive effects such as fear of depredation can have strongly influencing effects \cite{Pringle19, schmitz2019fearful, peckarsky2008revisiting, schmitz2017predator}. \cite{Pringle19} considers a series of experiments with two competing species of lizards, brown anolis (\emph{Anolis sagrei}) that dwells on tree trunks, and green anolis (\emph{Anolis smaragdinus}) that dwells on tree canopies. The experiments show that typically these species co-exist - due to a clear niche separation. However, the introduction of an intraguild predator, the curly tailed lizard (\emph{Leiocephalus carinatus}) that dwells on the ground, causes (non-consumptive) fear driven effects. The brown anolis being fearful of possible depredation (as the lower half of tree trunks are within striking distance of the curly tailed lizard) moves upwards into the canopy, which is occupied by the green anolis. Herein interspecific competition intensifies leading to a loss of co-existence. However, what is most crucial in this study, is that the fecal analysis of the curly tail lizard shows that its diet included the brown anolis, in only 2 out of 51 samples examined. Thus the new dispersal pattern of the brown anolis and the ``refuge competition" is driven strongly by a non-consumptive fear effect and not a consumptive one. Thus the brown anolis and the curly tail lizards are really competitors, as they have a strong overlap in dietary niche for several insects. However, the brown anolis is clearly fearful of the curly tailed lizard \cite{Pringle19}. There is further evidence of non consumptive effects such as fear among competing aphid species, as well as competitors that feed on aphids \cite{michaud2016extending, bayoumy2018beyond}. Such interplay between competition and predation has been investigated \cite{Chesson08}, where it is proposed that in many ecological processes, competition and predation are interlinked, and depending on niche overlap, one of them will dominate to drive the underlying dynamics.
	
	Such evidence motivates us to consider the fear effect in a purely competitive two-species model, in which one competitor causes fear to the other. Thus, Srivastava et al. \cite{19} considered the classical two-group Lotka-Volterra competition model with only one competitor causing fear to the other competitor:
	\begin{equation}
		\left\{\begin{array}{l}
			\displaystyle\frac{d u}{d t}=a_{1} u-b_{1} u^{2}-c_{1} u v, \vspace{2ex}\\
			\displaystyle\frac{d v}{d t}=\displaystyle\frac{a_{2} v}{1+ku} - b_{2} v^{2}-c_{2} u v.
		\end{array}\right.
		\label{1.5}
	\end{equation}
	
	They find that the presence of fear can have several interesting dynamical effects on the classical competitive scenarios. That is \eqref{1.5} can produce dynamic phenomena such as saddle-node bifurcation and transcritical bifurcation, which are drivers to change the dynamics that we see in classical competitive systems.
	Notably, for fear levels in certain regimes, novel bi-stability dynamics is established. Such dynamics have also been recently observed in \cite{parshad2021some}.
	Furthermore, in the spatially explicit setting, the effects of several spatially heterogeneous fear functions are investigated. Particularly under certain integral restrictions on the fear function, a weak competition type situation can change to competitive exclusion. 
	
	Inspired by the above the ideas in the above works, we propose a novel Lotka-Volterra competition model, where the first species is affected by the multiplicative strong Allee effect, while the second species produces a fear effect on the first species, 
	
	\begin{equation}
		\left\{\begin{array}{l}
			\displaystyle\frac{\mathrm{d} x_1}{\mathrm{d} \tau} =x_1(r_1-\alpha_1 x_1)(x_1-m)\displaystyle\frac{1}{1+kx_2}-\beta_1 x_1 x_2, \vspace{2ex}\\
			\displaystyle\frac{\mathrm{d} x_2}{\mathrm{d} \tau} =x_2(r_2-\alpha_2 x_2-\beta_2 x_1).
		\end{array}\right.
		\label{1}
	\end{equation}
	where $0<m<\displaystyle\frac{r_1}{\alpha_1}$. This paper has the following innovations:
	\begin{enumerate}
		\item This is the first model to consider \emph{both} the strong Allee effect and fear effect on species density in the two-species Lotka-Volterra competition model. Also note by setting $m=0$, we are in a weak Allee type setting.
		\item The Allee effect parameter and the fear effect parameter affect the existence and number of positive equilibrium points in the ODE model.
		\item The Allee effect parameter and the fear effect parameter affect the extinction state of system \eqref{1}, but do not change the stability of the positive equilibria.
		\item Changing the fear parameter leads to several bifurcations in system \eqref{1}.
		\item In the PDE case restrictions are derived between the Allee threshold and the fear parameter, that would yield competitive exclusion of $u$, the competitor subject to the Allee and fear effects.
		\item In the PDE case restrictions are also derived to provide conditions under which one has initial condition dependent attraction to $(u^{*},0)$ or $(0,v^{*})$.
	\end{enumerate}
	
	The rest of this paper is organized as follows: The positivity and boundedness of the solution of system \eqref{1} are proved in Section 2. We examine the existence and stability of all equilibria in Section 3 and Section 4. In Section 5, we analyze the bifurcation of the system around the positive equilibria. In Section 6 we present the analysis of the spatially explicit model or the PDE case. We end this paper with a discussion and conclusion.
	
	\section{Preliminaries}
	
	The intrinsic growth rates of either species are $r_1$, $r_2$, and the environmental carrying capacities are $r_1/\alpha_1$ and $r_2/\alpha_2$, respectively, according to the logistic rule of growth. We changed system \eqref{1} to the following form:
	\begin{equation}
		\left\{\begin{array}{l}
			\displaystyle\frac{\mathrm{d} x_1}{\mathrm{d} \tau} =r_1 x_1(1-\displaystyle\frac{x_1}{k_1})(x_1-m)\displaystyle\frac{1}{1+kx_2}-\beta_1 x_1 x_2, \vspace{2ex}\\
			\displaystyle\frac{\mathrm{d} x_2}{\mathrm{d} \tau} =r_2 x_2(1-\displaystyle\frac{x_2}{k_2})-\beta_2 x_1x_2.
		\end{array}\right.
		\label{2}
	\end{equation}
	
	In order to reduce the parameters of system \eqref{2}, the following dimensionless quantities are applied to the non-dimensionalize model system \eqref{2}
	\begin{equation}
		t=r_1 k_1\tau,\quad\frac{x_1}{k_1}=x,\quad\frac{x_2}{k_2}=y,\quad \frac{m}{k_1}=p, \quad kk_2=q, \quad\frac{\beta_1 k_2}{r_1 k_1}=a, \quad\frac{r_2}{r_1 k_1}=b, \quad\frac{\beta_2 k_1}{r_2}=c,\nonumber
	\end{equation}
	then system \eqref{2} becomes the following system:
	\begin{equation}
		\left\{\begin{array}{l}
			\displaystyle\frac{\mathrm{d} x}{\mathrm{d} t} =x \left [(1-x)(x-p)\displaystyle\frac{1}{1+qy}-ay \right ]=xf(x,y)\equiv F(x,y), \vspace{2ex}\\
			\displaystyle\frac{\mathrm{d} y}{\mathrm{d} t} =by\left (1-y-cx \right )=yg(x,y)\equiv G(x,y).
		\end{array}\right.
		\label{3}
	\end{equation}
	All parameters in system \eqref{3} are positive and $0<p<1$. Based on biological considerations, the initial condition of system \eqref{3} satisfies
	\begin{equation}
		x(0)>0, \quad y(0)>0.
		\label{4}
	\end{equation}
	
	\begin{proposition}
		All solutions of system \eqref{3} are positive.
	\end{proposition}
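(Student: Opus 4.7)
The plan is to exploit the fact that both equations of system \eqref{3} have the factored form $\dot{z} = z \cdot h(x,y)$, which makes the coordinate axes invariant; positivity of the initial data then propagates along the flow by an integrating-factor argument.

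First, I would invoke local existence and uniqueness of solutions to \eqref{3} from the initial data \eqref{4}. The right-hand sides $F(x,y)$ and $G(x,y)$ are smooth (hence locally Lipschitz) on any region where $1+qy>0$, so there exists a maximal interval $[0,T_{\max})$ on which a unique solution $(x(t),y(t))$ is defined. I will work on this interval; global existence in time is a boundedness issue that belongs to a separate statement.

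Next, I would rewrite each equation in separated form. From the first equation
\begin{equation*}
\frac{1}{x}\frac{dx}{dt} = f(x(t),y(t)) = (1-x(t))(x(t)-p)\frac{1}{1+qy(t)} - a\,y(t),
\end{equation*}
and from the second
\begin{equation*}
\frac{1}{y}\frac{dy}{dt} = g(x(t),y(t)) = b\bigl(1-y(t)-c\,x(t)\bigr).
\end{equation*}
Integrating from $0$ to any $t\in[0,T_{\max})$ yields the explicit representations
\begin{equation*}
x(t) = x(0)\exp\!\left(\int_0^t f(x(s),y(s))\,ds\right), \qquad y(t) = y(0)\exp\!\left(\int_0^t g(x(s),y(s))\,ds\right).
\end{equation*}
Because $x(t)$ and $y(t)$ are continuous on $[0,T_{\max})$, the integrands are continuous and the integrals are finite for every $t<T_{\max}$, so both exponentials are strictly positive. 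Combined with $x(0)>0$ and $y(0)>0$, this gives $x(t)>0$ and $y(t)>0$ for all $t\in[0,T_{\max})$.

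The only subtlety, and what I would flag as the main (mild) obstacle, is justifying that the separation-of-variables manipulation is legitimate. Strictly speaking, one should first argue that $x(t)\ne 0$ and $y(t)\ne 0$ on $[0,T_{\max})$ before dividing. The cleanest way is the invariance argument: the axes $\{x=0\}$ and $\{y=0\}$ are invariant under the flow because $F(0,y)=0$ and $G(x,0)=0$, so by uniqueness a trajectory starting strictly in the open first quadrant cannot reach either axis in finite time. With $x(t),y(t)\ne 0$ guaranteed, the division above is valid and the exponential formulas conclude the proof. I would present the invariance observation as a one-line remark and then display the integrated formulas as the substantive content.
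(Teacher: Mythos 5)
Your proposal is correct and follows essentially the same route as the paper, which likewise writes $x(t)=x(0)\exp\bigl(\int_0^t f(x(s),y(s))\,ds\bigr)$ and $y(t)=y(0)\exp\bigl(\int_0^t g(x(s),y(s))\,ds\bigr)$ and concludes positivity from the positive initial data. Your added remarks on local existence and the invariance of the axes only make rigorous the division step that the paper takes for granted.
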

	\begin{proof}
		Since
		$$
		x(t)=x(0) \mathrm{exp} \left [ \int_{0}^{t} f(x(s),y(s))\mathrm{d}s\ \right ] >0,
		$$
		and
		$$
		y(t)=y(0) \mathrm{exp} \left [ \int_{0}^{t} g(x(s),y(s))\mathrm{d}s\ \right ] >0.
		$$
		So all solutions of system \eqref{3} with initial condition \eqref{4} are positive.
	\end{proof}
	
	\begin{lemma} \cite{20}
		If $a,b>0$ and $x^{'}(t)\le (\ge)x(t)(a-bx(t))$ with $x(0)>0$, then
		$$
		\limsup\limits_{t\rightarrow +\infty} x(t) \le \frac{a}{b}( \liminf\limits_{t\rightarrow +\infty} x(t) \ge \frac{a}{b} ).
		$$
	\end{lemma}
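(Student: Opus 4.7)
The plan is to bound $x(t)$ by an explicit solution of the associated logistic ODE, via a scalar comparison principle. For the first (limsup) case I would introduce the auxiliary function $y(t)$ solving the equality $y'(t) = y(t)(a - b y(t))$ with $y(0) = x(0) > 0$. This Bernoulli-type equation integrates in closed form to
$$
y(t) = \frac{a\, x(0)}{b\, x(0) + \bigl(a - b\, x(0)\bigr) e^{-a t}},
$$
from which one reads off $\lim_{t \to \infty} y(t) = a/b$ directly, regardless of whether $x(0)$ lies above or below the carrying capacity $a/b$.

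Next I would invoke a standard differential-inequality argument. Setting $F(z) = z(a - b z)$, the right-hand side is locally Lipschitz in $z$, and the two initial data match: $x(0) = y(0)$. Since $x'(t) \le F(x(t))$ while $y'(t) = F(y(t))$, the scalar ODE comparison theorem yields $x(t) \le y(t)$ for every $t \ge 0$ on the common interval of existence; in particular $x$ cannot blow up, since it is dominated by the bounded function $y$. Taking the limit superior gives
$$
\limsup_{t \to \infty} x(t) \;\le\; \lim_{t \to \infty} y(t) \;=\; \frac{a}{b}.
$$
For the reversed (liminf) case the same auxiliary equation is used, but now the hypothesis $x'(t) \ge F(x(t))$ together with $x(0) = y(0)$ forces $x(t) \ge y(t)$, and passing to $\liminf$ produces $\liminf_{t \to \infty} x(t) \ge a/b$.

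The only point that needs mild care is preserving positivity so that the comparison stays inside the region where $F$ is Lipschitz and $y$ remains well defined on $[0,\infty)$. Because $x(0) > 0$ and $F$ vanishes at $z=0$, an argument entirely analogous to Proposition~2.1 shows $x(t) > 0$ on the domain of existence, and the explicit formula for $y$ makes $y(t)>0$ automatic. I do not expect a genuine obstacle: the argument is a textbook scalar comparison, and this is exactly what the cited source~\cite{20} packages into a single reusable lemma.
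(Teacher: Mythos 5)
Your proof is correct and is essentially the standard argument behind this lemma: the paper itself gives no proof, simply citing \cite{20}, and the cited result is established exactly by comparing $x$ with the explicit logistic solution $y(t)=\dfrac{a\,x(0)}{b\,x(0)+\bigl(a-b\,x(0)\bigr)e^{-at}}$ via the scalar differential-inequality (comparison) theorem, just as you do for both the $\limsup$ and $\liminf$ directions. The only tiny overstatement is the remark that domination by $y$ alone rules out blow-up in the $\le$ case (it only rules out upward blow-up), but this is immaterial since the lemma is applied to solutions already known to exist and stay positive for all $t\ge 0$.
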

	\begin{proposition}
		The solutions of system \eqref{3} are bounded.
	\end{proposition}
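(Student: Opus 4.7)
The plan is to bound the two components separately and then combine. For the second component, the standard logistic domination argument applies cleanly: since $cx\ge 0$ for positive solutions (by the preceding proposition), we have
\begin{equation*}
\frac{dy}{dt}=by(1-y-cx)\le by(1-y),
\end{equation*}
and Lemma~2.2 (with $a=b$ there equal to $b$ here) yields $\limsup_{t\to+\infty}y(t)\le 1$. In particular $y$ is bounded on $[0,\infty)$.

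For the first component, the Allee-type nonlinearity $(1-x)(x-p)$ is \emph{not} of the form required by Lemma~2.2, so I would argue directly via an invariant region. The key observation is that whenever $x>1$, the hypothesis $0<p<1$ forces $(1-x)<0$ while $(x-p)>0$, so $(1-x)(x-p)<0$; together with $-ay\le 0$ and the positivity $\frac{1}{1+qy}>0$, this gives
\begin{equation*}
\frac{dx}{dt}=x\!\left[\frac{(1-x)(x-p)}{1+qy}-ay\right]<0\qquad\text{whenever }x>1.
\end{equation*}
Hence the set $\{x\le 1\}$ is forward invariant, and for any initial datum one obtains $x(t)\le\max\{x(0),1\}$ for all $t\ge 0$. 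In fact this also gives $\limsup_{t\to+\infty}x(t)\le 1$.

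Combining the two bounds shows that every solution with positive initial data stays in the compact region $\{0<x\le\max\{x(0),1\},\ 0<y\le\max\{y(0),1\}\}$, which proves boundedness. I do not anticipate a genuine obstacle here; the only thing to be careful about is that the $x$-equation cannot be forced into the logistic comparison form of Lemma~2.2 because of the cubic Allee factor, so one must invoke the sign argument on $\{x>1\}$ rather than the lemma itself. The hypothesis $0<p<1$ (stated right after system~\eqref{3}) is essential at exactly this step, since it guarantees the two factors $(1-x)$ and $(x-p)$ have opposite signs on $\{x>1\}$.
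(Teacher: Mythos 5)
Your proof is correct and follows essentially the same route as the paper: the logistic comparison via Lemma~2.2 for $y$, and the sign of $\frac{dx}{dt}$ on the region $\{x>1\}$ to conclude $x(t)\le\max\{x(0),1\}$. You merely make explicit the sign analysis of the factor $(1-x)(x-p)$ that the paper leaves implicit, so no substantive difference.
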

	
	\begin{proof}
		For the boundedness of $y(t)$, according to the second equation of system \eqref{3},
		$$
		\displaystyle\frac{\mathrm{d} y}{\mathrm{d} t} =by\left (1-y-cx \right ) \le y(b-by),
		$$
		by applying Lemma 2.2 to the above inequality, we have
		$$
		\limsup\limits_{t\rightarrow +\infty} y(t) \le \frac{b}{b} =1.
		$$
		
		Next, we discuss the boundedness of $x(t)$. For any $x(0)>1$,
		$$
		\displaystyle\frac{\mathrm{d} x}{\mathrm{d} t} =x \left [(1-x)(x-p)\displaystyle\frac{1}{1+qy}-ay \right ]<0
		$$
		as long as $x>1$. Moreover, along $x=1$, we have
		$$
		\displaystyle\frac{\mathrm{d} x}{\mathrm{d} t} =-ay <0.
		$$
		Obviously, there is no equilibrium point in the region $\left \{(x,y)\mid x>1,y\ge 0 \right \} $. Thus, any positive solution satisfies $x(t)\le\mathrm{max} \left \{ x(0),1 \right \}=\overline{x}$ (say) for all $t\ge0$. Therefore, combining the above analysis, the solutions of system \eqref{3} are bounded.
	\end{proof}
	
	\begin{proposition}
		If $0 < x(0) \le p$, $y(0) \ge 0$ and $(x(0), y(0))\ne(p, 0)$, then $\lim\limits_{t \rightarrow \infty } (x(t),y(t))=(0,0)$.
	\end{proposition}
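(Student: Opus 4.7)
The plan is to confine the trajectory to the strip $\{0<x\le p,\,y\ge 0\}$, drive $x$ to zero via monotonicity together with an $\omega$-limit/invariance argument, and then handle the $y$-component by a case split on $y(0)$. The $y\to 0$ conclusion will be the main obstacle.

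\textbf{Step 1 (invariance and monotonicity of $x$).} On the line $x=p$ the $x$-equation collapses to $\dot x=-apy\le 0$, so the strip $\{0<x\le p,\,y\ge 0\}$ is forward invariant. Within it, $x\le p<1$ forces $(1-x)(x-p)\le 0$, and since $-ay\le 0$ we obtain
\[
\dot x = x\Bigl[\frac{(1-x)(x-p)}{1+qy}-ay\Bigr]\le 0,
\]
with strict inequality whenever $y>0$ or $0<x<p$. Positivity (Proposition 2.1) and boundedness (Proposition 2.3) then yield a limit $x(t)\to x^{\star}\ge 0$.

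\textbf{Step 2 (identifying $x^{\star}$).} The $\omega$-limit set of $(x(0),y(0))$ is nonempty, compact, and invariant, and since $x(t)\to x^{\star}$ it lies in the slice $\{x=x^{\star}\}$. Invariance forces $\dot x\equiv 0$ on this slice, so
\[
\frac{(1-x^{\star})(x^{\star}-p)}{1+qy}=ay
\]
for every $y$ in the projection of the $\omega$-limit. If $x^{\star}>0$, then $0<x^{\star}\le p$ makes the left side nonpositive and the right side nonnegative, so both must vanish; this forces $y=0$ together with $(1-x^{\star})(x^{\star}-p)=0$, giving $x^{\star}=p$. But the strict inequality from Step 1 (applied at $t=0$ when $y(0)>0$, or throughout when $y\equiv 0$ and $0<x(0)<p$) ensures $x(t)<x(0)\le p$ for all $t>0$, so $x^{\star}<p$, a contradiction. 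Hence $x^{\star}=0$.

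\textbf{Step 3 (the $y$-component and the main obstacle).} If $y(0)=0$ then $y\equiv 0$ by uniqueness, the exclusion $(x(0),y(0))\ne(p,0)$ forces $0<x(0)<p$, and the reduced equation $\dot x=x(1-x)(x-p)<0$ drives $x\to 0$, so $(x,y)\to(0,0)$ as claimed. The subcase $y(0)>0$ is the obstacle: once Step 2 delivers $x(t)\to 0$, the asymptotic equation $\dot y\approx by(1-y)$ has $y=1$, not $y=0$, as its attractor. Absent an additional mechanism forcing $y$ to extinction before $x$ collapses, the conclusion $(x,y)\to(0,0)$ appears to hold only under the stronger hypothesis $y(0)=0$; in general Steps 1--2 still give $x(t)\to 0$ unconditionally, while $y(t)$ converges to the positive logistic value. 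I would therefore either tighten the hypothesis to $y(0)=0$ or weaken the conclusion to $x(t)\to 0$.
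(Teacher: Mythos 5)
The paper offers no actual argument for this proposition: it only remarks that the proof is ``similar to Theorem 2.1 in \cite{21}'' and omits it, so there is nothing to compare your steps against line by line. Taken on their own, your Steps 1--2 are sound and establish the genuinely provable part of the claim: in the forward-invariant strip $\{0<x\le p,\ y\ge 0\}$ one has $\dot x\le 0$, strictly unless $x=p$ and $y=0$, and your $\omega$-limit/invariance argument (or, alternatively, a direct comparison of $x$ with the scalar Allee equation once $x$ drops strictly below $p$) correctly forces $x(t)\to 0$.

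Your objection in Step 3 is also well founded: as printed, the conclusion cannot hold when $y(0)>0$. Since $\dot y=by(1-y-cx)$ and $x(t)\to 0$, the comparison lemma quoted in Section 2 gives $1\le\liminf_{t\to\infty}y(t)\le\limsup_{t\to\infty}y(t)\le 1$, so the solution converges to the boundary equilibrium $E_3=(0,1)$, not to the origin. This is consistent with the paper's own linearization: $E_0=(0,0)$ is a saddle whose stable manifold is the $x$-axis, so no trajectory with $y(0)>0$ can approach it, whereas $E_3$ is a stable node. The limit $(0,0)$ is correct only in the sub-case $y(0)=0$, $0<x(0)<p$; under the stated hypotheses the correct conclusion is $x(t)\to 0$ together with $(x(t),y(t))\to(0,1)$ when $y(0)>0$ and $(x(t),y(t))\to(0,0)$ when $y(0)=0$. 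The printed form looks like a carry-over from the predator--prey setting of the cited result, where extinction of the prey drags the predator down as well; in the present competition model the second species instead relaxes to its carrying capacity. So you have not missed an idea --- you have located an error in the statement, and your proposed repair (weakening the conclusion to $x\to 0$, or splitting according to $y(0)$) is the appropriate fix.
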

	\begin{proof}
		The proof is similar to Theorem 2.1 in \cite{21}, and we omit the detailed procedure here.
	\end{proof}
	
	\section{Existence of Equilibria}
	Obviously, system \eqref{3} has a constant equilibrium point $E_0(0,0)$ and three boundary equilibria $E_1(1,0)$, $E_2(p,0)$, $E_3(0,1)$. In the following, we discuss the existence of positive equilibria.
	
	The intersections of two isoclines $f(x,y)=0$, $g(x,y)=0$ in the first quadrant is the point of positive equilibria. Denote the positive equilibria of system \eqref{3} as $E_{i*}(x_i,y_i)$ (i=1, 2), from $f(x,y)=g(x,y)$, we obtain
	\begin{equation}
		A_1x^2+A_2x+A_3=0,
		\label{5}
	\end{equation}
	where
	$$
	A_1=ac^2q+1>0,
	$$
	$$
	A_2=-(2acq+ac+p+1)<0,
	$$
	$$
	A_3=a+aq+p>0.
	$$
	Denote the discriminant of \eqref{5} as $\Delta(q)=A_2^2-4A_1A_3$.
	When $\Delta>0$, \eqref{5} has two real roots, which can be expressed as follows:
	$$
	x_1=\displaystyle\frac{-A_2-\sqrt{\Delta} }{2A_1}, \quad x_2=\displaystyle\frac{-A_2+\sqrt{\Delta} }{2A_1}.
	$$
	The real root of \eqref{5} is $x_i$ and $y_i=1-cx_i$. From Proposition 2.1 and 2.3, we know that $0< x(t) < 1$ and $0< y(t) < 1$, from which we give the following theorem.
	
	\begin{theorem}\label{thm:exist}
		The positive equilibrium point of the system (2.2) is shown below:
		
		CASE \uppercase\expandafter{\romannumeral1}: $\Delta>0$
		\begin{enumerate}
			\item $2A_1+A_2>0$
			\begin{enumerate}
				\item For $0<c<1$
				\begin{enumerate}
					\item System \eqref{3} exists two positive equilibria $E_{1*}$ and $E_{2*}$(Fig. 3(a)).
				\end{enumerate}
				\item For $c=1$
				\begin{enumerate}
					\item System \eqref{3} exists only one positive equilibrium point $E_{1*}$(Fig. 3(b)).
				\end{enumerate}
				\item For $1<c\le \displaystyle\frac{1}{q}+1$
				\begin{enumerate}
					\item System \eqref{3} exists only one positive equilibrium point $E_{1*}$ if $0<p<\displaystyle\frac{1}{c}$(Fig. 3(c)).
				\end{enumerate}
			\end{enumerate}
			\item $2A_1+A_2<0$
			\begin{enumerate}
				\item For $1<c<\displaystyle\frac{1}{q}+1$
				\begin{enumerate}
					\item System \eqref{3} exists only one positive equilibrium point $E_{1*}$ if $0<p<\displaystyle\frac{1}{c}$(Fig. 3(d)).
				\end{enumerate}
			\end{enumerate}
			\item $2A_1+A_2=0$
			\begin{enumerate}
				\item For $c>1$
				\begin{enumerate}
					\item System \eqref{3} exists only one positive equilibrium point $E_{1*}$ if $0<p<\displaystyle\frac{1}{c}$(Fig. 3(e)).
				\end{enumerate}
			\end{enumerate}
		\end{enumerate}
		CASE \uppercase\expandafter{\romannumeral2}: $\Delta=0$
		\begin{enumerate}
			\item $2A_1+A_2>0$
			\begin{enumerate}
				\item For $0<c<1$
				\begin{enumerate}
					\item System \eqref{3} exists only one positive equilibrium point $E_{3*}$(Fig. 3(f)).
				\end{enumerate}
			\end{enumerate}
		\end{enumerate}
	\end{theorem}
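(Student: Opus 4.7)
\medskip

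My plan is to reduce the existence question to a careful sign analysis of the quadratic $h(x) := A_1 x^2 + A_2 x + A_3$ on the admissible window for $x$. A positive equilibrium $(x,y)$ must satisfy $h(x)=0$, $y = 1 - cx$, and (from Propositions 2.1 and 2.3) both $x\in(0,1)$ and $y\in(0,1)$, so in fact $x\in(0,\min\{1,1/c\})$. Since $A_1>0$ and $A_3>0$ while $A_2<0$, Vieta's formulas already force any real root to be positive, so the real task is to decide how many roots lie below $\min\{1,1/c\}$.

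I would first record two boundary identities that drive every subcase. A direct computation gives
\begin{equation*}
h(1) = A_1 + A_2 + A_3 = a(c-1)\bigl(q(c-1)-1\bigr),
\qquad
h(1/c) = \frac{(c-1)(pc-1)}{c^2}.
\end{equation*}
Together with $h(0)=A_3>0$ and the observation that $h'(1)=2A_1+A_2$ (so the vertex $-A_2/(2A_1)$ sits left, right, or exactly at $1$ according to the sign of $2A_1+A_2$), these three data points determine the sign chart of $h$ on $[0,1]$ in each subcase. For example, when $0<c<1$ one has $h(1)>0$, the vertex lies in $(0,1)$, and $1/c>1$, so in Case I.1 with $\Delta>0$ the two roots of $h$ are both trapped in $(0,1)\subset(0,1/c)$, yielding $E_{1*}$ and $E_{2*}$. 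When $1<c\le 1/q+1$ and $0<p<1/c$, one checks $h(1)\le 0$ and $h(1/c)<0$, so $1/c$ lies strictly between the two roots and exactly one root $x_1\in(0,1/c)$ survives, giving the unique $E_{1*}$; the other root fails $y>0$. The borderline values $c=1$ and $c=1/q+1$ are handled by noting $h(1)=0$, which puts one root at $x=1$ (hence $y=0$, a boundary equilibrium, not a positive one), leaving at most one positive equilibrium.

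The case $2A_1+A_2<0$ is treated analogously, with the vertex now right of $1$, so that among the two roots only the smaller can fall below $1/c<1$, and one again uses $h(1/c)<0$ (which needs $pc<1$) to guarantee it does. The critical case $2A_1+A_2=0$ makes $x=1$ the vertex, in which case $h(1)<0$ under $\Delta>0$ and the argument is the same. For Case II ($\Delta=0$) the double root is $x_*=-A_2/(2A_1)$; the condition $2A_1+A_2>0$ places $x_*$ in $(0,1)$, and $0<c<1$ then automatically gives $x_*<1\le 1/c$, producing the single positive equilibrium $E_{3*}$.

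The only real obstacle is the bookkeeping: I must show in each listed subcase that the algebraic conditions on $(c,p,q,a)$ force exactly the claimed sign pattern of $\bigl(h(0),h(1/c),h(1)\bigr)$ together with the correct vertex location, and that no other combinations of the parameters are compatible with the hypotheses $\Delta\gtreqless 0$ and $2A_1+A_2\gtreqless 0$. Since each of $h(1)$ and $h(1/c)$ factors neatly through $(c-1)$, these sign verifications reduce to elementary inequalities in $c$, $p$, and $q$, and the rest of the theorem follows by assembling the subcases.
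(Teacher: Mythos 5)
Your proposal is correct, and it follows the same basic strategy as the paper -- locating the roots of the quadratic $A_1x^2+A_2x+A_3=0$ inside the admissible window $x\in\bigl(0,\min\{1,1/c\}\bigr)$, with the sign of $2A_1+A_2$ fixing the vertex relative to $1$ -- but your execution is noticeably more systematic. The paper argues subcase by subcase with ad hoc inequality manipulations and substitution tricks (e.g.\ solving $2A_1+A_2=0$ for $a$ and plugging into $\Delta$, or solving $\Delta=0$ for $q$ and plugging into $2A_1+A_2$) to decide where $x_1,x_2$ fall and to rule out incompatible parameter ranges, whereas you organize everything around the two factorized endpoint values $h(1)=a(c-1)\bigl(q(c-1)-1\bigr)$ and $h(1/c)=(c-1)(pc-1)/c^2$ (both of which I checked and are correct), together with $h(0)=A_3>0$, $h'(1)=2A_1+A_2$, and Vieta's positivity of the roots; this turns each listed subcase into a short intermediate-value/sign-chart argument and makes the role of the hypotheses $0<p<1/c$ and $c\lessgtr 1/q+1$ transparent (they are exactly the signs of $h(1/c)$ and $h(1)$). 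What your sketch buys is clarity and easy verifiability of the counts of equilibria in every listed regime, including the borderline values $c=1$ and $c=1/q+1$ where one root degenerates to the boundary equilibrium $x=1$. What it defers is the completeness bookkeeping that the paper carries out explicitly -- e.g.\ that $2A_1+A_2=0$ is incompatible with $\Delta>0$ when $0<c\le 1$, and that $\Delta=0$ forces $0<c<1$ -- which you acknowledge (``no other combinations of the parameters are compatible'') but do not prove; since the theorem as stated only asserts existence and counts in the listed regimes, this omission does not undermine your argument, though you would need to supply those short contradiction computations to reproduce the full content of the paper's proof.
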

	\begin{proof}
		\it {For} \it {CASE \uppercase\expandafter{\romannumeral1}}, we know that $0<x_1<x_2$. When the condition $2A_1+A_2>0$ is satisfied, $0<x_1<1$ is always true. Next, we discuss $x_2$. If $0<x_2<1$, a simple calculation shows that the parameter $c$ must satisfy $0<c<1$ or $c>\displaystyle\frac{1}{q}+1$. For $0<c<1$, $y_2=1-cx_2$ obviously satisfies $0 < y_2 < 1$. For $c>\displaystyle\frac{1}{q}+1$, if $0 < y_2 < 1$ is true then the parameters of system \eqref{3} must also satisfy $2+c(-ac-p-1)>0$ and $cp>1$. However, the above two conditions cannot hold simultaneously, so $c>\displaystyle\frac{1}{q}+1$ is also not valid either. On the contrary, if $x_2\ge1$, then we can get $1\le c\le \displaystyle\frac{1}{q}+1$. When $c=1$, we can guarantee that the equilibrium point $E_{1*}$ must be a positive equilibrium point. If $1<c\le \displaystyle\frac{1}{q}+1$, it can be seen by calculating $0<y_1<1$ that the equilibrium point $E_{1*}$ can only be in the first quadrant if $0 < p < \displaystyle\frac{1}{c}$.
		
		When the condition $2A_1+A_2<0$ is satisfied, $x_2>1$ always holds. Let us discuss the conditions under which $0<x_1 < 1$ can hold. The inequality shows that the parameter $c$ needs to satisfy $1<c<\displaystyle\frac{1}{q}+1$, but $E_{1*}$ may be located in the first or fourth quadrant. Thus we make $0 < p < \displaystyle\frac{1}{c}$ so that the equilibrium point $E_{1*}$ is a positive equilibrium point.
		
		When the condition $2A_1+A_2=0$ is satisfied, $0<x_1<1<x_2$ always holds. If $0 < c< 1$, based on $2A_1+A_2=0$ we can find $a = \displaystyle\frac{p -1}{c \left(2 c q -2 q -1\right)}\triangleq a_{*}$. Substitute $a=a_{*}$ into $\Delta$ and get
		$$
		\Delta=-\displaystyle\frac{4 \left(-1+\left(c -1\right) q \right) \left(-1+\left(-2+\left(p +1\right) c \right) q \right) \left(c -1\right) \left(p -1\right)}{\left(2 c q -2 q -1\right)^{2} c}<0.
		$$
		Therefore $2A_1+A_2=0$ and $\Delta>0$ cannot hold simultaneously, i.e., $c\notin (0,1)$. While if $c > 1$, it also needs to satisfy both $0 < p < \displaystyle\frac{1}{c}$. Let $c=1$, and then the calculation finds that $2A_1+A_2=0$ will also make $\Delta$ equal to $0$. The results contradict the previous assumptions, i.e., $c\ne 1$.
		
		For \it {CASE \uppercase\expandafter{\romannumeral2}}, in order for $0<x_3<1$ to hold, then $2A_1+A_2>0$ must be satisfied. $E_{1*}$ is a positive equilibrium point of system \eqref{3} when $0<c < 1$. Conversely when $c > 1$, in order to satisfy $y_3 > 0$, the parameters of system \eqref{3} also need to satisfy $a c^{2}+c p +c -2< 0$. From $\Delta=0$ we get
		$$
		q=\frac{a^{2} c^{2}+\left(-4+\left(2 p +2\right) c \right) a +\left(p -1\right)^{2}}{4 a \left(c -1\right) \left(c p -1\right)}\triangleq q_{*}.
		$$
		Substitute $q=q_{*}$ into $2A_1+A_2$ and get
		$$
		\displaystyle\frac{\left(a c -p +1\right) \left(a \,c^{2}+c p +c -2\right)}{2(cp-1)}.
		$$
		The condition $0<cp<1$ must be satisfied if the above equation is greater than 0. But under $0<cp<1$, we get $q=q_* < 0$. In summary, $c\notin (1,+\infty )$. Let $c=1$, and then the calculation finds that $\Delta=0$ will also make $2A_1+A_2$ equal to $0$. The results contradict the previous assumptions, i.e., $c\ne 1$.
	\end{proof}
	
	\section{Stability of Equilibria}
	\subsection{Stability of boundary equilibria}
	The Jacobian matrix of system \eqref{3} is
	\begin{equation}
		J(E)=\begin{bmatrix}\displaystyle\frac{\left(1-x \right) \left(x -p \right)}{q y +1}-a y +x \left(-\displaystyle\frac{x -p}{q y +1}+\displaystyle\frac{1-x}{q y +1}\right)
			& x \left(-\displaystyle\frac{\left(1-x \right) \left(x -p \right) q}{\left(q y +1\right)^{2}}-a \right)\\
			-b c y &b \left(-c x -y +1\right)-b y
		\end{bmatrix}.
		\label{6}
	\end{equation}
	
	The Jacobian matrix at $E_0(0,0)$ is given by
	$$
	J(E_0)=\begin{bmatrix}-p
		& 0\\
		0 &b
	\end{bmatrix}.
	$$
	The two eigenvalues of $J(E_0)$ are $\lambda_{10}=-p<0$ and $\lambda_{20}=b>0$, so $E_0$ is a saddle.
	
	The Jacobian matrix at $E_3(0,1)$ is given by
	$$
	J(E_3)=\begin{bmatrix}-\displaystyle\frac{p}{q+1}-a
		& 0\\
		-bc &-b
	\end{bmatrix}.
	$$
	The two eigenvalues of $J(E_3)$ are $\lambda_{13}=-\displaystyle\frac{p}{q+1}-a<0$ and $\lambda_{23}=-b<0$, so $E_3$ is a stable node. Then we discuss the stability of the boundary equilibria $E_1(1,0)$, $E_2(p,0)$.
	\begin{theorem}
		The stability of the boundary equilibrium point $E_1$ is shown below:
		\begin{enumerate}
			\item If $c>1$, $E_1$ is a stable node (Fig. 1(a)).
			\item If $0<c<1$, $E_1$ is a saddle (Fig. 1(b)).
			\item If $c=1$,
			\begin{enumerate}
				\item $E_1$ is an attracting saddle-node, and the parabolic sector is on the lower half-plane when $p<1-a$ (Fig. 1(c)).
				\item $E_1$ is an attracting saddle-node, and the parabolic sector is on the upper half-plane when $p>1-a$ (Fig. 1(d)).
				\item $E_1$ is a nonhyperbolic saddle when $p=1-a$ (Fig. 1(e)).
			\end{enumerate}
		\end{enumerate}
	\end{theorem}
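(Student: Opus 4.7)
The first step is to compute the Jacobian of system \eqref{3} at $E_1 = (1,0)$. Direct evaluation of \eqref{6} gives
\[
J(E_1) = \begin{pmatrix} p-1 & -a \\ 0 & b(1-c) \end{pmatrix},
\]
an upper-triangular matrix with eigenvalues $\lambda_1 = p-1$ and $\lambda_2 = b(1-c)$. Since $0 < p < 1$ we have $\lambda_1 < 0$ always, so the classification reduces to the sign of $\lambda_2$. Cases (1) and (2) then follow at once from the Hartman--Grobman theorem: for $c > 1$ both eigenvalues are negative, giving a stable node, while for $0 < c < 1$ the eigenvalues have opposite signs, giving a saddle.

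The non-trivial analysis is case (3), $c = 1$, in which $\lambda_2 = 0$ and the equilibrium is non-hyperbolic. My plan is to translate $E_1$ to the origin by $u = x-1$, $v = y$ and introduce the linear change $\xi = u + \tfrac{a}{1-p}\,v$, $\eta = v$, which aligns the new axes with the stable eigenvector $(1,0)$ and the center eigenvector $(a,-(1-p))$ of $J(E_1)$. Taylor-expanding system \eqref{3} around $(u,v)=(0,0)$ and substituting, the linearized part becomes $\dot\xi = (p-1)\xi + \cdots$, $\dot\eta = 0 + \cdots$. The center manifold theorem then produces a smooth local invariant curve $\xi = h(\eta)$ with $h(0)=h'(0)=0$, onto which all nearby trajectories contract exponentially in the $\xi$-direction, so the topological type of $E_1$ is determined by the one-dimensional flow restricted to this curve.

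Since at $c=1$ one has $\dot v = -buv - bv^2$ exactly, substituting $v=\eta$ and $u = \xi - \tfrac{a}{1-p}\eta$ and using $\xi = h(\eta) = O(\eta^2)$ gives the reduced dynamics
\[
\dot\eta = \frac{b(a+p-1)}{1-p}\,\eta^2 + O(\eta^3).
\]
When $p \neq 1-a$ the quadratic coefficient is nonzero, and the standard normal-form classification of degenerate equilibria yields an attracting saddle-node whose parabolic sector lies on the side of the center direction determined by the sign of this coefficient, producing sub-cases (a) and (b). When $p = 1-a$ the quadratic vanishes and one must pass to cubic order: I would first compute the quadratic contributions to $\dot\xi$, solve the invariance equation $h'(\eta)\dot\eta = (p-1)h(\eta) + [\text{quadratic terms of } \dot\xi]$ to read off the leading coefficient $h_2$, and then substitute back into $-b\,\xi\,\eta$ to obtain $\dot\eta = c_3\,\eta^3 + O(\eta^4)$. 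A positive $c_3$ yields flow repelling along the center manifold combined with the attracting $\xi$-direction, i.e.\ a topological (non-hyperbolic) saddle, giving sub-case (c).

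The main obstacle will be the bookkeeping of the Taylor expansion through the coordinate change: the quadratic part of $\dot u$ contains a $u^2$ term together with a mixed $uv$ term whose coefficient involves both $q$ and $a$, and these must be propagated carefully through $u = \xi - \tfrac{a}{1-p}\eta$ in order to extract the correct $\eta^2$ coefficient in $\dot\xi$ that determines $h_2$ in the degenerate sub-case. A secondary subtlety is correlating the sign of the quadratic (respectively cubic) coefficient on the center manifold with the physical ``upper'' versus ``lower'' half-plane orientation of the parabolic sector claimed in (a)--(c), which requires a careful look at how the center eigenvector sits in the original $(x,y)$-plane.
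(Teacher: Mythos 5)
Your route is the same as the paper's: cases (1)--(2) from the triangular Jacobian, and for $c=1$ the paper performs exactly your change of variables (its matrix transformation is the inverse of $X_1=X+\tfrac{a}{1-p}Y$, $Y_1=Y$), then invokes Theorem 7.1 of \cite{22}, which is precisely the center-manifold/normal-form reduction you describe; your reduced quadratic coefficient $\tfrac{b(a+p-1)}{1-p}$ agrees with the paper's $b_{02}=-\tfrac{b(a+p-1)}{(p-1)^2}$ once its time rescaling $\tau=(p-1)t$ is taken into account. The method is therefore right, but two pieces of the theorem are not actually established in your write-up. First, sub-case (c) ($p=1-a$) is only a plan: you never compute the cubic coefficient $c_3$, and its sign is exactly what separates ``nonhyperbolic saddle'' from a topological node (with the hyperbolic direction attracting, a negative cubic coefficient in original time would give a stable node instead). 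Note that this coefficient involves the fear parameter $q$ -- the paper's reduction yields, in its rescaled time, $-a_{02}b_{11}=\tfrac{ab\left[-q(p-1)^2-a\right]}{(p-1)^4}<0$ -- even though $q$ dropped out of your quadratic computation, so this step genuinely requires the Taylor bookkeeping you defer and cannot be waved through.

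Second, for sub-cases (a)--(b) you leave unresolved the very feature that distinguishes them: which half-plane contains the parabolic sector. This is not cosmetic. Pushing your own reduction one step further, for $p<1-a$ the flow on the center manifold satisfies $\dot y\approx\tfrac{b(a+p-1)}{1-p}\,y^{2}<0$ for small $y>0$, so the interior side $y>0$ is attracted to $E_1$, while for $p>1-a$ the sign flips; an independent consistency check is that at $c=1$ the stable interior node $E_{2*}$ collides with $E_1$ from $y>0$ exactly when $p<1-a$ (the second root of \eqref{5} at $c=1$ is $\tfrac{a+aq+p}{aq+1}$, which is below $1$ precisely then). Be aware that the paper reads off the half-plane from system \eqref{7}, i.e.\ after the orientation-reversing rescaling $\tau=(p-1)t$, and its stated labels for (a) and (b) do not obviously agree with the direct reduction just sketched, so the ``secondary subtlety'' you flag is in fact where a sign error is easiest to make and where your proof must supply an explicit argument rather than an appeal to ``the standard classification.'' As written, cases (1), (2) and the quadratic coefficient are fine; the orientation in (a)/(b) and all of (c) remain gaps.
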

	\begin{proof}
		The Jacobian matrix at $E_1(1,0)$ is given by
		$$
		J(E_1)=\begin{bmatrix}-1+p
			& -a\\
			0 &b (-c+1)
		\end{bmatrix}.
		$$
		The two eigenvalues of $J(E_1)$ are $\lambda_{11}=-1+p<0$ and $\lambda_{21}=b(-c+1)$. $E_1$ is a stable node if $\lambda_{21}=b(-c+1)<0$, i.e., $c>1$. $E_1$ is a saddle if $\lambda_{21}=b(-c+1)>0$, i.e., $0<c<1$. For $\lambda_{21}=b(-c+1)=0$, i.e., $c=1$, we conduct the following discussion.
		
		We move equilibrium $E_1$ to the origin by transforming $(X, Y)=(x-1, y)$ and make Taylor's expansion around the origin, then system \eqref{3} becomes
		$$
		\left\{\begin{array}{l}
			\displaystyle\frac{\mathrm{d} X}{\mathrm{d} t} =\left(p -1\right) X -a Y +\left(-2+p \right) X^{2}-\left(q p +a -q \right) Y X -X^{3}-q \left(-2+p \right) Y \,X^{2}+\left(p -1\right) q^{2} X \,Y^{2}+P_0(X, Y), \vspace{2ex}\\
			\displaystyle\frac{\mathrm{d} Y}{\mathrm{d} t} =-bXY-bY^2,
		\end{array}\right.
		$$
		where $P_i(X, Y)$ are power series in $(X, Y)$ with terms $X^IY^J$ satisfying $I+J \ge 4$ (the same below).
		
		In the next step, we make the following transformations to the above system
		$$
		\begin{bmatrix}
			X \\Y
			
		\end{bmatrix}=\begin{bmatrix}
			1 &\displaystyle \frac{a}{p-1} \\
			0 &1
		\end{bmatrix}\begin{bmatrix}
			X_1 \\Y_1
			
		\end{bmatrix},
		$$
		and letting $\tau=(p-1) t$, for which we will retain $t$ to denote $\tau$ for notational simplicity, we get
		\begin{equation}
			\left\{\begin{array}{l}
				\displaystyle\frac{\mathrm{d} X_1}{\mathrm{d} t} =X_1+a_{11}X_1Y_1+a_{20}X^2_1+a_{02}Y^2_1+a_{30}X_1^3+a_{03}Y_1^3+a_{21}X_1^2Y_1+a_{12}X_1Y_1^2+P_1(X_1,Y_1), \vspace{2ex}\\
				\displaystyle\frac{\mathrm{d} Y_1}{\mathrm{d} t} =b_{02}Y_1^2+b_{11}X_1Y_1,
			\end{array}\right.
			\label{7}
		\end{equation}
		where
		\begin{flalign}
			\begin{split}
				& a_{11}=\frac{-p^{2} q +a b +a p +2 p q -3 a -q}{\left(p -1\right)^{2}}, \quad a_{20}=\frac{-2+p}{p -1}, \quad a_{02}=\frac{a \left(-p^{2} q +a b +b p +2 p q -a -b -q \right)}{\left(p -1\right)^{3}},\\
				& a_{30}=-\frac{1}{p -1}, \quad a_{21}=-\frac{p^{2} q -3 p q +3 a +2 q}{\left(p -1\right)^{2}}, \quad a_{12}=-\frac{-p^{3} q^{2}+2 a \,p^{2} q +3 p^{2} q^{2}-6 a p q -3 p q^{2}+3 a^{2}+4 a q +q^{2}}{\left(p -1\right)^{3}},\\
				& \quad a_{03}=-\frac{a \left(-p q +a +q \right) \left(p^{2} q -2 p q +a +q \right)}{\left(p -1\right)^{4}}, \quad b_{11}=-\frac{b}{p -1}, \quad b_{02}=-\frac{b \left(a +p -1\right)}{\left(p -1\right)^{2}}.\nonumber
			\end{split}&
		\end{flalign}
		\begin{figure}[H]
			\centering
			\vspace{-0cm}
			\setlength{\abovecaptionskip}{-0pt}
			\subfigcapskip=-30pt
			\subfigure[]
			{\scalebox{0.45}[0.45]{
					\includegraphics{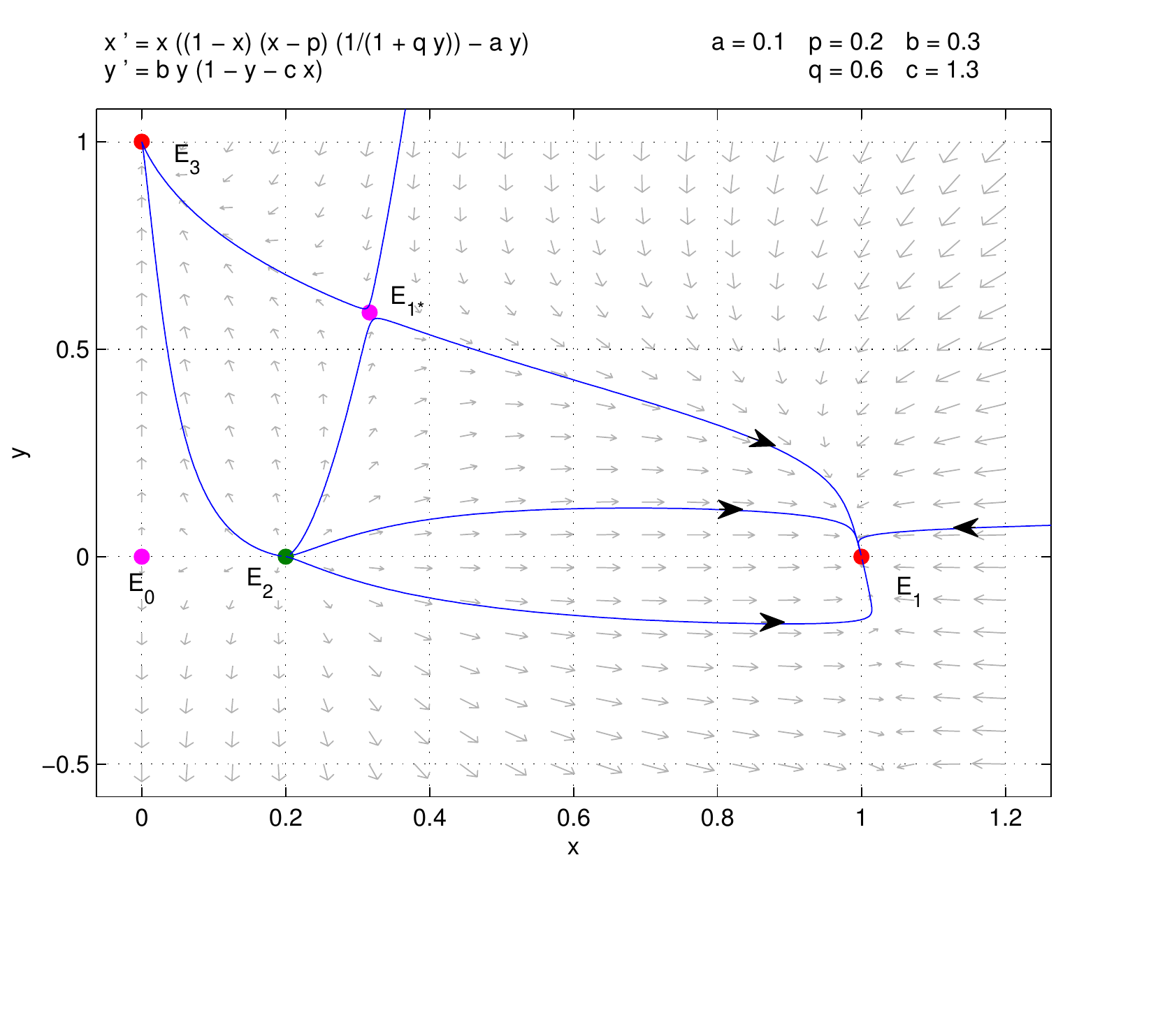}}}
			\subfigure[]
			{\scalebox{0.45}[0.45]{
					\includegraphics{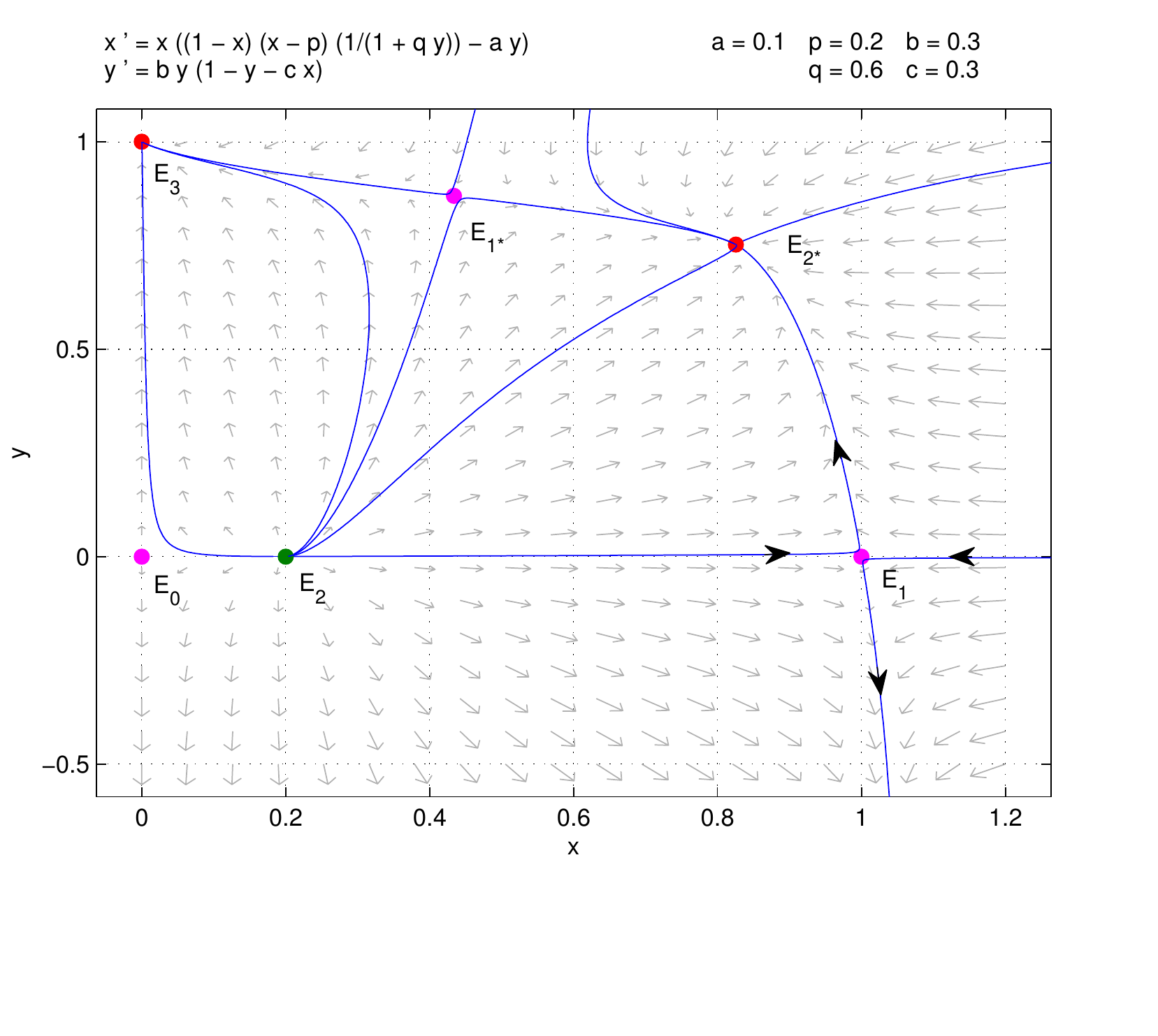}}}
			\subfigure[]
			{\scalebox{0.45}[0.45]{
					\includegraphics{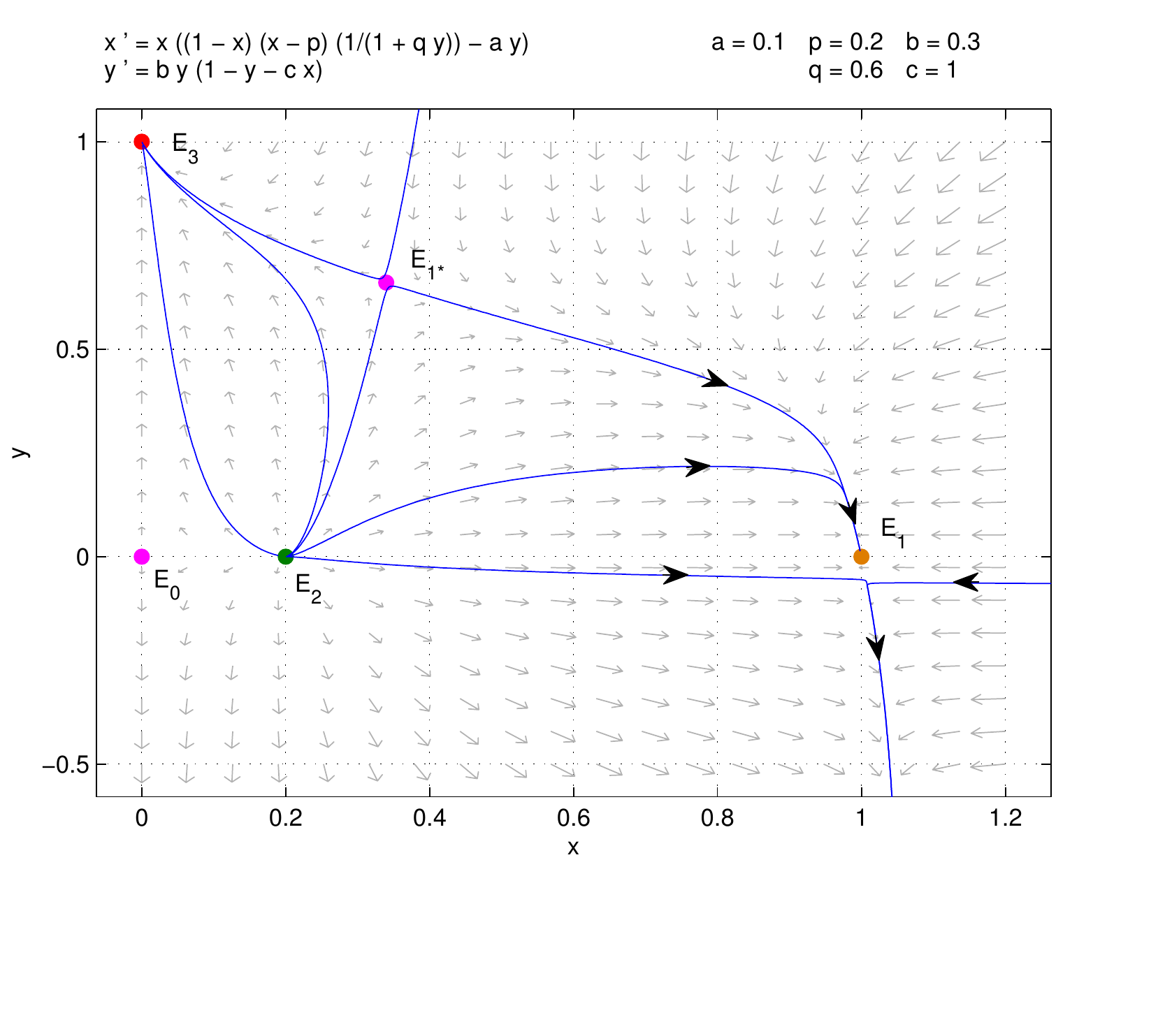}}}
			\subfigure[]
			{\scalebox{0.45}[0.45]{
					\includegraphics{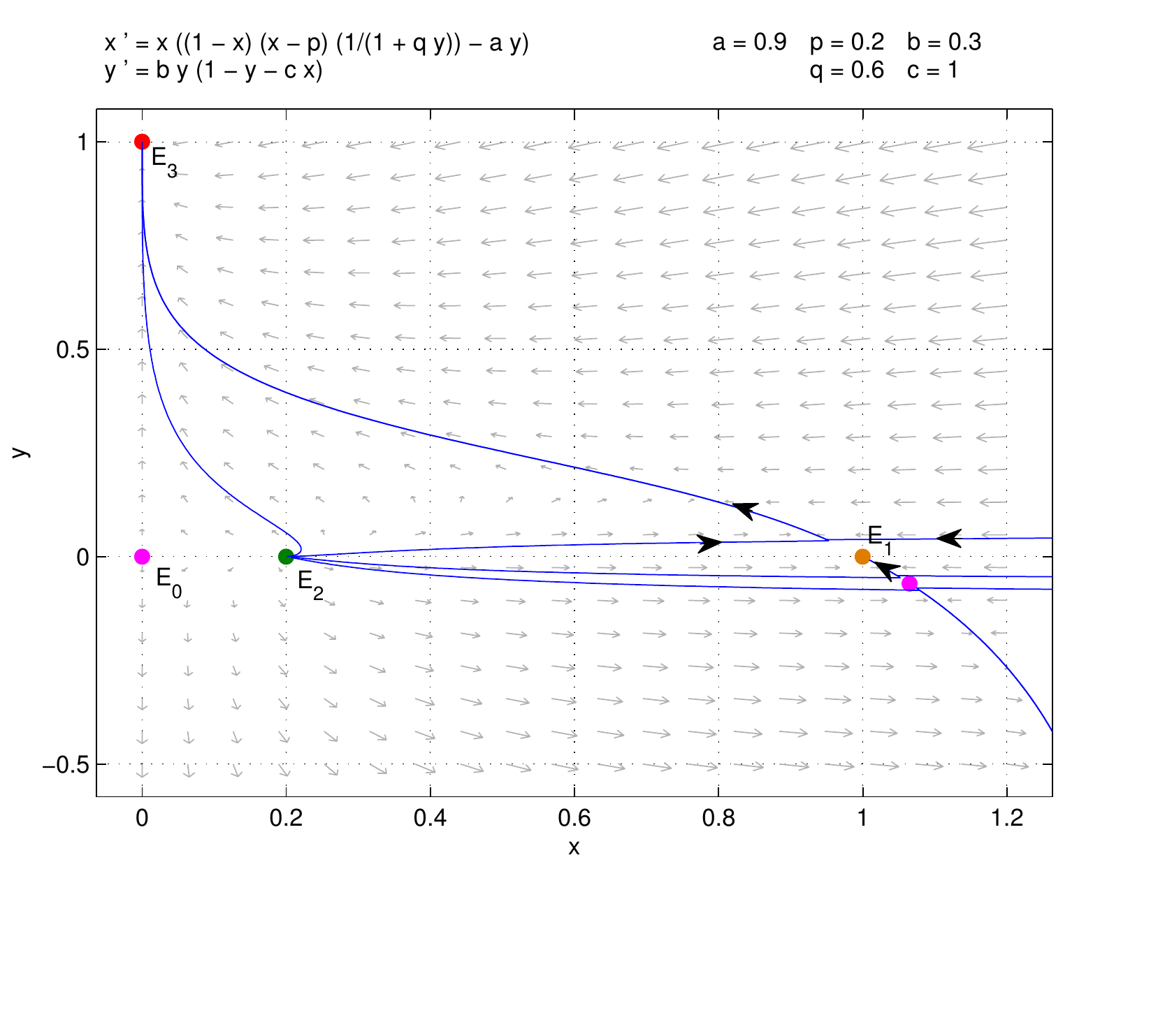}}}
			\subfigure[]
			{\scalebox{0.45}[0.45]{
					\includegraphics{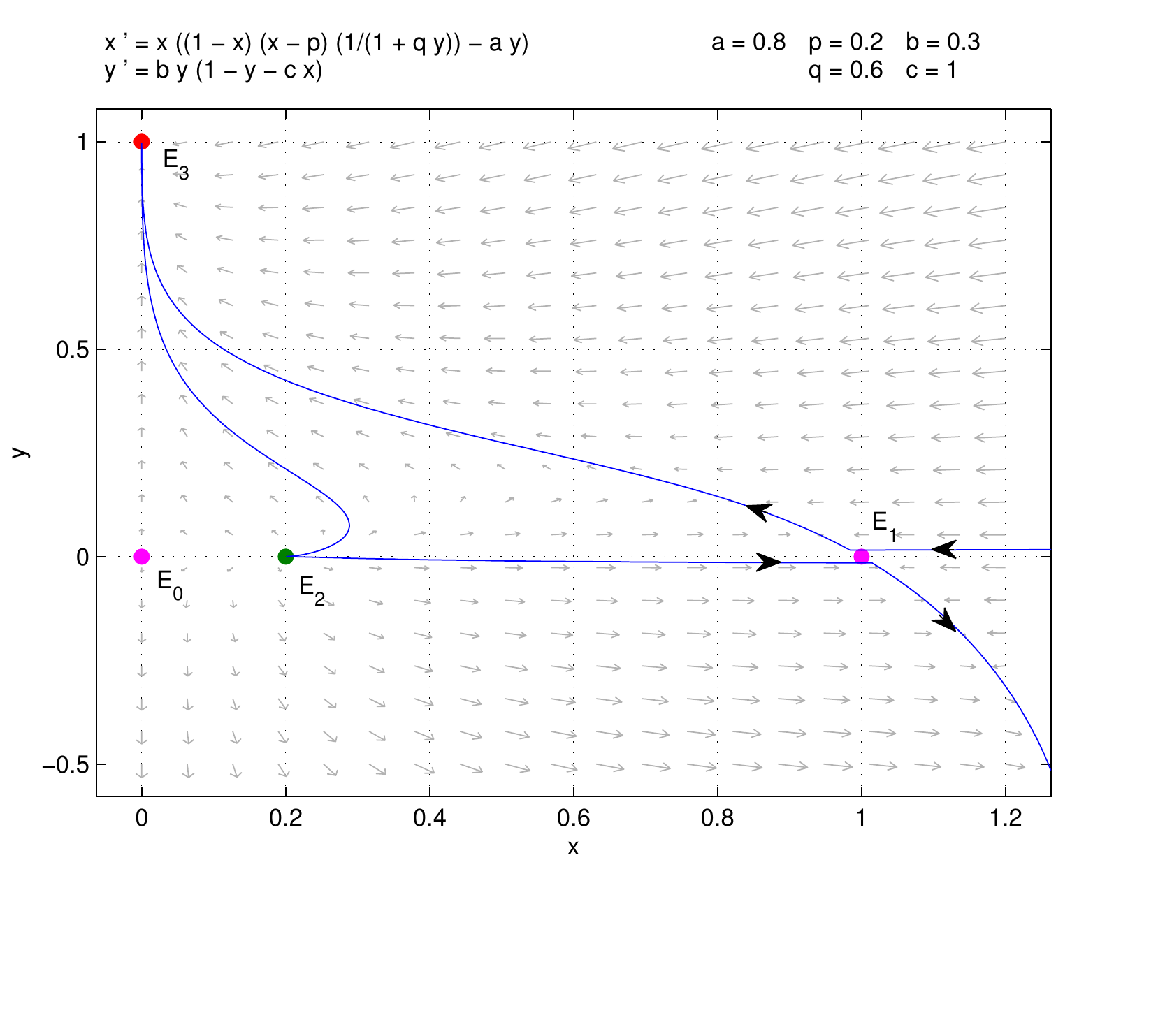}}}
			\caption{Red, green, pink, and orange points indicate stable node, unstable node (source), saddle, and saddle-node, respectively. (a) $c>1$. (b) $0<c<1$. (c) $c=1$ and $p<1-a$. (d) $c=1$ and $p>1-a$. (e) $c=1$ and $p=1-a$.}
			\label{Fig1}
		\end{figure} 
		Hence by Theorem 7.1 in Chapter 2 in \cite{22}, if $b_{02}>0$, i.e., $p<1-a$, $E_1$ is an attracting saddle-node, and the parabolic sector is on the lower half-plane. If $b_{02}<0$, i.e., $p>1-a$, $E_1$ is an attracting saddle-node, and the parabolic sector is on the upper half-plane. If $b_{02}=0$, i.e. $p=1-a$, system \eqref{7} becomes
		\begin{equation}
			\left\{\begin{array}{l}
				\displaystyle\frac{\mathrm{d} X_1}{\mathrm{d} t} =X_1+a_{11}X_1Y_1+a_{20}X^2_1+a_{02}Y^2_1+a_{30}X_1^3+a_{03}Y_1^3+a_{21}X_1^2Y_1+a_{12}X_1Y_1^2+P_1(X_1,Y_1), \vspace{2ex}\\
				\displaystyle\frac{\mathrm{d} Y_1}{\mathrm{d} t} =b_{11}X_1Y_1.
			\end{array}\right.
			\label{8}
		\end{equation}
		By using the first equation of system \eqref{8}, we obtain the implicit function
		$$
		X_1=-a_{02}Y_1^2+(a_{11}a_{02}-a_{03})Y_1^3+\cdots \cdots
		$$
		and
		$$
		\displaystyle\frac{\mathrm{d} Y_1}{\mathrm{d} t}=-a_{02}b_{11}Y_1^3+\cdots \cdots,
		$$
		where
		$$
		-a_{02}b_{11}=\displaystyle\frac{ab\left [-q(p-1)^2-a\right ] }{(p-1)^4}<0.
		$$
		According to Theorem 7.1 again, $E_1$ is a nonhyperbolic saddle.
	\end{proof}
	
	\begin{theorem}
		The stability of the boundary equilibrium point $E_2$ is shown below:
		\begin{enumerate}
			\item If $0<p<\displaystyle\frac{1}{c}$, $E_2$ is a unstable node (Fig. 2(a)).
			\item If $p>\displaystyle\frac{1}{c}$, $E_2$ is a saddle (Fig. 2(b)).
			\item If $p=\displaystyle\frac{1}{c}$, $E_2$ is a repelling saddle-node (Fig. 2(c)).
		\end{enumerate}
	\end{theorem}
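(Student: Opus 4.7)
The plan is to read the stability off the Jacobian $J(E_2)$ in the two hyperbolic cases, and to carry out a saddle-node normal form calculation in the degenerate case $p=1/c$, in close analogy with the treatment of $E_1$ in the previous theorem.

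First, substituting $(x,y)=(p,0)$ into the Jacobian \eqref{6} gives the upper triangular matrix
$$
J(E_2)=\begin{bmatrix} p(1-p) & -ap \\ 0 & b(1-cp) \end{bmatrix},
$$
whose eigenvalues are $\lambda_1=p(1-p)$ and $\lambda_2=b(1-cp)$. Because $0<p<1$, we always have $\lambda_1>0$. Hence when $p<1/c$ both eigenvalues are positive and $E_2$ is an unstable node, while when $p>1/c$ they have opposite sign and $E_2$ is a saddle. This handles parts (1) and (2) at once.

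The real work lies in the nonhyperbolic case $p=1/c$, where $\lambda_2=0$. Here I would mimic the strategy used for $E_1$: translate by $(X,Y)=(x-p,\,y)$, Taylor expand at the origin, then apply a linear change of coordinates that aligns the unstable direction with $X_1$ and the center direction with $Y_1$, and finally rescale time by the positive factor $p(1-p)$ (which does not change the qualitative picture because $\lambda_1>0$). The transformed system will have the shape
$$
\begin{aligned}
\frac{dX_1}{dt} &= X_1 + (\text{higher order terms in } X_1,Y_1),\\
\frac{dY_1}{dt} &= \tilde b_{02}\,Y_1^2 + \tilde b_{11}\,X_1Y_1 + \cdots,
\end{aligned}
$$
so that Theorem 7.1 of Chapter 2 in \cite{22} applies. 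The classification then reduces to determining the sign of $\tilde b_{02}$: since the nonzero eigenvalue $\lambda_1$ is positive, the hyperbolic sectors point outward, and $E_2$ emerges as a \emph{repelling} saddle-node (in contrast to the attracting one found for $E_1$).

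The main obstacle is to verify that $\tilde b_{02}\neq 0$, so that the saddle-node theorem applies at leading quadratic order and no higher-order degeneracy analysis (like the cubic reduction used at $E_1$ when $p=1-a$) is required. Here the computation is actually short: at $cp=1$, direct substitution gives $\frac{dy}{dt}=by(-y-c(x-p))$, so the coefficient of $Y^2$ in the original $y$-equation is $-b\neq 0$, and the linear change of coordinates together with the time rescaling multiplies this by a nonzero factor. Consequently $\tilde b_{02}\neq 0$, the saddle-node theorem applies, and case (3) is established.
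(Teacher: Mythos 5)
Your handling of the hyperbolic cases (1) and (2) coincides with the paper's, and your plan for the degenerate case $p=1/c$ --- translate $(X,Y)=(x-p,y)$, shear, rescale time by $p(1-p)>0$, and invoke Theorem 7.1 of Chapter 2 in \cite{22} --- is exactly the route the paper takes. The conclusion is right, but the justification you give for the crucial non-degeneracy $\tilde b_{02}\neq 0$ does not hold up. After the translation the second equation reads $\dot Y=-\tfrac{b}{p}XY-bY^{2}$, and the shear $X=X_{1}+\tfrac{a}{1-p}Y_{1}$, $Y=Y_{1}$ does \emph{not} simply multiply the $Y^{2}$-coefficient by a nonzero factor: it feeds the cross term into the quadratic one, giving
\[
\dot Y_{1}=-\tfrac{b}{p}X_{1}Y_{1}-\Bigl(b+\tfrac{ab}{p(1-p)}\Bigr)Y_{1}^{2},
\]
so that after the time rescaling $\tilde b_{02}=-\dfrac{b\,(a+p-p^{2})}{p^{2}(1-p)^{2}}$, which is the paper's $d_{02}$. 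That this is nonzero is a genuine (if short) computation, not a formal consequence of the original coefficient $-b$ being nonzero: the same reasoning applied at $E_{1}$ would give $b_{02}\neq0$ for all parameters, yet there $b_{02}=-\dfrac{b(a+p-1)}{(p-1)^{2}}$ vanishes exactly when $p=1-a$, which is the very degeneracy that forces the cubic reduction you mention. So as stated your argument would also "prove" a false statement, and the step must be replaced by the explicit coefficient. The repair is immediate: since $0<p<1$ and $a>0$, one has $a+p(1-p)>0$, hence $\tilde b_{02}<0$ unconditionally (no analogue of the $p=1-a$ subcase arises at $E_2$), and with the nonzero eigenvalue $p(1-p)>0$ Theorem 7.1 gives a repelling saddle-node, matching the paper. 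No new idea is needed, but the non-degeneracy must be computed rather than inferred.
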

	\begin{proof}
		The Jacobian matrix at $E_2(p,0)$ is given by
		$$
		J(E_2)=\begin{bmatrix}p(1-p)
			& -pa\\
			0 &b (-cp+1)
		\end{bmatrix}.
		$$
		The two eigenvalues of $J(E_2)$ are $\lambda_{12}=p(1-p)>0$ and $\lambda_{22}=b (-cp+1)$. $E_2$ is an unstable node if $\lambda_{22}=b (-cp+1)>0$, i.e., $0<p<\displaystyle\frac{1}{c}$. $E_2$ is a saddle if $\lambda_{22}=b (-cp+1)<0$, i.e., $p>\displaystyle\frac{1}{c}$. For $\lambda_{22}=b (-cp+1)=0$, i.e. $p=\displaystyle\frac{1}{c}$, we conduct the following discussion.
		
		We move equilibrium $E_2$ to the origin by transforming $(X, Y)=(x-p, y)$ and make Taylor's expansion around the origin, then system \eqref{3} becomes
		$$
		\left\{\begin{array}{l}
			\begin{aligned}
				\displaystyle\frac{\mathrm{d} X}{\mathrm{d} t}=&-p \left(p -1\right) X -p a Y -\left(2 p -1\right) X^{2}-\left(-p^{2} q +p q +a \right) Y X -X^{3}\\
				&+q \left(2 p -1\right) YX^{2}-p \left(p -1\right) q^{2} XY^{2}+P_2(X, Y),
			\end{aligned}
			\vspace{2ex}\\
			\displaystyle\frac{\mathrm{d} Y}{\mathrm{d} t} =-\frac{b Y X}{p}-bY^{2}.
		\end{array}\right.
		$$
		\begin{figure}[H]
			\centering
			\vspace{-0cm}
			\setlength{\abovecaptionskip}{-0pt}
			\subfigcapskip=-30pt
			\subfigure[]
			{\scalebox{0.45}[0.45]{
					\includegraphics{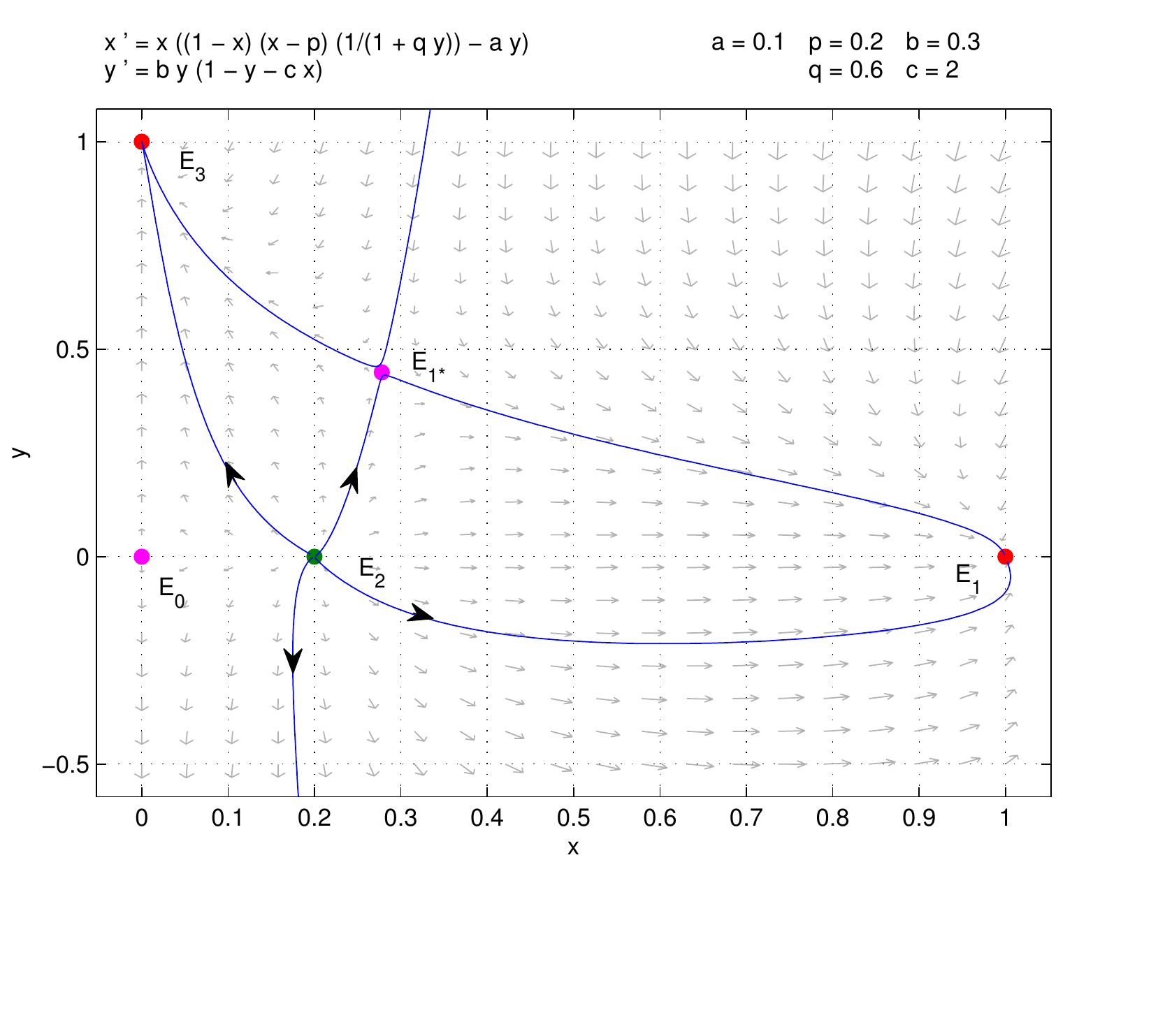}}}
			\subfigure[]
			{\scalebox{0.45}[0.45]{
					\includegraphics{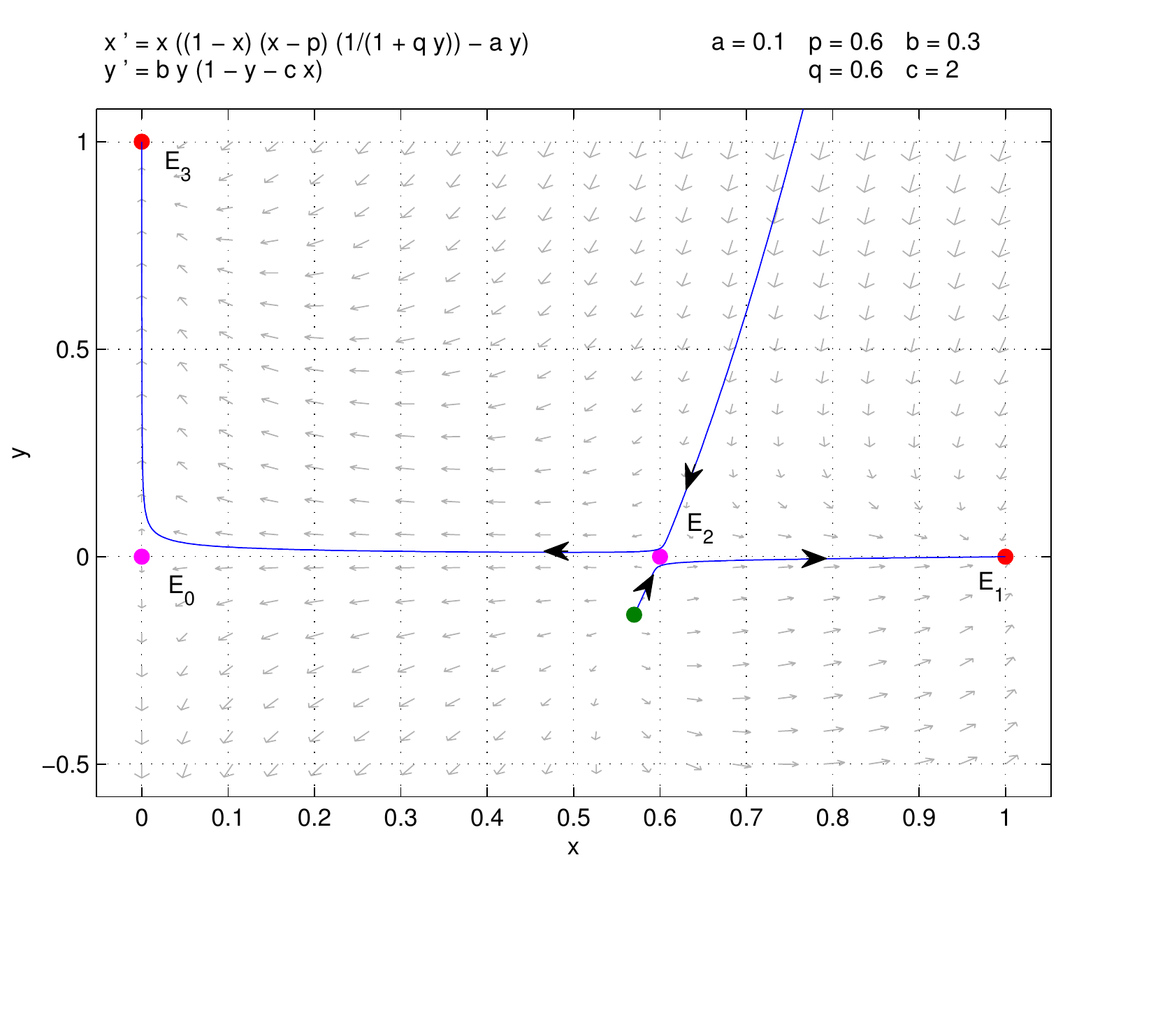}}}
			\subfigure[]
			{\scalebox{0.45}[0.45]{
					\includegraphics{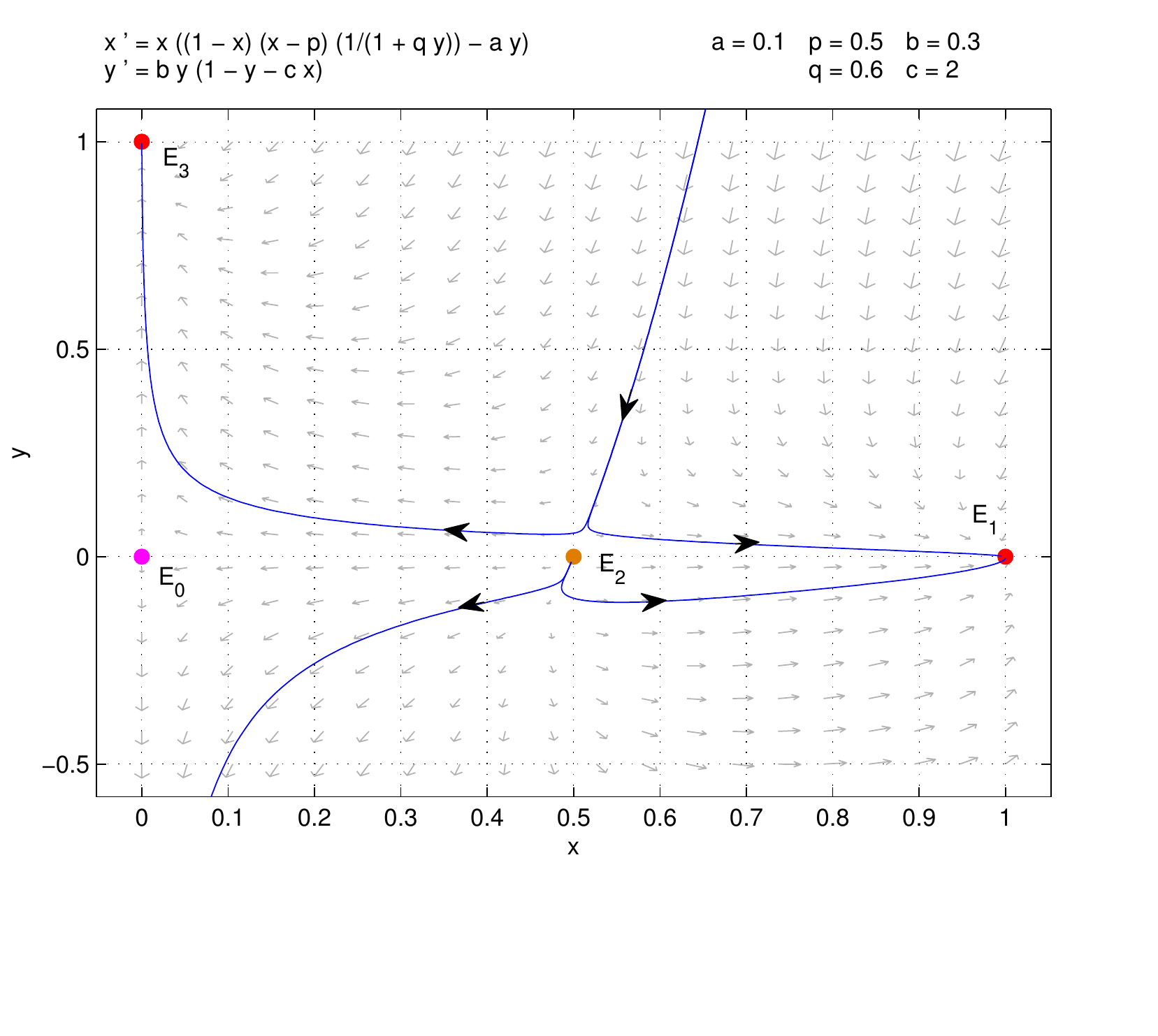}}}
			\caption{Red, green, pink, and orange points indicate stable node, unstable node (source), saddle, and saddle-node, respectively. (a) $0<p<\displaystyle\frac{1}{c}$. (b) $p>\displaystyle\frac{1}{c}$. (c) $p=\displaystyle\frac{1}{c}$.}
			\label{Fig2}
		\end{figure}
		In the next step, we make the following transformations to the above system
		$$
		\begin{bmatrix}
			X \\Y
			
		\end{bmatrix}=\begin{bmatrix}
			1 &\displaystyle \frac{a}{1-p} \\
			0 &1
		\end{bmatrix}\begin{bmatrix}
			X_1 \\Y_1
			
		\end{bmatrix},
		$$
		and letting $\tau=-p(p-1) t$, for which we will retain $t$ to denote $\tau$ for notational simplicity, we get
		\begin{equation}
			\left\{\begin{array}{l}
				\displaystyle\frac{\mathrm{d} X_1}{\mathrm{d} t} =X_1+c_{11}X_1Y_1+c_{20}X^2_1+c_{02}Y^2_1+c_{30}X_1^3+c_{03}Y_1^3+c_{21}X_1^2Y_1+c_{12}X_1Y_1^2+P_3(X_1,Y_1), \vspace{2ex}\\
				\displaystyle\frac{\mathrm{d} Y_1}{\mathrm{d} t} =d_{02}Y_1^2+d_{11}X_1Y_1,
			\end{array}\right.
			\label{9}
		\end{equation}
		where
		\begin{flalign}
			\begin{split}
				& c_{11}=\frac{-p^{4} q +2 p^{3} q -3 ap^{2}-p^{2} q +a b +p a}{p^{2} \left(p -1\right)^{2}}, \quad c_{20}=\frac{2 p -1}{p \left(p -1\right)}, \quad c_{02}=-\frac{a \left(-p^{4} q +2 p^{3} q -a \,p^{2}-b \,p^{2}-p^{2} q +a b +b p \right)}{\left(p -1\right)^{3} p^{2}},\\
				& c_{30}=\frac{1}{p \left(p -1\right)}, \quad c_{21}=-\frac{2 p^{2} q -3 p q +3 a +q}{p \left(p -1\right)^{2}}, \quad c_{12}=\frac{p^{4} q^{2}-3 p^{3} q^{2}+4 a \,p^{2} q +3 p^{2} q^{2}-6 a p q -p \,q^{2}+3 a^{2}+2 a q}{p \left(p -1\right)^{3}},\\
				& c_{03}=-\frac{a \left(p^{2} q -p q +a \right) \left(p^{2} q -2 p q +a +q \right)}{\left(p -1\right)^{4} p}, \quad d_{11}=\frac{b}{p^{2} \left(p -1\right)}, \quad d_{02}=-\frac{b \left(-p^{2}+a +p \right)}{p^{2} \left(p -1\right)^{2}}.\nonumber
			\end{split}&
		\end{flalign}
		Since $-p(p-1)>0$ and $d_{02}<0$, $E_2$ is a repelling saddle-node.
	\end{proof}
	\begin{remark}
		By analyzing the stability of system \eqref{3} boundary equilibria, we found that the Allee effect parameter $p$ significantly affects the extinction of the species. This conclusion is consistent with the ecological implications of the strong Allee effect. The above theoretical analysis has guiding impacts on the conservation of environmental diversity.
	\end{remark}
	
	\subsection{Stability of positive equilibria}
	Transform the Jacobian matrix of the positive equilibria $E_{i*}$ of system \eqref{3} into
	\begin{equation}
		\begin{bmatrix}
			x_*\displaystyle\frac{\partial f}{\partial x_*} & x_*\displaystyle\frac{\partial f}{\partial y_*} \vspace{2ex}\\
			y_*\displaystyle\frac{\partial g}{\partial x_*} & y_*\displaystyle\frac{\partial g}{\partial y_*}
		\end{bmatrix}\triangleq\begin{bmatrix}
			B_1& B_2\\
			B_3&B_4
		\end{bmatrix},
		\label{10}
	\end{equation}
	where
	$$
	B_1=\frac{x_* \left(-2 x_* +p +1\right)}{q y_* +1},
	$$
	$$
	B_2=x_* \left(-\frac{\left(1-x_* \right) \left(x_* -p \right) q}{\left(q y_* +1\right)^{2}}-a \right)<0,
	$$
	$$
	B_3=-bcy_*<0,
	$$
	$$
	B_4=-by_*<0.
	$$
	First we consider $\mathrm{det}(J(E_*)) $. From system \eqref{3}, we have
	$$
	y_*^{(f)}=\displaystyle\frac{\sqrt{4 \left(1-x_* \right) \left(x_* -p \right) q +a}-\sqrt{a}}{2 \sqrt{a}\, q},
	$$
	$$
	y_*^{(g)}=1-cx_*.
	$$
	Using the implicit function differentiability theorem, we get the following expression
	\begin{equation}
		\mathrm{det}(J(E_*)) =\left [ x_*y_*\frac{\partial f}{\partial y_*} \frac{\partial g}{\partial y_*}\left ( \frac{\mathrm{d} y_*^{(g)}}{\mathrm{d} x_*}-\frac{\mathrm{d} y_*^{(f)}}{\mathrm{d} x_*} \right ) \right ],
		\label{11}
	\end{equation}
	where $\displaystyle\frac{\mathrm{d} y_*^{(g)}}{\mathrm{d} x_*}$ and $\displaystyle\frac{\mathrm{d} y_*^{(f)}}{\mathrm{d} x_*}$ stand for the magnitude of the tangent slope at $E_*$ of two isoclines. In the following, we investigate the monotonicity of the function $y_*^{(f)}$. First, we can find
	$$
	\displaystyle\frac{\mathrm{d} y_*^{(f)}}{\mathrm{d} x_*}=\displaystyle\frac{p -2 x +1}{\sqrt{4 \left(1-x \right) \left(x -p \right) q +a}\, \sqrt{a}}.
	$$
	When the two isoclines have two intersections in the first quadrant, the function $y_*^{(f)}$ is monotonically increasing on the interval $(p,\displaystyle\frac{p+1}{2})$ and monotonically decreasing on the interval $(\displaystyle\frac{p+1}{2},1)$. And because when $x_* \in (p,1)$, $y_*^{(f)}$ is a continuous function, so combined with the image analysis(Fig. 3), we can get
	$$
	\left. \displaystyle\frac{\mathrm{d} y_*^{(g)}}{\mathrm{d} x_*}-\displaystyle\frac{\mathrm{d} y_*^{(f)}}{\mathrm{d} x_*}\right|_{(x_*,y_*)=(x_1,y_1)}<0,
	$$
	$$
	\left.\displaystyle\frac{\mathrm{d} y_*^{(g)}}{\mathrm{d} x_*}-\displaystyle\frac{\mathrm{d} y_*^{(f)}}{\mathrm{d} x_*}\right|_{(x_*,y_*)=(x_2,y_2)}>0.
	$$
	So we can determine that the positive equilibrium point $E_{1*}$ is a saddle. For $E_{2*}$, we have
	$$
	\begin{aligned}
		\mathrm{det}(J(E_{2*})) &=\left.\left [ x_*y_*\frac{\partial f}{\partial y_*} \frac{\partial g}{\partial y_*}\left ( \frac{\mathrm{d} y_*^{(g)}}{\mathrm{d} x_*}-\frac{\mathrm{d} y_*^{(f)}}{\mathrm{d} x_*} \right ) \right ]\right|_{(x_*,y_*)=(x_2,y_2)}\\
		&=\left.B_1B_4-B_2B_3\right|_{(x_*,y_*)=(x_2,y_2)}\\
		&>0.
	\end{aligned}
	$$
	Then we can conclude that $B_1B_4>B_2B_3$. Since the signs of $B_2$, $B_3$, and $B_4$ have been determined, we can thus know that $B_1<0$. Finally, we can determine that $\mathrm{tr}(J(E_{2*}))=B_1+B_4<0$ by the above analysis. From $\mathrm{det}(J(E_{2*}))>0$, $\mathrm{tr}(J(E_{2*}))<0$, we know that $E_{2*}$ is a stable node.
	
	According to Theorem 3.1, when the discriminant $\Delta = 0$ of \eqref{5}, the positive equilibria $E_{1*}$ and $E_{2*}$ will merge into a new point $E_{3*}$. In the following, we discuss the stability of $E_{3*}$.
	
	If $\Delta = 0$, we get
	$$
	q = \displaystyle\frac{a^{2} c^{2}+2 a c p +2 a c +p^{2}-4 a -2 p +1}{4 a \left(c^{2} p -c p -c +1\right)}\triangleq q_*.
	$$
	Substituting $q=q_*$, $x=x_{3}$, $y=1-cx_{3}$ into \eqref{6}, we find that $\mathrm{det}(J(E_{3*}))=0$. Thus the positive equilibrium point $E_{3*}$ is a degenerate equilibrium point, which we will analyze in more detail in the next step.
	
	We move equilibrium $E_{3*}$ to the origin by transforming $(X, Y)=(x-x_3, y-y_3)$ and make Taylor's expansion around the origin, then system \eqref{2} becomes
	\begin{equation}
		\left\{\begin{array}{l}
			\displaystyle\frac{\mathrm{d} X}{\mathrm{d} t} =e_{10}X+e_{01}Y+e_{11}XY+e_{20}X^2+e_{02}Y^2+e_{30}X^3+e_{03}Y^3+e_{21}X^2Y+e_{12}XY^2+P_4(X,Y), \vspace{2ex}\\
			\displaystyle\frac{\mathrm{d} Y}{\mathrm{d} t} =f_{10}X+f_{01}Y+f_{11}XY+f_{02}Y^2,
		\end{array}\right.
		\label{12}
	\end{equation}
	where $f_{11}=-bc$, $f_{02}=-b$, please see Appendix A for the rest of the parameters. We make the following transformations to system \eqref{12}
	$$
	\begin{bmatrix}
		X \\Y
		
	\end{bmatrix}=\begin{bmatrix}
		\displaystyle\frac{\left(c a +2 c p -p -1\right) a \left(c a p +c a +p^{2}-2 a -2 p +1\right)}{\left(a^{2} c^{2}-p^{2}+2 p -1\right) b \left(c^{2} p -c p -c +1\right)} &\displaystyle -\frac{1}{c} \\
		1 &1
	\end{bmatrix}\begin{bmatrix}
		X_1 \\Y_1
		
	\end{bmatrix},
	$$
	and letting $\tau=m_1 t$, where
	$$
	\begin{aligned}
		m_1= & \frac{2 a^{2} b \,c^{4} p +2 a^{2} \left(p +1\right) \left(2 p +a -b \right) c^{3}+\left(\left(4 a -2 b \right) p^{3}+\left(-8 a +4 b \right) p^{2}+\left(-16 a^{2}+4 a -2 b \right) p -4 a^{3}+2 b \,a^{2}\right) c^{2}}{\left(-2+ac^{2}+\left(p +1\right) c \right) \left(c a +p -1\right) \left(c a -p +1\right)}\\
		& +\frac{4 \left(p +1\right) \left(\left(-\frac{a}{2}+\frac{b}{2}\right) p^{2}+\left(a -b \right) p +a^{2}-\frac{a}{2}+\frac{b}{2}\right) c -2 b \left(p -1\right)^{2}}{\left(-2+ac^{2}+\left(p +1\right) c \right) \left(c a +p -1\right) \left(c a -p +1\right)},
	\end{aligned}
	$$
	for which we will retain $t$ to denote $\tau$ for notational simplicity. We get
	\begin{equation}
		\left\{\begin{array}{l}
			\displaystyle\frac{\mathrm{d} X_1}{\mathrm{d} t} =X_1+0\cdot Y_1+\cdots \cdots , \vspace{2ex}\\
			\displaystyle\frac{\mathrm{d} Y_1}{\mathrm{d} t} =0\cdot X_1+0\cdot Y_1+g_{02}Y_1^2+\cdots \cdots,
		\end{array}\right.
		\label{13}
	\end{equation}
	where
	$$
	g_{02}=\frac{\left(c a +2 c p -p -1\right) \left(a \,c^{2}+c p +c -2\right)^{3} a b \left(a^{2} c^{2}-p^{2}+2 p -1\right)}{4 \left(a^{2} b \,c^{4} p +a^{2} \left(p +1\right) \left(2 p +a -b \right) c^{3}+\left(\left(2 a -b \right) p^{3}+\left(-4 a +2 b \right) p^{2}+\left(-8 a^{2}+2 a -b \right) p -2 a^{3}+b \,a^{2}\right) c^{2}+N\right)^{2} c},
	$$
	$$
	N=2 \left(p +1\right) \left(\left(-\frac{a}{2}+\frac{b}{2}\right) p^{2}+\left(a -b \right) p +a^{2}-\frac{a}{2}+\frac{b}{2}\right) c -b \left(p -1\right)^{2}.
	$$
	According to Theorem 3.1, system \eqref{3} satisfies the conditions $\Delta=0$, $2A_1+A_2>0$ when the positive equilibrium point $E_3*$ exists. We substitute $q=q*$ for $2A_1+A_2>0$ and simplify to get $\displaystyle\frac{\left(a c -p +1\right) \left(a c^{2}+c p +c -2\right)}{2 c p -2}>0$. With $0<c<1$, we can determine that $a c^{2}+c p +c -2< 0$. \
	
	If $a^{2} c^{2}-p^{2}+2 p -1=0$, we can organize to get $a=\displaystyle\frac{1-p}{c}\triangleq a_*$. Substitute $a_*$ into $2A_1+A_2$ to simplify and get $2A_1+A_2=-2(p - 1)q(c - 1)\le 0$, i.e., $a^{2} c^{2}-p^{2}+2 p -1\ne 0$.
	
	\begin{figure}[]
		\centering
		\vspace{-0cm}
		\setlength{\abovecaptionskip}{-0pt}
		\subfigcapskip=-30pt
		\subfigure[]
		{\scalebox{0.45}[0.45]{
				\includegraphics{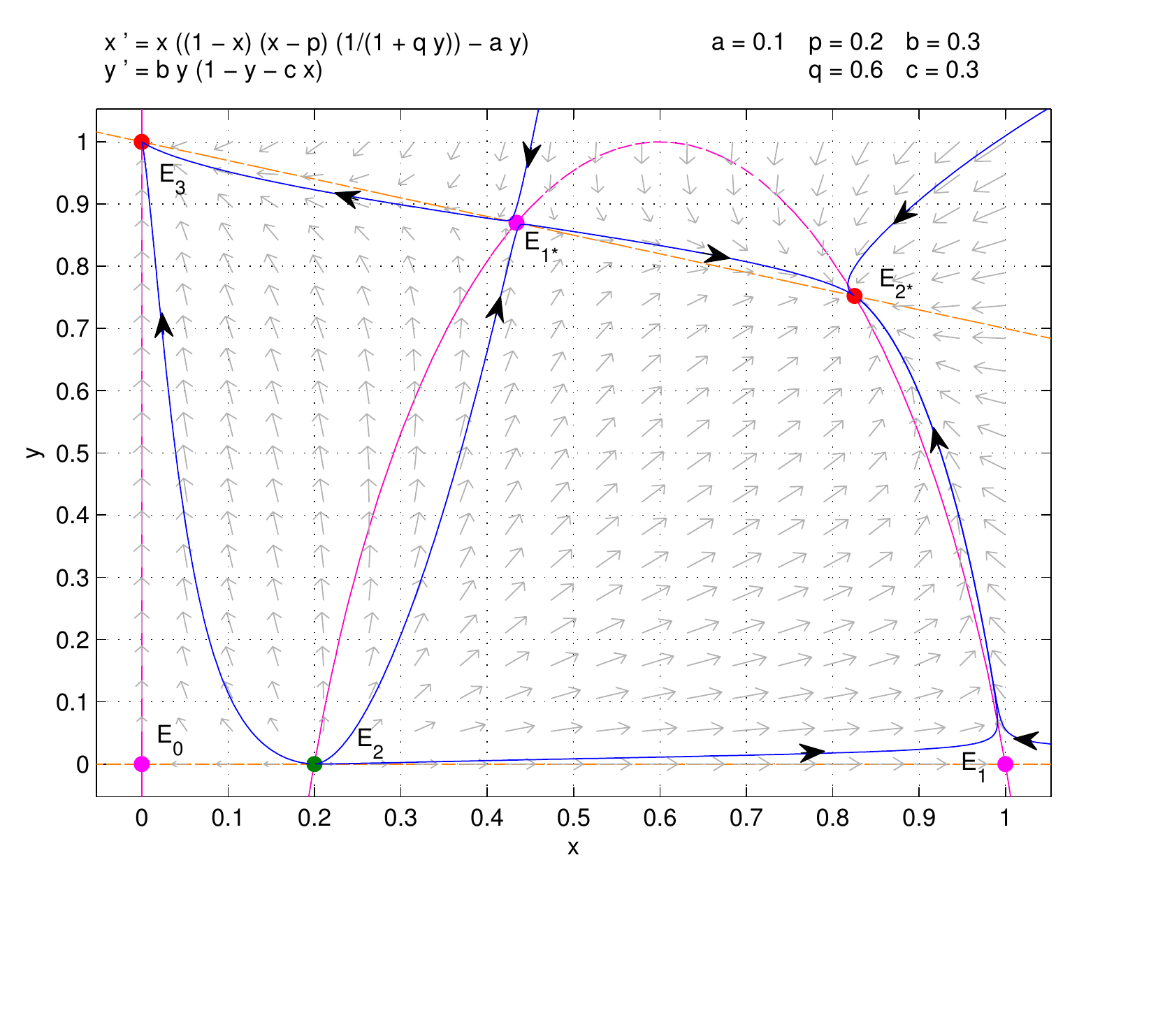}}}
		\subfigure[]
		{\scalebox{0.45}[0.45]{
				\includegraphics{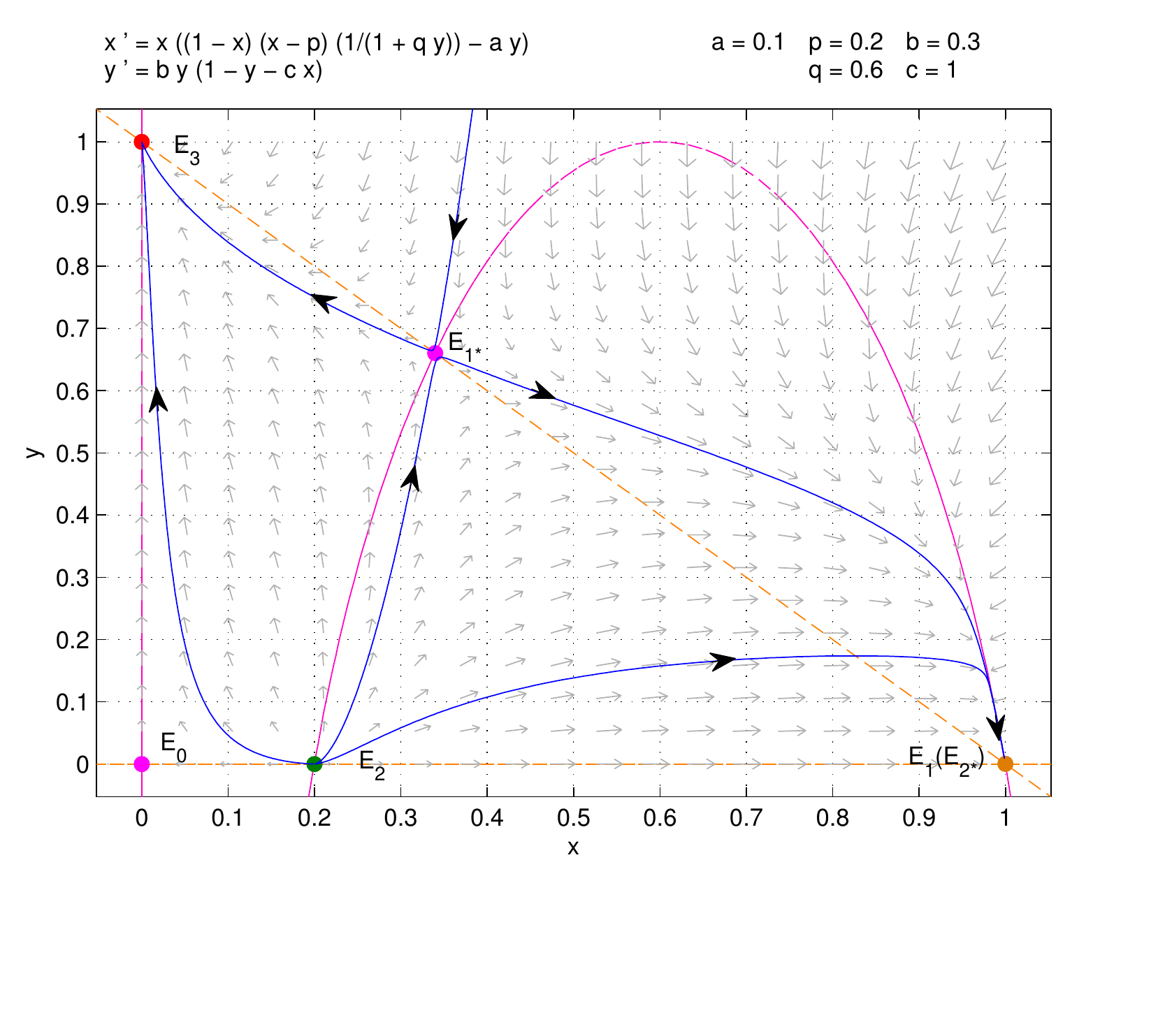}}}
		\subfigure[]
		{\scalebox{0.45}[0.45]{
				\includegraphics{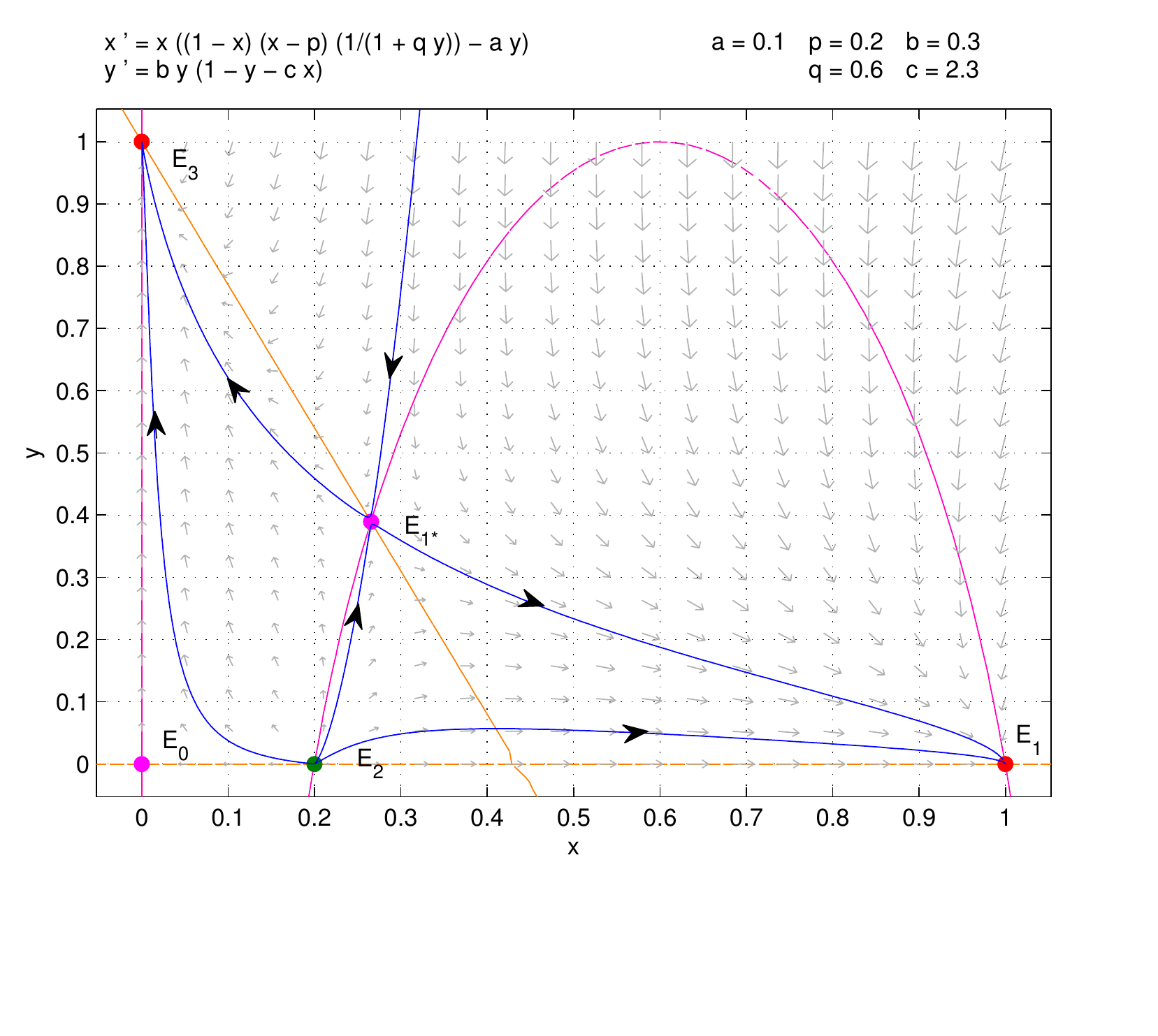}}}
		\subfigure[]
		{\scalebox{0.45}[0.45]{
				\includegraphics{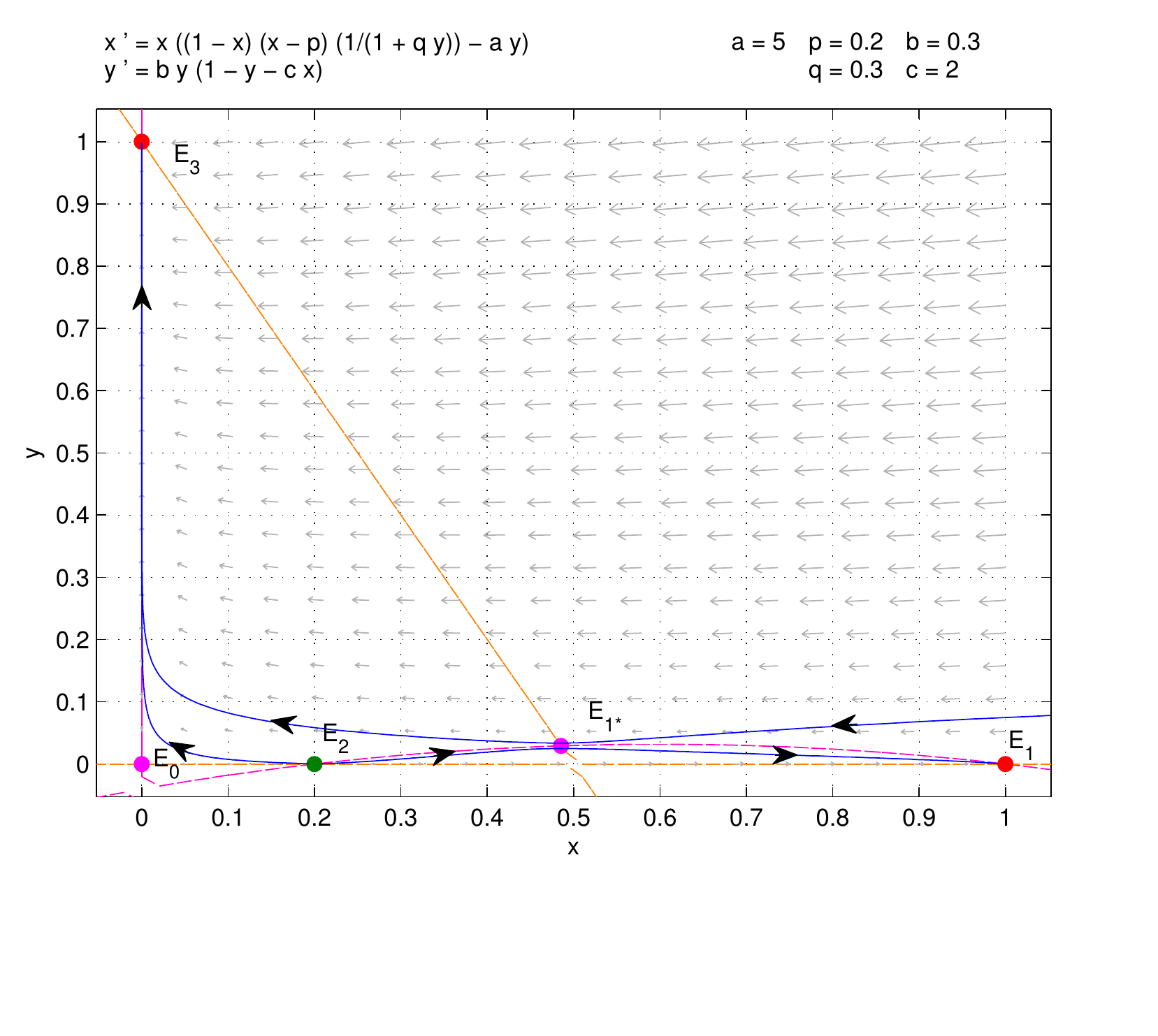}}}
		\subfigure[]
		{\scalebox{0.45}[0.45]{
				\includegraphics{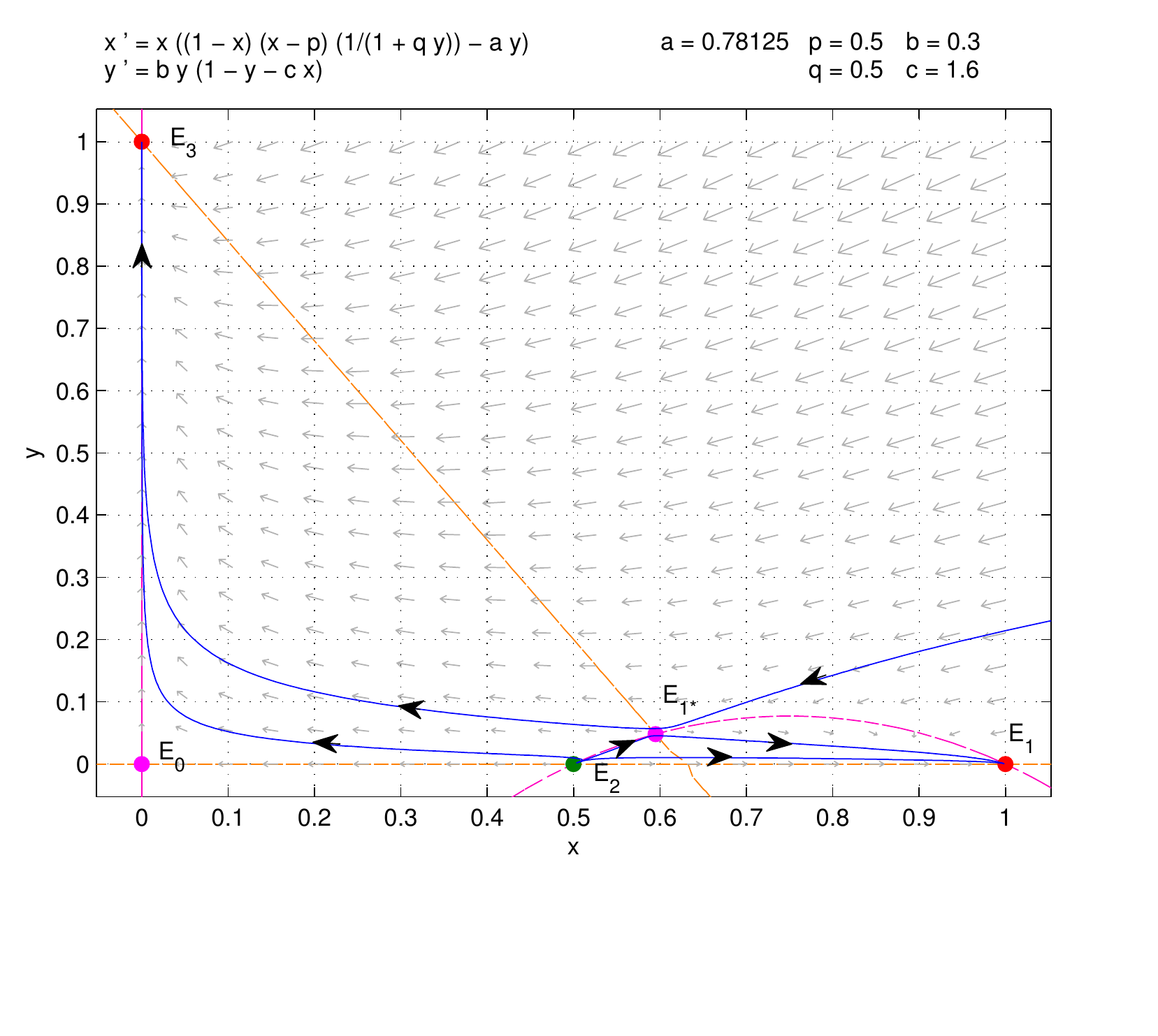}}}
		\subfigure[]
		{\scalebox{0.45}[0.45]{
				\includegraphics{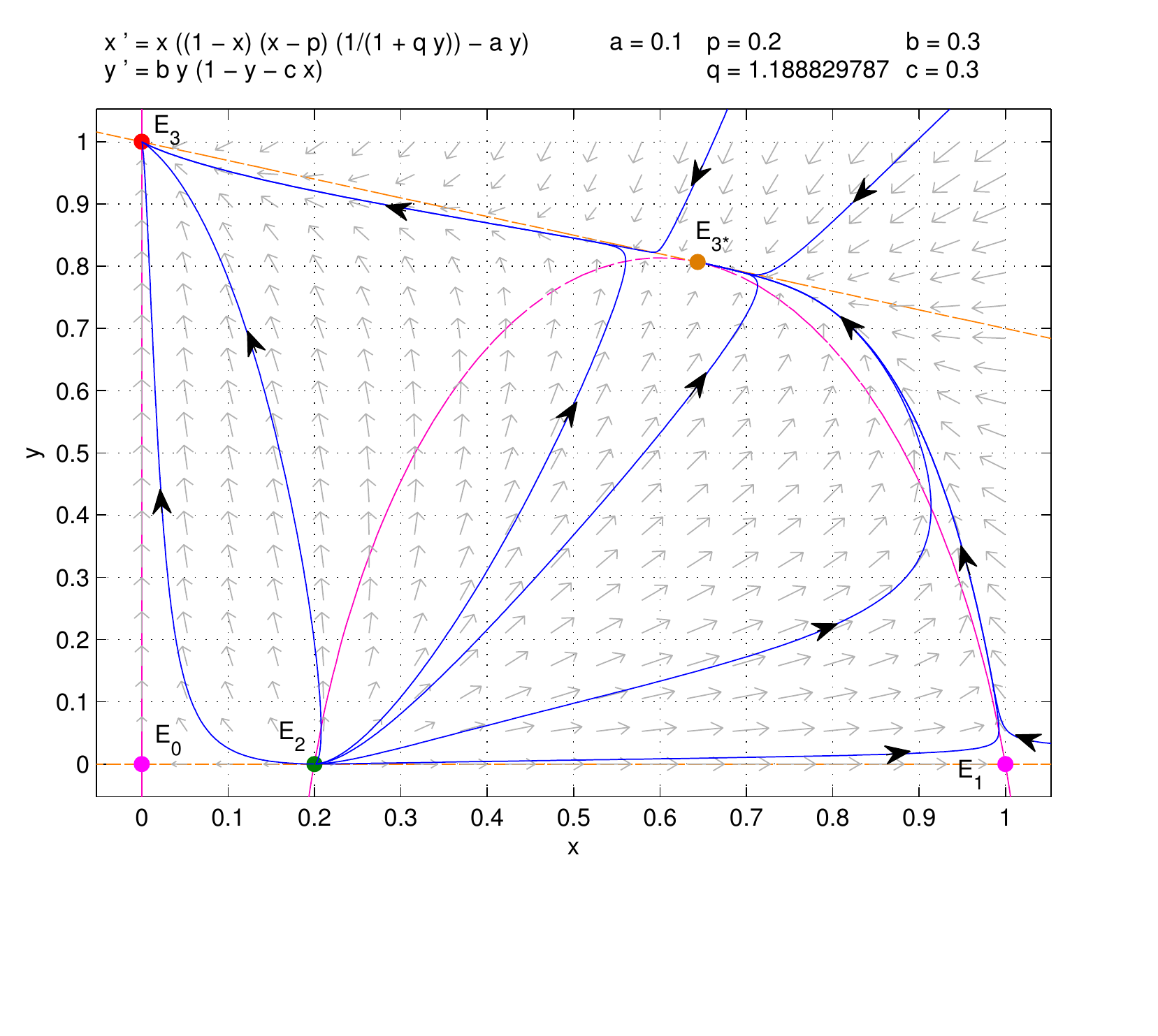}}}
		\caption{Red, green, pink, and orange points indicate stable node, unstable node (source), saddle, and saddle-node, respectively. The pink and orange lines represent isoclines $y^{(f)}$, $y^{(g)}$, respectively.}
		\label{Fig3}
	\end{figure}
	
	For $ca+2cp-p-1$, if $a = \displaystyle\frac{-2 c p +p +1}{c}\triangleq a_{**}$ is true, then substitute $q=q_*$ and $a=a_{**}$ into $A_1$, $A_2$, and $A_3$, respectively. It is tested that $A_1=(c - 1)(cp - 1)$, $A_2=0$, and $A_3=0$, which do not comply with the parameter assumptions in the previous section. From this, we know that $ca+2cp-p-1\ne 0$.
	
	According to the above analysis, we can determine the parameter $g02 \ne 0$. Hence by Theorem 7.1 in Chapter 2 in \cite{22}, $E_{3*}$ is a saddle-node. In summary, we derive the following theorem.
	\begin{theorem}\label{thm:stab_pos}
		The stability of the positive equilibria is shown below:
		\begin{enumerate}
			\item $E_{1*}$ is a saddle.
			\item $E_{2*}$ is a stable node.
			\item $E_{3*}$ is a saddle-node.
		\end{enumerate}
	\end{theorem}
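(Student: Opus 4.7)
The plan is to treat the three claims separately, exploiting the fact that at a positive equilibrium the Jacobian factors cleanly in the form
\[
J(E_*) = \begin{pmatrix} B_1 & B_2 \\ B_3 & B_4 \end{pmatrix},
\]
where the signs of $B_2, B_3, B_4$ are already fixed ($B_2, B_3, B_4 < 0$) from the model structure. The determinant of $J(E_*)$ can then be rewritten, via the implicit function theorem, as a multiple of the difference of the tangent slopes of the two isoclines $y^{(f)}$ and $y^{(g)}$ at $E_*$, which is exactly the quantity captured in \eqref{11}. So the whole sign analysis reduces to comparing slopes of isoclines at the intersection points.

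For $E_{1*}$ and $E_{2*}$, I would first differentiate $y_*^{(f)}(x_*)$ to see that it is increasing on $(p,\tfrac{p+1}{2})$ and decreasing on $(\tfrac{p+1}{2},1)$, while $y_*^{(g)}=1-cx_*$ is linear with negative slope. A qualitative picture (the same as Fig.~3) then forces the slope difference $\tfrac{dy_*^{(g)}}{dx_*}-\tfrac{dy_*^{(f)}}{dx_*}$ to be negative at $E_{1*}$ and positive at $E_{2*}$. Via \eqref{11} this gives $\det J(E_{1*})<0$ (so $E_{1*}$ is a saddle) and $\det J(E_{2*})>0$. For stability at $E_{2*}$, I would then use the inequality $B_1 B_4 > B_2 B_3$ together with the fixed signs $B_2,B_3,B_4<0$ to deduce $B_1<0$, whence $\operatorname{tr} J(E_{2*}) = B_1+B_4<0$, so $E_{2*}$ is a stable node.

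For $E_{3*}$, which appears when $\Delta=0$ and the two positive equilibria merge, the linearization is degenerate. My plan is the standard normal-form route: solve $\Delta=0$ for $q=q_*$, translate $E_{3*}$ to the origin by $(X,Y)=(x-x_3,y-y_3)$, Taylor expand to obtain system \eqref{12}, and then apply a linear change of variables that puts the linear part in the Jordan form $\operatorname{diag}(1,0)$ (as done to reach \eqref{13}), rescaling time by $\tau=m_1 t$. The relevant normal-form coefficient is $g_{02}$; Theorem 7.1 of Chapter 2 of \cite{22} says that if $g_{02}\ne 0$ then the origin is a saddle-node.

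The main obstacle is verifying $g_{02}\ne 0$. Its numerator contains the factors $(ca+2cp-p-1)$, $(ac^2+cp+c-2)^3$, $a b$, and $(a^2 c^2-p^2+2p-1)$, so I need to rule out that any of them vanish under the existence hypotheses of $E_{3*}$ coming from Theorem \ref{thm:exist} (which force $\Delta=0$, $2A_1+A_2>0$, and $0<c<1$). I would argue as follows: substituting $q=q_*$ into $2A_1+A_2>0$ and simplifying yields $\frac{(ac-p+1)(ac^2+cp+c-2)}{2cp-2}>0$, and since $0<c<1$ gives $2cp-2<0$, this pins down the sign $ac^2+cp+c-2<0$, so that factor is nonzero. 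For the factor $a^2c^2-p^2+2p-1$, setting it to zero forces $a=\tfrac{1-p}{c}=a_*$, and plugging $a=a_*$ into $2A_1+A_2$ gives $-2(p-1)q(c-1)\le 0$, contradicting $2A_1+A_2>0$. For $ca+2cp-p-1$, setting it to zero gives $a=a_{**}=\tfrac{-2cp+p+1}{c}$, and a direct substitution into $A_1,A_2,A_3$ together with $q=q_*$ produces $A_2=A_3=0$, contradicting the standing sign assumptions $A_2<0$, $A_3>0$. Hence $g_{02}\ne 0$, and the saddle-node conclusion for $E_{3*}$ follows. The bookkeeping of the substitutions and simplifications is the tedious but routine part; the conceptual step of reading off the signs of the slope differences from the isocline geometry is the hinge of the argument.
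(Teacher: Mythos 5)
Your proposal is correct and follows essentially the same route as the paper's proof: the slope-comparison of the isoclines via the determinant identity \eqref{11} for $E_{1*}$ and $E_{2*}$ (including the $B_1<0$, trace argument for the stable node), and the translation, Taylor expansion \eqref{12}, Jordan-form reduction to \eqref{13} with time rescaling, and the same factor-by-factor verification that $g_{02}\ne 0$ before invoking Theorem 7.1 of \cite{22} for the saddle-node at $E_{3*}$. No substantive differences to report.
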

	
	\section{Bifurcation Analysis}
	\subsection{Transcritical bifurcation}
	In proving Theorem 3.1 and 4.1, we found an interesting phenomenon: when $c=1$, the positive equilibrium point $E_{2*}$ will merge with the boundary equilibrium point $E_1$. Also, the stability of the boundary equilibrium point $E_1$ will change when the parameter $c$ is in different intervals $(0,1)$ and $(1,+\infty )$, respectively. From this, we conjecture that system \eqref{3} experiences a transcritical bifurcation around $E_1$ while noting $c$ as a bifurcation parameter(Fig. 4). We proceed to a rigorous proof below.
	
	\begin{theorem}
		System \eqref{3} undergoes a transcritical bifurcation around $E_1$ at the bifurcation parameter threshold $c_{TR} = 1$ when $p\ne 1-a$.
	\end{theorem}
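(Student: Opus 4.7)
The plan is to apply Sotomayor's theorem for transcritical bifurcation to system \eqref{3} at the equilibrium $E_1 = (1, 0)$ with bifurcation parameter $c$ crossing $c_{TR} = 1$. First I would recall from the proof of Theorem 4.1 that
\[
J(E_1)\Big|_{c = 1} = \begin{bmatrix} p - 1 & -a \\ 0 & 0 \end{bmatrix},
\]
which carries a simple zero eigenvalue together with the strictly negative eigenvalue $p - 1$, giving the defining degeneracy needed by Sotomayor's test. A direct calculation then yields, up to scaling, the right null vector $V = (a,\, p-1)^{T}$ of $J(E_1)|_{c=1}$ and the left null vector $W = (0,\, 1)^{T}$ of its transpose. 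Because $W$ has a zero first component, only the second component of $\mathcal F = (F, G)^{T}$ contributes to the transversality checks that follow, which significantly simplifies the bookkeeping.

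Next I would verify Sotomayor's three transversality conditions for the one-parameter family $\mathcal F(x, y; c)$ associated with \eqref{3}. The first condition, $W^{T} \mathcal F_c(E_1; 1) = 0$, is immediate because $\mathcal F_c = (0,\, -bxy)^{T}$ vanishes along $y = 0$. For the second, the mixed derivative $D\mathcal F_c$ evaluated at $(E_1; 1)$ is the diagonal matrix with entries $0$ and $-b$, whence $W^{T} [D\mathcal F_c\, V] = b(1-p) > 0$. For the third, since the quadratic part of $G = by(1-y-cx)$ is $-by^{2} - bc\,xy$, contracting the Hessian twice with $V$ and pairing with $W$ yields
\[
W^{T}\bigl[D^{2} \mathcal F(E_1; 1)(V, V)\bigr] = 2b(1-p)(a + p - 1),
\]
which is nonzero precisely when $p \neq 1 - a$, exactly the hypothesis of the theorem.

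With all three transversality conditions satisfied, Sotomayor's theorem yields the transcritical bifurcation at $E_1$ as $c$ passes through $c_{TR} = 1$. The only genuine obstacle is correctly bookkeeping the directional second derivative in condition (iii); the sign and nonvanishing of that resulting scalar is exactly what separates the transcritical scenario from the degenerate case $p = 1 - a$, which corresponds to the nonhyperbolic saddle identified in Theorem 4.1(3)(c) and would require a higher-order normal-form analysis outside the scope of the present statement. The consistency of the excluded parameter value with the earlier degeneracy analysis serves as an internal check on the computation.
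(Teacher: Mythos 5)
Your proposal is correct and follows essentially the same route as the paper: both apply Sotomayor's theorem at $E_1$ with bifurcation parameter $c_{TR}=1$, using the left null vector $(0,1)^T$ and a right null vector that is a scalar multiple of the paper's $\left(\frac{a}{p-1},1\right)^T$, and both reduce the transversality checks to $W^{T}\mathcal F_c=0$, $W^{T}[D\mathcal F_c V]\propto b\neq 0$, and $W^{T}[D^{2}\mathcal F(V,V)]\propto b(a+p-1)\neq 0$, which is exactly where the hypothesis $p\neq 1-a$ enters. The quantities you obtain differ from the paper's only by the harmless factor $(p-1)^2$ coming from the eigenvector scaling.
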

	\begin{proof}
		From Theorem 4.1, we know that the eigenvalues of $J(E_1)$ are $\lambda_{11}=-1+p$, $\lambda_{21}=0$ if $c=c_{TR}=1$. Now, let $\mathbf{V_1} = (v_1, v_2)^T$ and $\mathbf{W_1} = (w_1, w_2)^T$ be the eigenvectors of $J(E_1)$ and $J^T(E_1)$ corresponding to $\lambda_{21}=0$, respectively. By calculating, we obtain
		\begin{equation}
			\mathbf{V_1}=\begin{bmatrix}
				v_1\\v_2
				
			\end{bmatrix} =\begin{bmatrix}
				\displaystyle\frac{a}{p-1} \\1
				
			\end{bmatrix},\mathbf{W_1}=\begin{bmatrix}
				w_1\\w_2
				
			\end{bmatrix} =\begin{bmatrix}
				0 \\1
			\end{bmatrix}.
			\label{14}
		\end{equation}
		We assume that
		$$
		Q(x,y)=\begin{bmatrix}
			F(x,y) \\G(x,y)
			
		\end{bmatrix}=\begin{bmatrix}
			x \left [(1-x)(x-p)\displaystyle\frac{1}{1+qy}-ay \right ]\vspace{2ex}\\by\left (1-y-cx \right )
		\end{bmatrix}.
		$$
		Furthermore,
		$$
		Q_c(E_1;c_{TR})=\begin{bmatrix}
			\displaystyle\frac{\partial F}{\partial c} \vspace{2ex}\\\displaystyle\frac{\partial G}{\partial c}
			
		\end{bmatrix}=\begin{bmatrix}
			0\\0
			
		\end{bmatrix},
		$$
		\begin{figure}[H]
			\centering
			\vspace{-0cm}
			\setlength{\abovecaptionskip}{-0pt}
			\subfigcapskip=-30pt
			\subfigure[]
			{\scalebox{0.45}[0.45]{
					\includegraphics{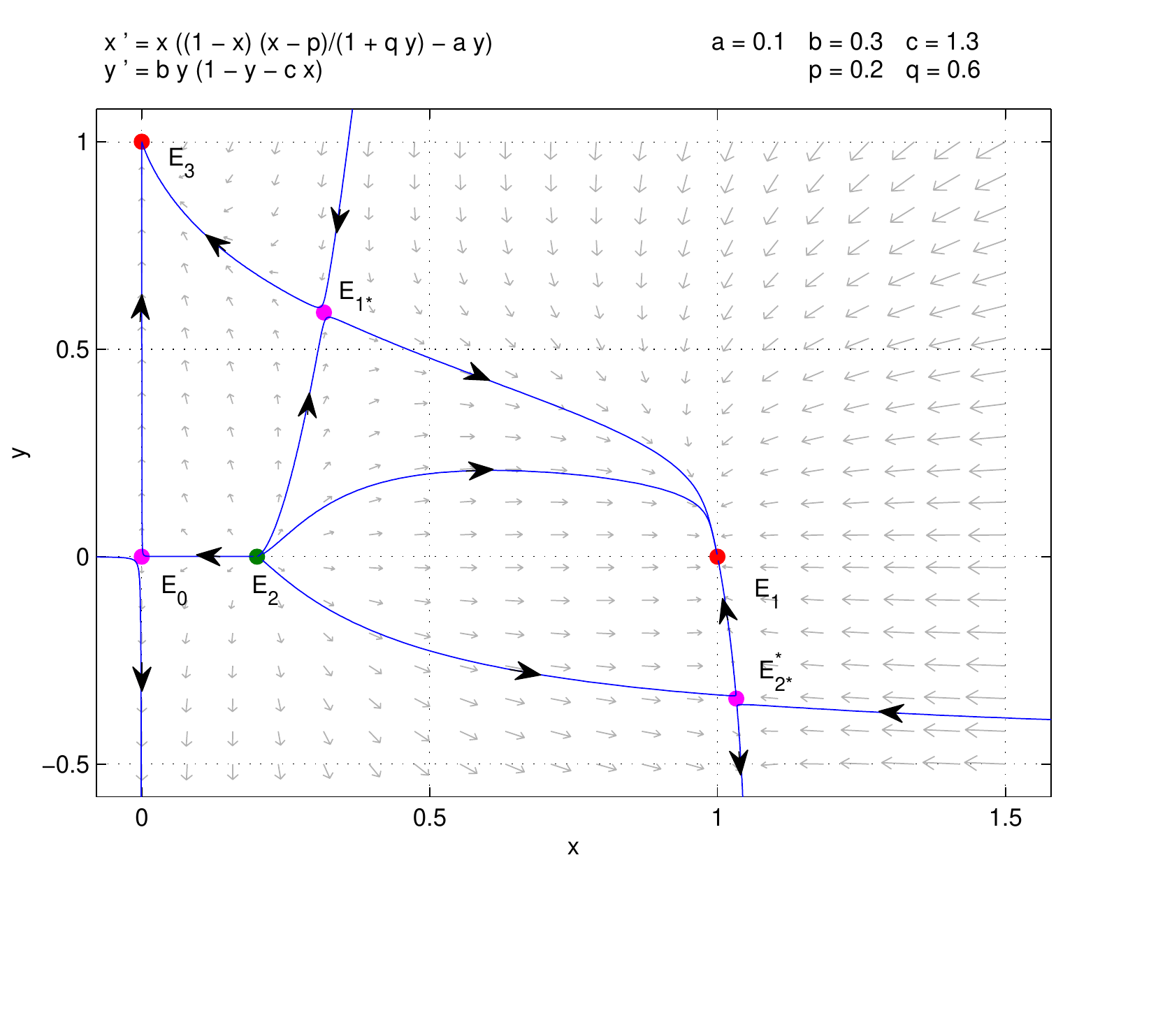}}}
			\subfigure[]
			{\scalebox{0.45}[0.45]{
					\includegraphics{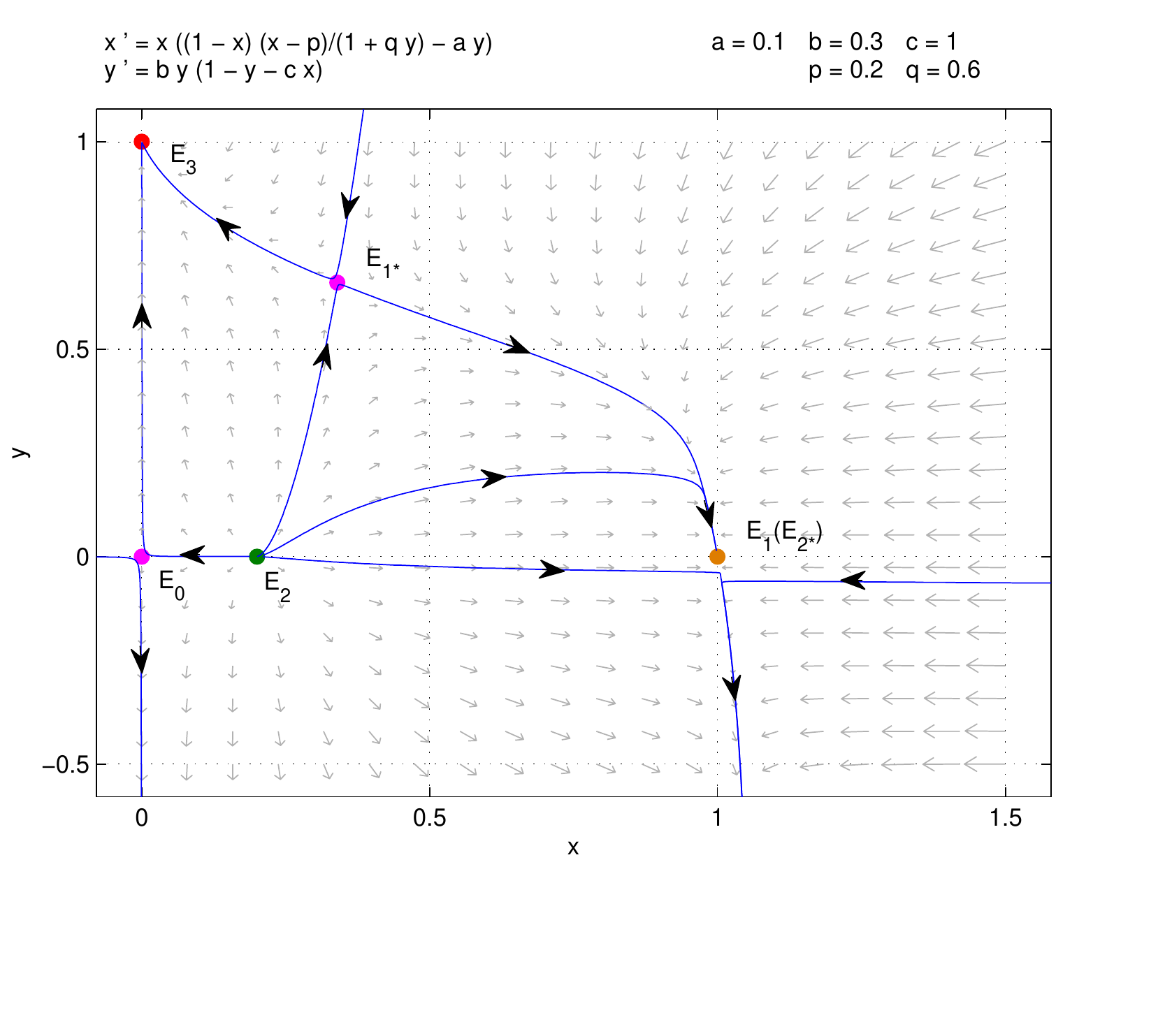}}}
			\subfigure[]
			{\scalebox{0.45}[0.45]{
					\includegraphics{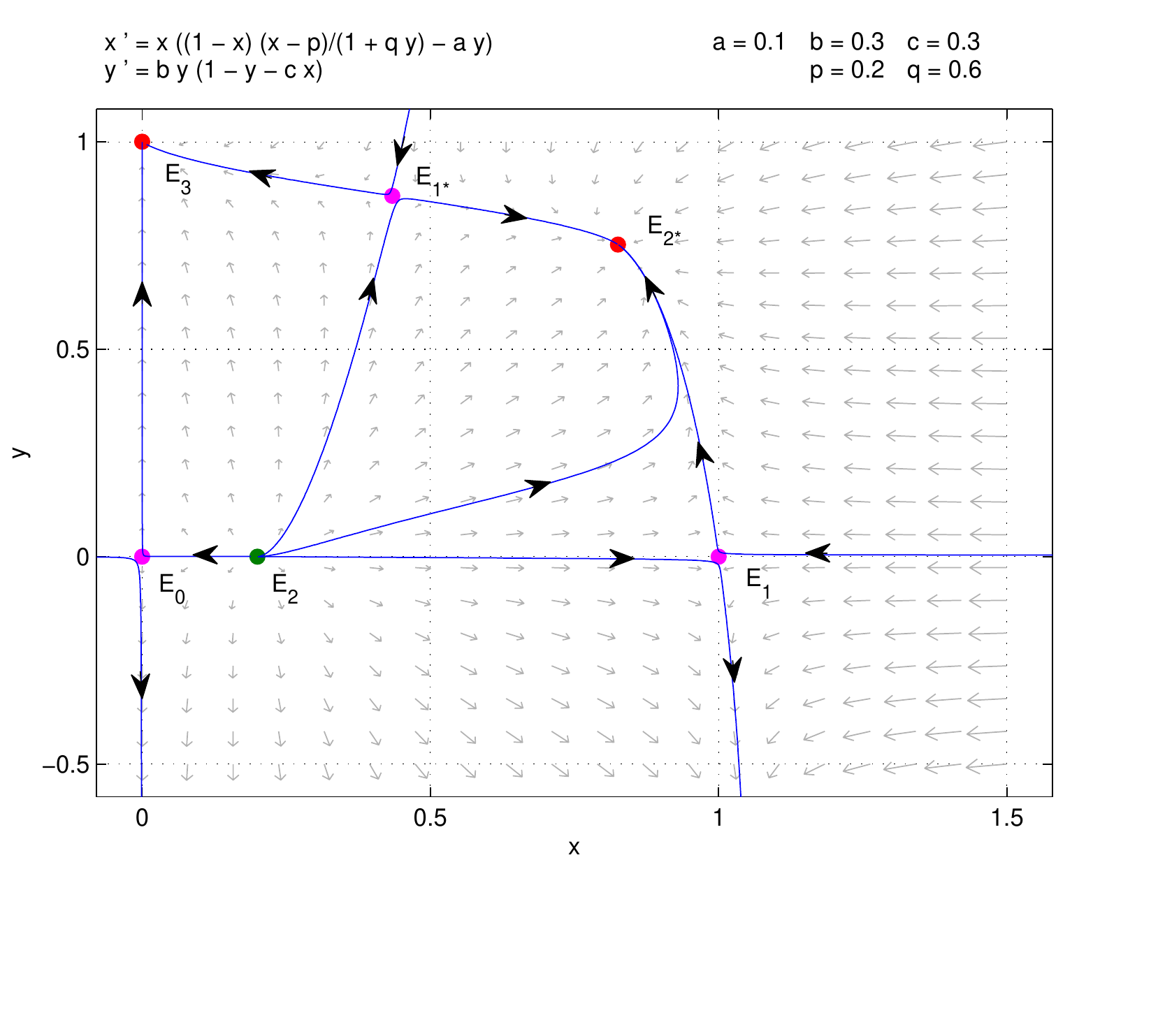}}}
			\caption{Red, green, pink, and orange points indicate stable node, unstable node (source), saddle, and saddle-node, respectively. System \eqref{3} undergoes a transcritical bifurcation around $E_1$.}
			\label{Fig4}
		\end{figure}
		$$
		\left. DQ_c(E_1;c_{TR})\mathbf{V_1}=\left[\begin{array}{cc}
			0 & 0
			\\
			-b y & -b x
		\end{array}\right] \right |_{(E_1;c_{TR})}\begin{bmatrix}
			\displaystyle\frac{a}{p-1} \\1
			
		\end{bmatrix}=\begin{bmatrix}
			0 \\-b
		\end{bmatrix},
		$$
		$$
		\left.D^2Q(E_1;c_{TR})(\mathbf{V_1}, \mathbf{V_1})=\begin{bmatrix}
			\displaystyle\frac{\partial^2F}{\partial x^2}v_1^2+ 2\displaystyle\frac{\partial^2F}{\partial x\partial y}v_1v_2+ \displaystyle\frac{\partial^2F}{\partial y^2}v^2_2\vspace{2ex}\\
			\displaystyle\frac{\partial^2G}{\partial x^2}v_1^2+ 2\displaystyle\frac{\partial^2G}{\partial x\partial y}v_1v_2+ \displaystyle\frac{\partial^2G}{\partial y^2}v^2_2
		\end{bmatrix}\right|_{(E_1;c_{TR})}=\begin{bmatrix}
			-\displaystyle \frac{2 \left(\left(p -1\right)^{2} q +a \right) a}{\left(p -1\right)^{2}} \vspace{2ex}\\\displaystyle\frac{2 b \left(a +p -1\right)}{1-p}
		\end{bmatrix}.
		$$
		Thus, we have
		$$
		\mathbf{W_1}^TQ_c(E_1;c_{TR})=0,
		$$
		$$
		\mathbf{W_1}^T\left[ DQ_c(E_1;c_{TR})\mathbf{V_1} \right]=-b\ne0,
		$$
		$$
		\mathbf{W_1}^T\left[ D^2Q(E_1;c_{TR})(\mathbf{V_1}, \mathbf{V_1}) \right]=\displaystyle\frac{2 b \left(a +p -1\right)}{1-p}\ne0.
		$$
		According to {\it Sotomayor's Theorem} \cite{23}, all the transversality conditions for system \eqref{3} to experience a transcritical bifurcation are satisfied, so system \eqref{3} undergoes a transcritical bifurcation around $E_1$ at the bifurcation parameter threshold $c_{TR} = 1$.
	\end{proof}
	
	\begin{figure}[H]
		\centering
		\vspace{-0cm}
		\setlength{\abovecaptionskip}{-0pt}
		\subfigcapskip=-30pt
		\subfigure[]
		{\scalebox{0.45}[0.45]{
				\includegraphics{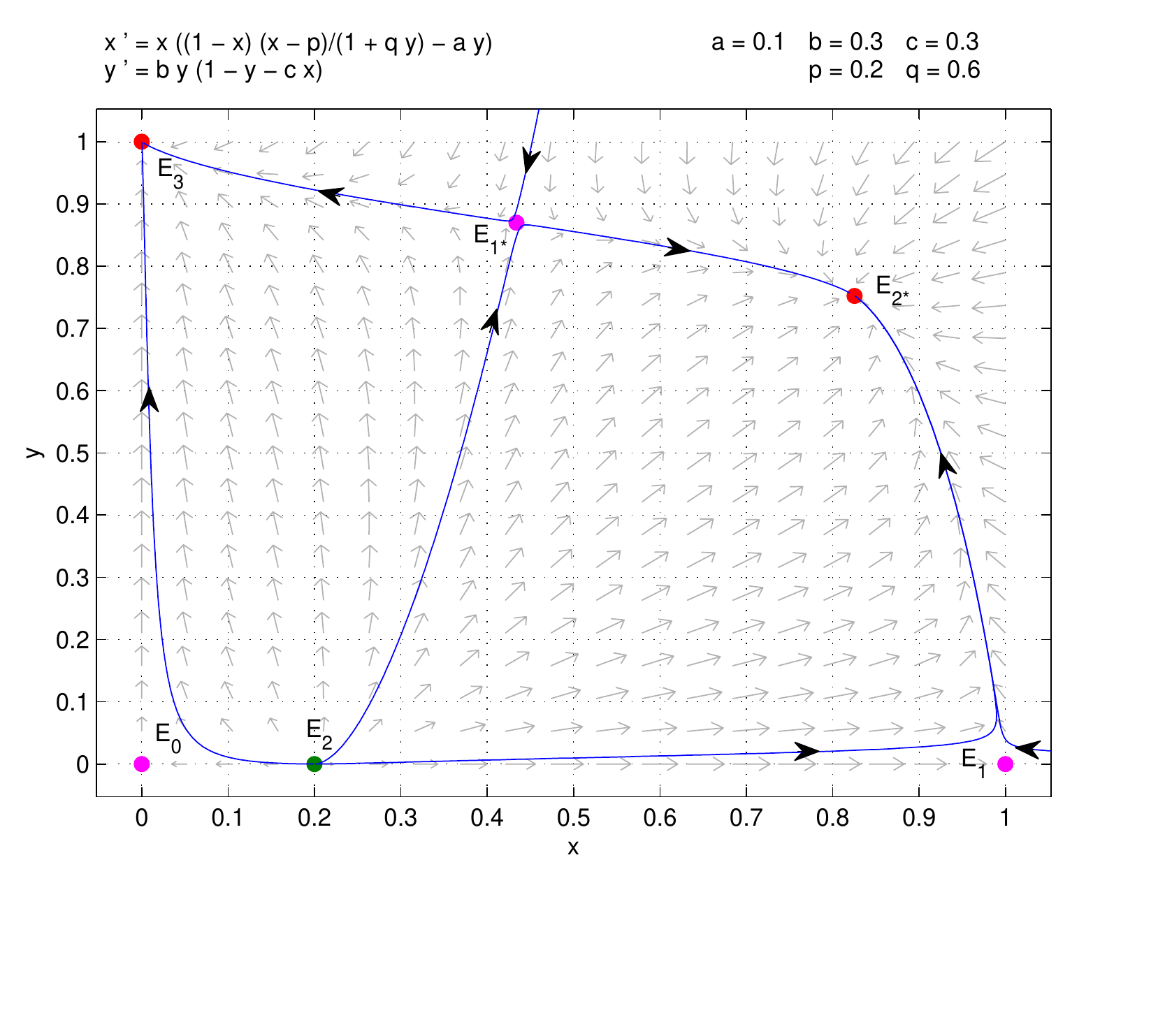}}}
		\subfigure[]
		{\scalebox{0.45}[0.45]{
				\includegraphics{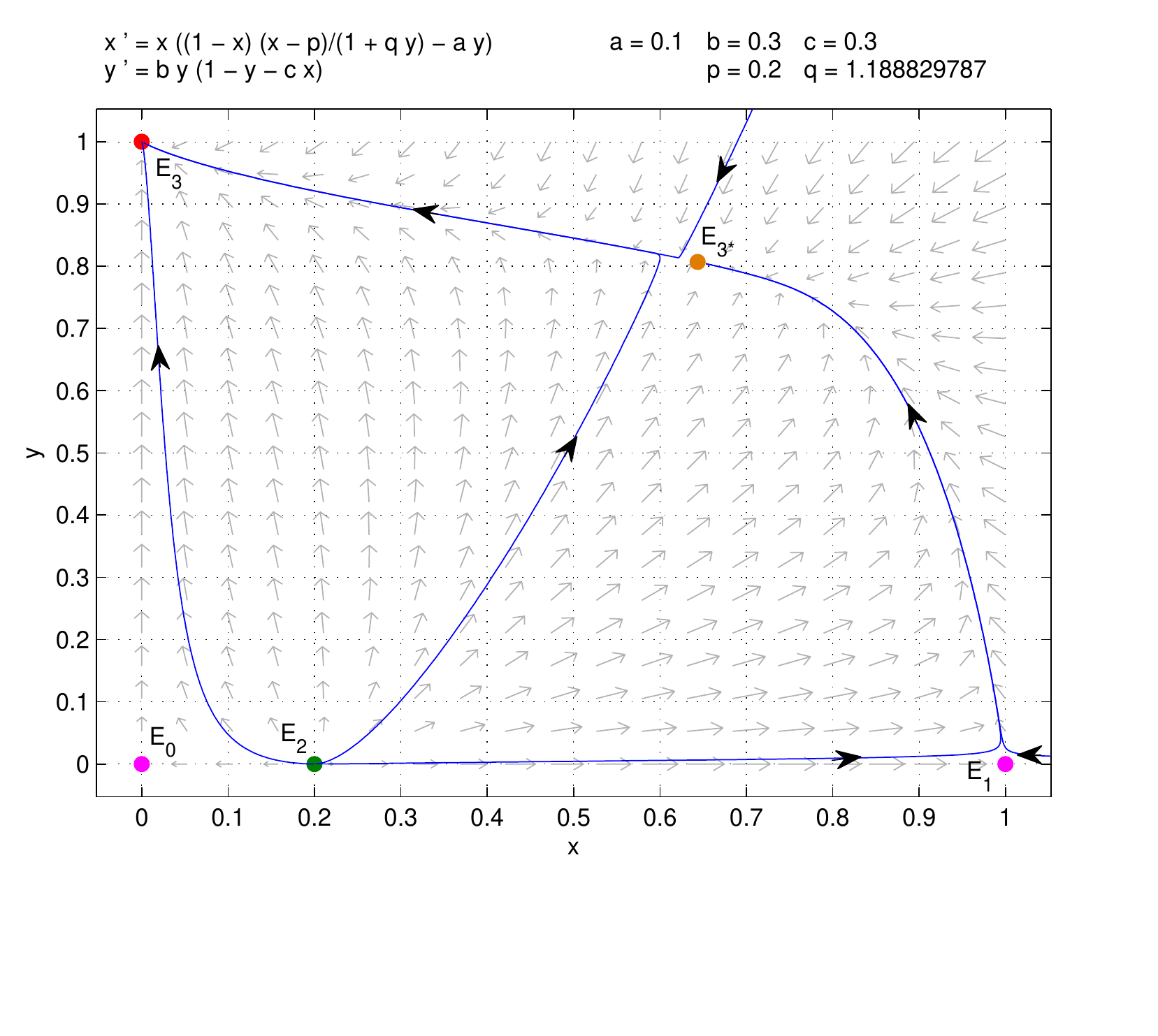}}}
		\subfigure[]
		{\scalebox{0.45}[0.45]{
				\includegraphics{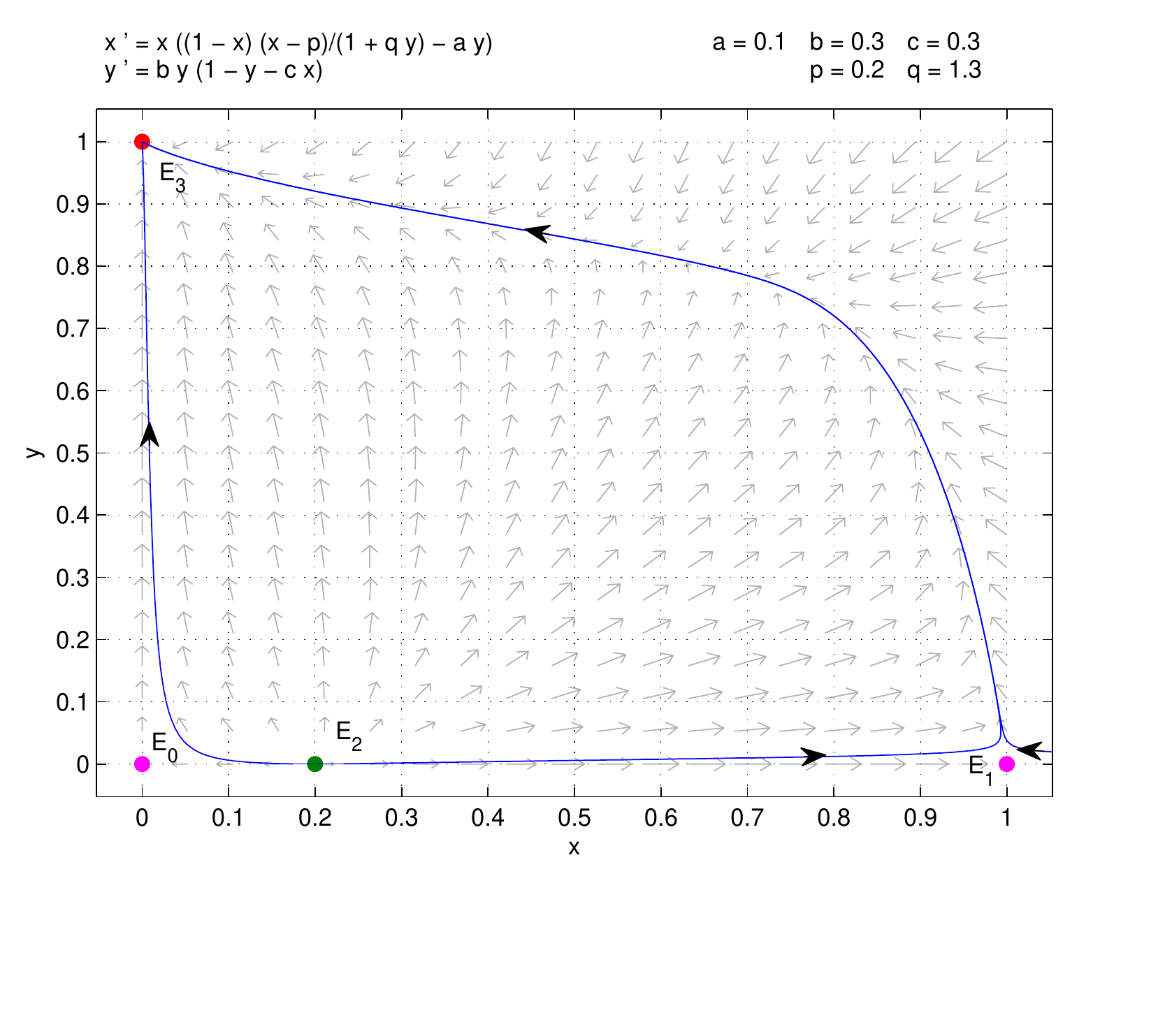}}}
		\caption{Red, green, pink, and orange points indicate stable node, unstable node (source), saddle, and saddle-node, respectively. System \eqref{3} undergoes a saddle-node bifurcation around $E_{3*}$.}
		\label{Fig4}
	\end{figure}
	
	\subsection{Saddle-node bifurcation}
	Under the condition $2A_1+A_2>0$ and $0<c<1$, we note that when $\Delta<0$, $\Delta=0$, and $\Delta>0$, system \eqref{3} has 0, 1, and 2 positive equilibria, respectively. Therefore we consider system \eqref{3} undergoing a saddle-node bifurcation around the positive equilibrium point $E_{3*}$. We selected the fear effect parameter $q$ as the bifurcation parameter. By calculating $\Delta=0$, we obtain the bifurcation parameter threshold $q=\displaystyle\frac{a^{2} c^{2}+2 a c p +2 a c +p^{2}-4 a -2 p +1}{4 a \left(c^{2} p -c p -c +1\right)}\triangleq q_*$. Next, we use {\it Sotomayor's Theorem} \cite{23} to verify that the transversality conditions for saddle-node bifurcation are satisfied.
	\begin{theorem}
		System \eqref{3} undergoes a saddle-node bifurcation around $E_{3*}$ at the bifurcation parameter threshold $q_{SN} = q_*$ when $2A_1+A_2>0$ and $0<c<1$.
	\end{theorem}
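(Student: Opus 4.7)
The plan is to follow the same template as the transcritical bifurcation argument in Theorem 5.1, but verifying Sotomayor's transversality conditions for a saddle-node bifurcation instead of a transcritical one. Recall from the stability analysis leading to Theorem \ref{thm:stab_pos} that at $q = q_*$ the two positive equilibria $E_{1*}$ and $E_{2*}$ coalesce into $E_{3*}$ and $\det(J(E_{3*})) = 0$, so $J(E_{3*})$ has a simple zero eigenvalue. Since $\mathrm{tr}(J(E_{3*})) = B_1 + B_4$ with $B_4 = -b y_3 < 0$, the other eigenvalue is nonzero, so the Jacobian has a one-dimensional kernel as required.

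First I would compute right and left null eigenvectors $\mathbf{V_2}=(v_1,v_2)^T$ of $J(E_{3*})$ and $\mathbf{W_2}=(w_1,w_2)^T$ of $J^T(E_{3*})$, using the explicit entries $B_1,B_2,B_3,B_4$ from (4.1). Writing the linear system $J(E_{3*})\mathbf{V_2} = 0$, one can take $\mathbf{V_2} = (-B_2,B_1)^T = (-B_2,-B_4 B_3 / B_2)^T$ up to normalization (indeed, because $B_1 B_4 = B_2 B_3$ at a degenerate equilibrium), and similarly $\mathbf{W_2} = (-B_3,B_1)^T$. Both components of each vector are nonzero in light of the strict signs $B_2,B_3,B_4 < 0$ that were established in Section 4.2.

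Next I would verify the two Sotomayor conditions for a saddle-node. With $Q(x,y)$ as in the previous theorem and now $q$ playing the role of the bifurcation parameter, I compute
\begin{equation*}
Q_q(E_{3*};q_*) = \begin{bmatrix} -\dfrac{x_3 (1-x_3)(x_3-p) y_3}{(1+q_* y_3)^2} \\[1ex] 0 \end{bmatrix},
\end{equation*}
which has nonzero first entry since $p < x_3 < 1$ and $y_3 > 0$ for any positive equilibrium, so $\mathbf{W_2}^T Q_q(E_{3*};q_*) = -B_3 \cdot (\text{nonzero}) \neq 0$. Then I would compute $D^2 Q(E_{3*};q_*)(\mathbf{V_2},\mathbf{V_2})$ componentwise, which coincides up to relabeling with the expression that produced the coefficient $g_{02}$ in the normal-form computation just after (4.4). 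Indeed, $\mathbf{W_2}^T [D^2Q(\mathbf{V_2},\mathbf{V_2})]$ is (up to a nonzero scaling by the invertible change of coordinates and time rescaling $m_1$) precisely $g_{02}$, which was shown to be nonzero under exactly the hypotheses $2A_1+A_2 > 0$ and $0<c<1$ (via the three subcases $a^2c^2-p^2+2p-1 \neq 0$, $ca+2cp-p-1\neq 0$, and $ac^2+cp+c-2 < 0$).

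The main obstacle is the second transversality condition: verifying $\mathbf{W_2}^T [D^2 Q(E_{3*};q_*)(\mathbf{V_2},\mathbf{V_2})] \neq 0$ directly from the Hessians of $F$ and $G$ produces an unwieldy rational expression in $a,b,c,p$. Rather than redoing this by brute force, I would observe that this quantity is a nonzero scalar multiple of the coefficient $g_{02}$ already analyzed in the proof of Theorem \ref{thm:stab_pos}, because both arise from projecting the quadratic part of the vector field onto the center subspace using the same left null vector. Thus the non-degeneracy check is inherited from the saddle-node classification of $E_{3*}$, and the two Sotomayor conditions hold, so \textit{Sotomayor's Theorem} \cite{23} yields the claimed saddle-node bifurcation at $q_{SN}=q_*$.
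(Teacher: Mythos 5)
Your proposal is correct and follows the paper's own skeleton: Sotomayor's theorem with $q$ as the bifurcation parameter, a right null vector proportional to the paper's choice $(1,-c)^T$, a left null vector proportional to its $\mathbf{W_2}$, and the identical computation of $Q_q(E_{3*};q_*)$ giving the first transversality condition. Where you genuinely differ is the second (non-degeneracy) condition: the paper computes $D^2Q(E_{3*};q_*)(\mathbf{V_2},\mathbf{V_2})=(H,0)^T$ explicitly and asserts $H\neq 0$, whereas you deduce $\mathbf{W_2}^T D^2Q(E_{3*};q_*)(\mathbf{V_2},\mathbf{V_2})\neq 0$ from the normal-form coefficient $g_{02}\neq 0$ established in the proof of Theorem \ref{thm:stab_pos}. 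That linkage is legitimate, but it rests on facts you only gesture at and should state: the second column of the linear change of variables leading to \eqref{13} is a right null vector of $J(E_{3*})$, the second row of its inverse is a left null vector, and the time rescaling factor $m_1$ is the nonzero eigenvalue; hence $g_{02}$ is a nonzero multiple of $\mathbf{W_2}^T D^2Q(\mathbf{V_2},\mathbf{V_2})/\bigl(2\,\mathbf{W_2}^T\mathbf{V_2}\bigr)$, and since $\mathrm{tr}(J(E_{3*}))=B_1+B_4<0$ the zero eigenvalue is simple, $\mathbf{W_2}^T\mathbf{V_2}=B_2B_3+B_1^2>0$ and $m_1\neq 0$, so nonvanishing of $g_{02}$ is indeed equivalent to the Sotomayor condition under exactly the hypotheses $2A_1+A_2>0$, $0<c<1$. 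What your route buys is avoiding a second pass through the unwieldy Hessian algebra (the paper's $H\neq 0$ is essentially asserted, and its ultimate justification is the same algebra that shows $g_{02}\neq 0$), and it makes explicit that the saddle-node character of $E_{3*}$ and the bifurcation non-degeneracy are one and the same fact; the cost is the extra normalization bookkeeping above. One inconsequential slip: from $B_1B_4=B_2B_3$ your rewriting of the right null vector should be $(-B_2,\,B_2B_3/B_4)^T$, not $(-B_2,\,-B_4B_3/B_2)^T$; the vector $(-B_2,B_1)^T$ you actually use is correct.
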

	\begin{proof}
		By \eqref{10}, we know that the Jacobi matrix of the positive equilibrium point $E_{3*}$ can be expressed in the following form, and one of the eigenvalues is $\lambda = 0$.
		$$
		\left.J(E_{3*})=\left[\begin{array}{cc}
			x \left(-\displaystyle\frac{x -p}{q y +1}+\frac{1-x}{q y +1}\right) & x \left(-\displaystyle\frac{\left(1-x \right) \left(x -p \right) q}{\left(q y +1\right)^{2}}-a \right)
			\vspace{2ex}\\
			-b y c & -b y
		\end{array}\right]\right|_{(E_{3*};q_{SN})}
		$$
		Now, let $\mathbf{V_2} = (v_3, v_4)^T$ and $\mathbf{W_2} = (w_3, w_4)^T$ be the eigenvectors of $J(E_{3*})$ and $J^T(E_{3*})$ corresponding to $\lambda=0$, respectively. By calculating, we obtain
		\begin{equation}
			\mathbf{V_2}=\begin{bmatrix}
				v_3\\v_4
				
			\end{bmatrix} =\begin{bmatrix}
				1 \\-c
				
			\end{bmatrix},\mathbf{W_2}=\begin{bmatrix}
				w_3\\w_4
				
			\end{bmatrix} =\begin{bmatrix}
				1\vspace{2ex} \\\displaystyle\frac{x \left(-2 x +p +1\right)}{\left(c q x -q -1\right) b c \left(c x -1\right)}
			\end{bmatrix}.
			\label{15}
		\end{equation}
		Furthermore,
		$$
		Q_q(E_{3*};q_{SN})=\begin{bmatrix}
			\displaystyle\frac{\partial F}{\partial q} \vspace{2ex}\\\displaystyle\frac{\partial G}{\partial q}
			
		\end{bmatrix}=\begin{bmatrix}
			-\displaystyle\frac{x_3 \left(1-x_3 \right) \left(x_3 -p \right) y_3}{\left(q_* y_3 +1\right)^{2}}\vspace{2ex}\\0
		\end{bmatrix},
		$$
		$$
		\left.D^2Q(E_{3*};q_{SN})(\mathbf{V_2}, \mathbf{V_2})=\begin{bmatrix}
			\displaystyle\frac{\partial^2F}{\partial x^2}v_3^2+ 2\displaystyle\frac{\partial^2F}{\partial x\partial y}v_3v_4+ \displaystyle\frac{\partial^2F}{\partial y^2}v^2_4\vspace{2ex}\\
			\displaystyle\frac{\partial^2G}{\partial x^2}v_3^2+ 2\displaystyle\frac{\partial^2G}{\partial x\partial y}v_3v_4+ \displaystyle\frac{\partial^2G}{\partial y^2}v^2_4
		\end{bmatrix}\right|_{(E_{3*};q_{SN})}=\begin{bmatrix}
			H \vspace{2ex}\\0
		\end{bmatrix},
		$$
		where
		$$
		H=\displaystyle\frac{-6 x_3 +2 p +2}{q_* y_3 +1}+\displaystyle\frac{2 c \left(a \,q_*^{2} y^{2}+\left(2 a y_3 -3 x_3^{2}+\left(2 p +2\right) x_3 -p \right) q_* +a \right)}{\left(q_* y_3 +1\right)^{2}}+\displaystyle\frac{2 c^{2} x_3 \left(1-x_3 \right) \left(x_3 -p \right) q_*^{2}}{\left(q_* y_3 +1\right)^{3}}.
		$$
		Thus, we have
		$$
		\mathbf{W_2}^TQ_q(E_{3*};q_{SN})= -\displaystyle\frac{x_3 \left(1-x_3 \right) \left(x_3 -p \right) y_3}{\left(q_* y_3 +1\right)^{2}}\ne0,
		$$
		$$
		\mathbf{W_2}^T\left[ D^2Q(E_{3*};q_{SN})(\mathbf{V_2}, \mathbf{V_2}) \right]=H\ne0.
		$$
		According to {\it Sotomayor's Theorem} \cite{23}, all the transversality conditions for system \eqref{3} to experience a saddle-node bifurcation are satisfied, so system \eqref{3} undergoes a saddle-node bifurcation around $E_{3*}$ at the bifurcation parameter threshold $q_{SN} = q_*$.
	\end{proof}
	
	\begin{remark}
		This section discusses all possible bifurcations of system \eqref{3}. The above analysis demonstrates that the value of the Allee effect parameter $p$ determines whether a transcritical bifurcation of system \eqref{3} is possible. Moreover, the variation of the fear effect parameter $q$ causes a saddle-node bifurcation of system \eqref{3}. Thus we can determine that both Allee and fear effect lead to complex dynamics in the classical Lotka-Volterra competition model.
	\end{remark}

	\section{The PDE Case}
	
	\subsubsection{Notations and preliminary observations}
	\label{notat}
	We go over several preliminaries that will enable us to prove global existence of solutions to \eqref{PDEmodel}. 
	To this end it suffices to derive uniform estimate on the $\mathbb{L}^{p}$ norms of the R.H.S. of \eqref{PDEmodel}, for some $p > \frac{n}{2}$. 
	Classical theory will then yield global existence, \cite{henry}.
	The usual norms in spaces $\mathbb{L}^{p}(\Omega )$, $\mathbb{L}^{\infty
	}(\Omega ) $ and $\mathbb{C}\left( \overline{\Omega }\right) $ are
	respectively denoted by
	
	\begin{equation}
		\label{(2.2)}
		\left\| u\right\| _{p}^{p} = 
		\int_{\Omega }\left| u(x)\right|^{p}dx, \ \left\| u\right\| _{\infty }\text{=}\underset{x\in \Omega }{ess \sup}\left|
		u(x)\right| .  
	\end{equation}
	
	To this end, we use standard techniques \cite{ Morgan89}. We first recall classical results guaranteeing non-negativity of solutions, local and global existence \cite{P10, Morgan89}:
	
	\begin{lemma}\label{lem:class1}
		Let us consider the following $m\times m$ - reaction diffusion system: for all $i=1,...,m,$ 
		\begin{equation}
			\label{eq:class1}
			\partial_t u_i-d_i\Delta u_i=f_i(u_1,...,u_m)~in~ \mathbb{R}_+\times \Omega,~ \partial_\nu u_i=0~ \text{on}~ \partial \Omega, u_i(0)=u_{i0},
		\end{equation}
		where $d_i \in(0,+\infty)$, $f=(f_1,...,f_m):\mathbb{R}^m \rightarrow \mathbb{R}^m$ is $C^1(\Omega)$ and $u_{i0}\in L^{\infty}(\Omega)$. Then there exists a $T>0$ and a unique classical solution of \textcolor{blue}{(\ref{eq:class1})} on $[0,T).$ If $T^*$ denotes the greatest of these $T's$, then 
		\begin{equation*}
			\Bigg[\sup_{t \in [0,T^*),1\leq i\leq m} ||u_i(t)||_{L^{\infty}(\Omega)} < +\infty \Bigg] \implies [T^*=+\infty].
		\end{equation*}
		If the nonlinearity $(f_i)_{1\leq i\leq m}$ is moreover quasi-positive, which means 
		$$\forall i=1,..., m,~~\forall u_1,..., u_m \geq 0,~~f_i(u_1,...,u_{i-1}, 0, u_{i+1}, ..., u_m)\geq 0,$$
		then $$[\forall i=1,..., m, u_{i0}\geq 0]\implies [\forall i=1,...,m,~ \forall t\in [0,T^*), u_i(t)\geq 0].$$
	\end{lemma}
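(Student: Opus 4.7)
The plan is a standard two-step argument: contraction mapping for local existence and continuation, then an invariance principle for the non-negativity claim. For local existence I recast \eqref{eq:class1} in mild form via the Neumann heat semigroups $\{S_i(t)\}_{t\geq 0}$ on $L^\infty(\Omega)$, writing
\[
u_i(t) = S_i(t)\, u_{i0} + \int_0^t S_i(t-s)\, f_i(u(s))\, ds.
\]
Because $f\in C^1$, it is Lipschitz on every ball of $\mathbb{R}^m$, so on the complete metric space $\{u\in C([0,T]; L^\infty(\Omega)^m) : \|u-u_0\|_\infty \leq R\}$ the right-hand side is a strict contraction for $T$ small enough, using $\|S_i(t)\|_{L^\infty\to L^\infty}\leq 1$. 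Banach fixed point yields a unique local mild solution; parabolic $L^p$ regularity together with Schauder estimates then upgrade it to a classical solution on its maximal interval of existence $[0,T^*)$. The blow-up criterion follows from the standard continuation argument: if $\sup_{[0,T^*)}\|u_i(t)\|_\infty<\infty$ and $T^*<\infty$ both held, then $u(T^*)$ would extend continuously in $L^\infty(\Omega)^m$ and the local theorem could be re-applied with initial datum $u(T^*)$, producing a strict extension past $T^*$ and contradicting its maximality.

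For the non-negativity statement under quasi-positivity, I would multiply the $i$-th equation by $-(u_i)_-$, integrate over $\Omega$, and use Neumann boundary conditions together with $\nabla u_i\cdot\nabla (u_i)_- = -|\nabla(u_i)_-|^2$ a.e.\ to discard the diffusion contribution with the correct sign. The reaction term $-\int_\Omega f_i(u)\,(u_i)_-\,dx$ needs a one-sided lower bound on $f_i$ over the set $\{u_i\leq 0\}$. Here quasi-positivity gives $f_i(u_1,\ldots,0,\ldots,u_m)\geq 0$, and the $C^1$ regularity of $f_i$ together with the local boundedness of $u$ produces $f_i(u)\geq -C\,|u_i|$ on $\{u_i\leq 0\}$, with $C$ depending only on $\|u\|_\infty$ on the subinterval under consideration. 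This yields the differential inequality
\[
\tfrac{1}{2}\tfrac{d}{dt}\int_\Omega (u_i)_-^2\,dx \leq C\int_\Omega (u_i)_-^2\,dx,
\]
and Gronwall together with $u_{i0}\geq 0$ forces $(u_i)_-\equiv 0$, hence $u_i\geq 0$ throughout $[0,T^*)$.

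The main obstacle is precisely this non-negativity step. Quasi-positivity is only a pointwise constraint along the coordinate hyperplane $\{u_i=0\}$, not on the whole half-space $\{u_i\leq 0\}$, so it does not by itself imply forward invariance of the non-negative orthant; the argument has to convert this boundary information into a pointwise lower bound via $C^1$ smoothness of $f$ and then close the loop with an $L^2$ Gronwall estimate. An equivalent alternative is to regularise by $f^\varepsilon_i(u):=f_i(u)+\varepsilon$, whereby the non-negative orthant is strictly invariant by the parabolic strong maximum principle, and then pass to $\varepsilon\to 0^+$ using continuous dependence of mild solutions on the nonlinearity. Either way, correctly splicing the invariance principle onto the mild-solution framework of the first step is where the bulk of the technical work lies; once this is in place, the other conclusions of the lemma follow from the by-now routine semigroup machinery in \cite{P10, Morgan89}.
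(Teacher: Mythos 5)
The paper never proves this lemma: it is recalled as a known classical result and attributed to the references cited immediately before it, so there is no in-paper argument to compare routes with; your proposal is essentially the standard proof from that literature (mild formulation plus Banach fixed point, continuation via the $L^\infty$ bound, then an invariance argument for the non-negative orthant). The existence/continuation part is fine modulo routine care: the Neumann heat semigroup is not strongly continuous on $L^\infty$ (one works with weak-$*$ continuity at $t=0$ or in $C(\overline{\Omega})$), and the extension past $T^*$ is cleanest if phrased as ``the local existence time depends only on $\|u(t)\|_\infty$'', so a uniform bound lets you restart and extend by a fixed amount, contradicting maximality.

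The genuine gap is in the non-negativity step, and it sits exactly where you located the difficulty, but your repair does not work as stated. Quasi-positivity asserts $f_i(u_1,\dots,u_{i-1},0,u_{i+1},\dots,u_m)\ge 0$ only when the \emph{other} coordinates are non-negative; if some $u_j$ with $j\ne i$ is negative, $f_i$ at $u_i=0$ can be strictly negative (take $m=2$, $f_1(u_1,u_2)=u_2$, which is quasi-positive, and $u_2=-1$), so no bound of the form $f_i(u)\ge -C|u_i|$ on $\{u_i\le 0\}$ follows, and your single-component Gronwall inequality does not close --- as written the argument is circular, since it implicitly assumes the non-negativity of the remaining components, which is what is being proved. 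The standard fix couples the components: on the bounded set where the classical solution lives on $[0,T]$, compare $f_i(u)$ with $f_i$ evaluated at the projected point $\big((u_1)_+,\dots,(u_{i-1})_+,0,(u_{i+1})_+,\dots,(u_m)_+\big)$; quasi-positivity makes the latter value $\ge 0$, and local Lipschitz continuity gives $f_i(u)\ge -L\sum_{j}(u_j)_-$ on $\{u_i\le 0\}$. Then $-\int_\Omega f_i(u)\,(u_i)_-\,dx\le L\sum_j\int_\Omega (u_j)_-\,(u_i)_-\,dx\le C\sum_j\int_\Omega (u_j)_-^2\,dx$, and Gronwall must be applied to the summed quantity $\sum_i\int_\Omega (u_i)_-^2\,dx$, which together with $u_{i0}\ge 0$ forces all negative parts to vanish on $[0,T^*)$. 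Your alternative route via $f_i^{\varepsilon}=f_i+\varepsilon$ and the strong maximum principle is a legitimate way around this, precisely because on the boundary of the orthant the other components are non-negative, so the strict inwardness provided by quasi-positivity plus $\varepsilon$ is available there; but the $L^2$ energy argument as you wrote it needs the coupled estimate to be correct.
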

	
	\begin{lemma}\label{lem:class2}
		Using the same notations and hypotheses as in Lemma \ref{lem:class1}, suppose moreover that $f$ has at most polynomial growth and that there exists $\mathbf{b}\in \mathbb{R}^m$ and a lower triangular invertible matrix $P$ with nonnegative entries such that  $$\forall r \in [0,+\infty)^m,~~~Pf(r)\leq \Bigg[1+ \sum_{i=1}^{m} r_i \Bigg]\mathbf{b}.$$
		Then, for $u_0 \in L^{\infty}(\Omega, \mathbb{R}_+^m),$ the system (\ref{eq:class1}) has a strong global solution.
	\end{lemma}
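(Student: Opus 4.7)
The plan is to appeal to the blow-up dichotomy already established in Lemma \ref{lem:class1}, so it will suffice to show that on any time interval $[0,T]$ on which the solution exists, each component $\|u_i(t)\|_{L^\infty(\Omega)}$ remains bounded. My argument will proceed in two layers: first an a priori $L^1$ estimate exploiting the triangular dissipative structure imposed by $P$, and then a bootstrap from $L^1$ to $L^\infty$ using the polynomial growth of $f$ together with heat-semigroup smoothing.

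First I would establish uniform $L^1(\Omega)$ control of each $u_i$. The key observation is that since $P$ is invertible, lower triangular, and has nonnegative entries, its diagonal entries are strictly positive, hence the row vector $\mathbf{1}^T P$ has strictly positive entries: $(\mathbf{1}^T P)_j = \sum_{i\geq j} P_{ij} \geq P_{jj} > 0$ for each $j$. Testing the system against this vector (that is, summing the PDEs with weights $\mathbf{1}^T P$), integrating over $\Omega$, and using the Neumann boundary conditions to kill the Laplacian terms, I obtain
\[
\frac{d}{dt}\int_\Omega \sum_j (\mathbf{1}^T P)_j\, u_j \, dx \;\leq\; (\mathbf{1}^T \mathbf{b})\Bigl(|\Omega| + \sum_{i=1}^m \int_\Omega u_i \, dx\Bigr).
\]
Since $\sum_j (\mathbf{1}^T P)_j u_j$ is equivalent to $\sum_j u_j$ up to positive multiplicative constants, a Gronwall argument gives $\sup_{t\in[0,T]} \|u_i(t)\|_{L^1(\Omega)} \leq C(T)$ for every $i$.

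Next I would bootstrap from $L^1$ to $L^\infty$, proceeding inductively through the rows of $P$. The first row of the hypothesis reads $P_{11} f_1(u) \leq (1+\sum_i u_i) b_1$; combined with the $L^1$ control just obtained and the polynomial growth hypothesis, this yields $L^1$ integrability of $f_1(u)$. A Pierre-type duality estimate for the scalar heat equation, iterated together with the smoothing effect of $e^{t d_1 \Delta}$, then upgrades $u_1$ from $L^1$ to $L^p$ for every $p < \infty$, and finally to $L^\infty$. Inductively, once $u_1,\ldots,u_{k-1}$ are known to lie in $L^\infty$, the $k$-th row $\sum_{j\leq k} P_{kj} f_j(u) \leq (1+\sum_i u_i) b_k$ together with $P_{kk} > 0$ lets me isolate $f_k(u)$ up to quantities already controlled in $L^\infty$, and the same duality-plus-semigroup argument promotes $u_k$ into $L^\infty$.

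The hard part will be this bootstrap step: the $L^1$ estimate is essentially automatic once one identifies $\mathbf{1}^T P$ as the correct test vector, but upgrading integrability requires careful interplay between the polynomial growth of $f$ and the regularizing action of the heat semigroup, and the induction only closes because the triangular structure of $P$ permits the components to be handled one at a time. Once uniform $L^\infty$ bounds on $[0,T]$ are established for every $i$, the dichotomy in Lemma \ref{lem:class1} forces $T^* = +\infty$, yielding the desired strong global solution for any initial datum in $L^\infty(\Omega,\mathbb{R}_+^m)$.
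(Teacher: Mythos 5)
You should first note that the paper does not actually prove this lemma: it is recalled verbatim from the literature (it is Theorem~3.5 of Pierre's survey, the reference cited as \cite{P10}, with the underlying work due to Morgan \cite{Morgan89}), so there is no in-paper proof to compare against; your attempt has to stand on its own. Its overall skeleton is the right one (reduce to an a priori $L^{\infty}$ bound via the blow-up alternative of Lemma~\ref{lem:class1}; use the positivity of $\mathbf{1}^{T}P$, which does follow from $P$ being lower triangular, invertible, with nonnegative entries, to get a uniform $L^{1}$ bound; then bootstrap through the rows of $P$), and the $L^{1}$ step is essentially correct. But the bootstrap, which you yourself identify as ``the hard part,'' contains a genuine gap as written. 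First, the claim that the first row plus the $L^{1}$ mass bound ``yields $L^{1}$ integrability of $f_{1}(u)$'' is unjustified: the row inequality only bounds $f_{1}(u)$ \emph{from above} by a multiple of $1+\sum_{i}u_{i}$; the modulus $|f_{1}(u)|$ is only controlled through the polynomial growth hypothesis, and a polynomial of degree $>1$ in $u$ is not in $L^{1}$ just because $u$ is. Second, even granting the upper bound, the source term is controlled only in $L^{1}(Q_{T})$, and heat-semigroup smoothing from an $L^{1}(Q_{T})$ source gives only a small gain of integrability (roughly $L^{q}$ for $q<(n+2)/n$), not ``$L^{p}$ for every $p<\infty$ and finally $L^{\infty}$.'' The proposed iteration cannot repair this because the upper bound involves $\sum_{i}u_{i}$ over \emph{all} components: improving $u_{1}$ requires improved integrability of $u_{2},\dots,u_{m}$, which at that stage are only in $L^{1}$, so the argument is circular. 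Third, the induction step is also flawed: to ``isolate'' $f_{k}(u)$ from row $k$ you must move the terms $P_{kj}f_{j}(u)$, $j<k$, to the other side, which requires \emph{lower} bounds on $f_{j}(u)$; these depend, through the polynomial growth bound, on all components including the not-yet-controlled $u_{k},\dots,u_{m}$, so knowing $u_{1},\dots,u_{k-1}\in L^{\infty}$ does not close the loop.

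The missing ingredient is precisely the quantitative duality machinery that makes the cited theorem nontrivial: one applies Pierre-type $L^{p}$ duality estimates not componentwise but to the weighted partial sums $W_{k}=\sum_{j\le k}P_{kj}u_{j}$, which satisfy $\partial_{t}W_{k}-\Delta(A_{k}W_{k})\le b_{k}\bigl(1+\sum_{i}u_{i}\bigr)$ with a merely measurable diffusion coefficient $A_{k}$ trapped between $\min_{j}d_{j}$ and $\max_{j}d_{j}$; these estimates give space-time $L^{p}$ control of the sums directly (starting from the classical $L^{2}$ duality bound), and a careful interpolation bootstrap that tracks the polynomial degree of $f$ then yields, row by row, integrability high enough to invoke parabolic $L^{p}$ regularity with $p>(n+2)/2$ and conclude $u_{k}\in L^{\infty}$. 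Without this (or an equivalent device), the passage from the $L^{1}$ mass bound to uniform $L^{\infty}$ bounds — the entire content of the lemma — is not established, so the proposal as it stands does not constitute a proof.
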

	
	Under these assumptions, the following local existence result is well known, see D. Henry \cite{henry}.
	
	\begin{theorem}
		\label{thm:class3}
		The system (\ref{eq:class1}) admits a unique, classical solution $(u,v)$ on $%
		[0,T_{\max }]\times \Omega $. If $T_{\max }<\infty $ then 
		\begin{equation}
			\underset{t\nearrow T_{\max }}{\lim }\Big\{ \left\Vert u(t,.)\right\Vert
			_{\infty }+\left\Vert v(t,.)\right\Vert _{\infty } \Big\} =\infty ,  
		\end{equation}%
		where $T_{\max }$ denotes the eventual blow-up time in $\mathbb{L}^{\infty }(\Omega ).$
	\end{theorem}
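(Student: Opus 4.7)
The plan is to treat the theorem as a standard application of analytic semigroup theory in the style of Henry \cite{henry}, adapting it to our Neumann setting. First I would rewrite the system \eqref{eq:class1} as an abstract semilinear Cauchy problem $u_t + Au = F(u)$, $u(0)=u_0$, on the Banach space $X = L^p(\Omega,\mathbb{R}^m)$ for some $p>n/2$, where $A = -D\Delta$ is the diagonal operator built from the Laplacians $-d_i\Delta$ equipped with homogeneous Neumann boundary conditions, and $F(u) = (f_1(u),\ldots,f_m(u))$. The key functional-analytic input is that each $-d_i\Delta$ with Neumann data is a sectorial operator on $L^p(\Omega)$ generating an analytic semigroup $\{e^{-tA}\}_{t\ge 0}$, and the associated fractional power spaces $X^\alpha = D(A^\alpha)$ embed into $C(\overline\Omega)$ for $\alpha$ sufficiently close to $1$ (with $2\alpha > n/p$).

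The next step is to set up the mild formulation via the variation of constants formula,
\begin{equation*}
u(t) = e^{-tA}u_0 + \int_0^t e^{-(t-s)A} F(u(s))\,ds,
\end{equation*}
and apply a Banach fixed-point argument on the complete metric space
\begin{equation*}
\mathcal{M}_{T,R} = \{u \in C([0,T]; X^\alpha) : \|u(t) - e^{-tA}u_0\|_{X^\alpha}\le R\}.
\end{equation*}
Since $f\in C^1$, the Nemytskii operator $F$ is locally Lipschitz from $X^\alpha$ into $X$, and combined with the standard smoothing estimate $\|A^\alpha e^{-tA}\|_{\mathcal{L}(X)}\le C t^{-\alpha}$ (valid because $A$ is sectorial) one obtains the contraction on a sufficiently small interval $[0,T]$. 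This yields a unique local mild solution, and parabolic regularity (bootstrapping via Schauder estimates, as $f$ is smooth) promotes it to a classical solution on $[0,T]\times\Omega$.

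For the maximal continuation, I would define $T_{\max}$ as the supremum of times to which the solution extends, and argue by contradiction: if $T_{\max}<\infty$ but $\limsup_{t\nearrow T_{\max}}(\|u(t,\cdot)\|_\infty+\|v(t,\cdot)\|_\infty)<\infty$, then the solution stays in a bounded subset of $X^\alpha$ up to $T_{\max}$, and applying the local existence theorem at a time $t_0$ close enough to $T_{\max}$ produces an extension beyond $T_{\max}$, contradicting maximality. This yields the blow-up alternative as stated.

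The main technical obstacle is choosing the right functional setting so that $F$ is Lipschitz on bounded sets of the chosen fractional power space and so that the $L^\infty$ norm controls continuation; once $p>n/2$ and $\alpha\in(n/(2p),1)$ are fixed, the Sobolev-type embedding $X^\alpha\hookrightarrow L^\infty(\Omega)$ makes the blow-up criterion sharp and compatible with the statement. Everything else (uniqueness, regularity up to the boundary with Neumann data, positivity preservation which is already covered by Lemma \ref{lem:class1}) is routine given the quasi-positivity structure of $f$ that our specific nonlinearities satisfy.
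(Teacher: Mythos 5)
Your proposal is correct and is essentially the same argument the paper relies on: the paper does not prove this statement itself but cites Henry's classical theory, and your outline (sectorial operator with Neumann Laplacians, mild solution via variation of constants, contraction mapping in a fractional power space $X^\alpha\hookrightarrow C(\overline\Omega)$ with $p>n/2$, bootstrapping to classical regularity, and the continuation/contradiction argument giving the $L^\infty$ blow-up alternative) is precisely that standard semigroup proof. No gaps beyond routine details.
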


	The next result follows from the application of standard theory \cite{kish88}.
	
	\begin{theorem}
		\label{thm:km1}
		Consider the reaction diffusion system (\ref{eq:class1}) . For spatially homogenous initial data $u_{0} \equiv c, v_{0} \equiv d$, with $c,d>0$, then the dynamics of (\ref{eq:class1}) and its resulting kinetic (ODE) system, when $d_{1}=d_{2}=0$ in (\ref{eq:class1}), are equivalent.
	\end{theorem}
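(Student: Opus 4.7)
The plan is to exploit uniqueness of classical solutions (Theorem \ref{thm:class3}) together with the elementary fact that a function constant in the spatial variable has vanishing Laplacian and automatically satisfies homogeneous Neumann boundary conditions. The strategy is to construct a candidate PDE solution from the ODE solution, then invoke uniqueness to identify it with the actual PDE solution.

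More concretely, I would first let $(\tilde{u}(t),\tilde{v}(t))$ denote the unique solution of the kinetic system obtained from \eqref{eq:class1} by setting $d_1 = d_2 = 0$, with initial data $\tilde{u}(0) = c$, $\tilde{v}(0) = d$; existence and uniqueness on a maximal interval follow from the standard Picard theory applied to the Lipschitz nonlinearity $f$. Next, I would define the spatially constant functions
\begin{equation}
U(t,x) := \tilde{u}(t), \qquad V(t,x) := \tilde{v}(t), \qquad (t,x) \in [0,T_{\max}) \times \Omega,
\end{equation}
and check directly that $(U,V)$ is a classical solution of the PDE system \eqref{eq:class1}: since $U,V$ are independent of $x$, one has $\Delta U = \Delta V = 0$ and $\partial_\nu U = \partial_\nu V = 0$ on $\partial\Omega$, so the equations reduce pointwise in $x$ to the ODE system, which $(\tilde{u},\tilde{v})$ satisfies by construction; the initial condition $U(0,x) = c \equiv u_0$, $V(0,x) = d \equiv v_0$ holds by assumption.

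Finally, I would invoke the uniqueness clause of Theorem \ref{thm:class3}: the PDE system admits at most one classical solution on $[0,T_{\max}] \times \Omega$ with the given initial data, hence the genuine PDE solution $(u,v)$ coincides with $(U,V)$ on their common interval of existence. Consequently $u(t,x) = \tilde{u}(t)$ and $v(t,x) = \tilde{v}(t)$ for all $(t,x)$, so the two dynamical systems have identical trajectories, proving the equivalence claim.

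There is no genuinely hard step in this argument, since the ``obstacle'' is only a careful bookkeeping one: one must verify that the maximal existence time $T_{\max}$ of the PDE coincides with that of the ODE, which follows from the blow-up criterion in Theorem \ref{thm:class3} combined with the observation that $\|U(t,\cdot)\|_\infty = |\tilde{u}(t)|$ and $\|V(t,\cdot)\|_\infty = |\tilde{v}(t)|$, so the $L^\infty$ norms blow up simultaneously in the two formulations.
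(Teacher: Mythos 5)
Your argument is correct: extending the ODE solution constantly in space, noting that $\Delta$ of a spatially constant function and its normal derivative both vanish, and then appealing to the uniqueness clause of Lemma \ref{lem:class1}/Theorem \ref{thm:class3} (plus the blow-up criterion to match maximal existence times) is exactly the standard reasoning behind this statement. The paper itself offers no written proof --- it simply asserts the result follows from standard theory with a citation --- so your construction-plus-uniqueness argument is precisely the fleshed-out version of what is being invoked, and no gap remains.
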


	Our objective now is to consider the case of a fear function that may be heterogeneous in space. A motivation for this comes from several ecological and sociological settings. For example it is very common for prey to be highly fearful closer to a predators lair, but less fearful in a region of refuge \cite{Zhang19}, or in regions of high density due to group defense \cite{Samsal20}. To these ends, it is conceivable that the fear coefficient $q$ is not a constant, but actually varies in the spatial domain $\Omega$, so $q=q(x)$, which could take different forms depending on the application at hand. This is also in line with the LOF concept \cite{Brown99}. Thus we consider the following spatially explicit version of \eqref{1}, with heterogeneous fear function $q(x)$, as well as the Allee effect, resulting in the following reaction diffusion system,
	
	\begin{equation}\label{PDEmodel}
		\left\{\begin{array}{l}
			u_t =d_1 \Delta u + u \left [(1-u)(u-p) \dfrac{1}{1+q(x) v}-av \right ], \quad x \in  \Omega, \vspace{2ex}\\
			v_t =d_2 \Delta v + \left (1-v-cu \right ) \quad x \in  \Omega, \vspace{2ex}\\
			\dfrac{\partial u}{\partial \nu} = \dfrac{\partial v}{\partial \nu} =0, \quad \text{on} \quad \partial \Omega. \vspace{2ex}\\
			u(x,0)=u_0(x)\equiv c >0, \quad v(x,0)=v_0(x) \equiv d>0,
		\end{array}\right.
	\end{equation}
	whee $\Omega \subset \mathbb{R}^n.$

	Furthermore, we impose the following restrictions on the fear function $q(x)$,
	
	\begin{align}\label{eq:as1}
		\begin{split}
			&(i) \quad q(x)  \in C^{1}(\Omega),
			\\
			&(ii) \quad q(x) \geq 0,
			\\
			& (iii)\quad  \mbox{If} \  q(x) \equiv 0 \ \mbox{on}  \ \Omega_{1} \subset \Omega, \ \mbox{then} \ |\Omega_{1}| = 0.
			\\
			& (iv)\quad  \mbox{If} \  q(x) \equiv 0 \ \mbox{on}  \ \cup^{n}_{i=1}\Omega_{i} \subset \Omega, \ \mbox{then} \ \Sigma^{n}_{i=1}|\Omega_{i}| = 0.
		\end{split}
	\end{align}
	
	\begin{remark}
		If $q(x) \equiv 0$ on $\Omega_{1} \subset \Omega$, with $|\Omega_{1}| > \delta > 0$, or $q(x) \equiv 0$ on $\cup^{n}_{i=1}\Omega_{i} \subset \Omega$, with $\Sigma^{n}_{i=1}|\Omega_{i}| > \delta > 0$, that is, on non-trivial parts of the domain, the analysis is notoriously difficult, as one now is dealing with a \emph{degenerate} problem. See \cite{Du02a, Du02b} for results on this problem. This case is not in the scope of the current manuscript. 
	\end{remark}

	Since the nonlinear right hand side of (\ref{eq:class1}) is continuously
	differentiable on $\mathbb{R}^{+}\times $ $\mathbb{R}^{+}$, then for any
	initial data in $\mathbb{C}\left( \overline{\Omega }\right) $ or $\mathbb{L}%
	^{p}(\Omega ),\;p\in \left( 1,+\infty \right) $, it is standard to 
	estimate the $\mathbb{L}^{p}-$norms of the solutions and thus deduce global existence. Standard theory will apply even in the case of a bonafide fear function $k(x)$, because due to our assumptions on the form of $k$,  standard comparison arguments will apply \cite{Gil77}. Thus applying the classical methods above, via Theorem \ref{thm:class3}, and Lemmas \ref{lem:class1}-\ref{lem:class2}, we can state the following lemmas:

	\begin{lemma}
		\label{lem:pos1}
		Consider the reaction diffusion system \eqref{PDEmodel}, for $q(x)$ such that the assumptions via \eqref{eq:as1} hold. Then solutions to \eqref{PDEmodel} are non-negative, as long as they initiate from positive initial conditions.
	\end{lemma}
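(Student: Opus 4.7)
The plan is to reduce \textbf{Lemma \ref{lem:pos1}} to the quasi-positivity framework recorded in Lemma \ref{lem:class1}. Write the reaction nonlinearities of \eqref{PDEmodel} as
\begin{equation*}
f_1(u,v;x) = u\left[(1-u)(u-p)\frac{1}{1+q(x)v} - av\right], \qquad f_2(u,v) = v(1-v-cu).
\end{equation*}
Under \eqref{eq:as1}(i)--(ii), $q \in C^{1}(\Omega)$ with $q(x) \geq 0$, so whenever $v \geq 0$ one has $1+q(x)v \geq 1 > 0$; hence the pair $(f_1,f_2)$ is $C^{1}$ on $\{(u,v) : v \geq 0\}\times \overline{\Omega}$. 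Local existence of a unique classical solution on some interval $[0,T^{*})$ is then furnished by Theorem \ref{thm:class3}.

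Next I would verify the quasi-positivity hypothesis of Lemma \ref{lem:class1}. Because $u$ factors out of $f_1$ and $v$ factors out of $f_2$, one has trivially
\begin{equation*}
f_1(0,v;x) = 0 \geq 0 \quad \text{for all } v \geq 0, \qquad f_2(u,0) = 0 \geq 0 \quad \text{for all } u \geq 0.
\end{equation*}
This is precisely the quasi-positivity condition. Combined with the homogeneous Neumann data and the positivity of the constant initial profiles $u_{0}\equiv c>0$, $v_{0}\equiv d>0$, Lemma \ref{lem:class1} then yields $u(x,t),v(x,t)\geq 0$ on $[0,T^{*})\times \Omega$.

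The only subtlety, and the place I would expect to have to be careful, is that $f_1$ is not manifestly Lipschitz (or even defined) on all of $\mathbb{R}^{2}$ because the denominator $1+q(x)v$ can vanish or change sign for sufficiently negative $v$. The standard remedy is to replace $v$ in that denominator by its positive part $v^{+}=\max\{v,0\}$, producing a globally $C^{1}$ extension $\widetilde{f}_{1}$ that coincides with $f_{1}$ on $\{v\geq 0\}$ and still satisfies $\widetilde{f}_{1}(0,v;x)=0$. Applying Lemma \ref{lem:class1} and Theorem \ref{thm:class3} to the extended system produces a non-negative classical solution, and non-negativity forces $\widetilde{f}_{1}\equiv f_{1}$ along this solution, so it is also the unique classical solution of the original system \eqref{PDEmodel}. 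Once this truncation-and-identification step is in place, the conclusion of \textbf{Lemma \ref{lem:pos1}} follows immediately; everything else is bookkeeping based on the cited classical theory.
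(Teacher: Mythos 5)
Your proposal is correct and follows essentially the same route as the paper, which simply invokes the quasi-positivity criterion of Lemma \ref{lem:class1} together with Theorem \ref{thm:class3} (the paper states the lemma without writing out the verification you supply, namely that $f_1(0,v;x)=0$ and $f_2(u,0)=0$ and that $1+q(x)v\geq 1$ on the nonnegative cone). Your truncation step handling negative $v$ is a sensible extra precaution the paper leaves implicit; just note that replacing $v$ by $v^{+}$ yields only a Lipschitz (not $C^{1}$) extension, so one should either use a smooth cutoff or observe that Lipschitz regularity already suffices for the cited local existence and comparison arguments.
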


	\begin{lemma}
		\label{lem:cl1}
		Consider the reaction diffusion system \eqref{PDEmodel}. For $q(x)$ such that the assumptions via \eqref{eq:as1} hold. The solutions to \eqref{PDEmodel} are classical. That is for $(u_{0},v_{0}) \in \mathbb{L}^{\infty }(\Omega )$,  $(u,v) \in C^{1}(0,T; C^{2}(\Omega))$, $\forall T$.
	\end{lemma}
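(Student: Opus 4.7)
The plan is to combine the local existence result (Theorem \ref{thm:class3}) with uniform $L^\infty$ a priori bounds on $(u,v)$ and then invoke parabolic Schauder regularity to upgrade to the claimed $C^1(0,T; C^2(\Omega))$ regularity. First I would apply Theorem \ref{thm:class3} to \eqref{PDEmodel}, noting that the right-hand side is $C^1$ in $(u,v)$ because the fear modulator $1/(1+q(x)v)$ is smooth for $v \geq 0$ and $q \in C^1(\Omega)$ by \eqref{eq:as1}(i). This produces a unique classical solution on a maximal interval $[0, T_{\max})$, and Lemma \ref{lem:pos1} guarantees $(u,v) \geq 0$ on this interval.

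Next I would derive a uniform $L^\infty$ bound component by component. For $v$, the sign $cuv \geq 0$ gives
$$v_t - d_2 \Delta v \leq v(1-v),$$
so the constant $\bar v := \max\{1, \|v_0\|_\infty\}$ is a spatially homogeneous supersolution of the corresponding logistic Neumann problem; the parabolic comparison principle yields $\|v(t,\cdot)\|_\infty \leq \bar v$ for all $t \in [0, T_{\max})$. For $u$, since $0 < p < 1$, one has $(1-u)(u-p) \leq 0$ whenever $u \geq 1$, and combined with $-auv \leq 0$ and the non-negativity of the fear multiplier this gives $u_t - d_1 \Delta u \leq 0$ on $\{u \geq 1\}$. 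Hence $\bar u := \max\{1, \|u_0\|_\infty\}$ is a supersolution and $\|u(t,\cdot)\|_\infty \leq \bar u$ on $[0, T_{\max})$. The $L^\infty$ bounds, together with the blow-up criterion in Theorem \ref{thm:class3}, force $T_{\max} = +\infty$. Finally, since $(u,v)$ is bounded and the nonlinearity is $C^1$ in $(u,v)$ (using $q \in C^1(\Omega)$), standard parabolic Schauder theory yields $(u,v) \in C^1(0,T; C^2(\Omega))$ for every $T > 0$.

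The main obstacle lies in the $L^\infty$ estimate for $u$: the multiplicative Allee nonlinearity coupled with the heterogeneous fear factor $1/(1+q(x)v)$ obstructs an immediate scalar comparison, because one cannot simply drop the $v$-coupling to a tractable logistic-type equation. The key observation resolving this is that the fear factor is a bounded non-negative multiplier and the Allee cubic $(1-u)(u-p)$ already has the correct (non-positive) sign above the carrying capacity $u=1$, so the elementary constant supersolution $\bar u$ works without needing any integral or energy estimate on $v$. Once this bound is in hand, the passage from $L^\infty$ bounds to $C^1(0,T; C^2(\Omega))$ regularity is routine via the semigroup/Schauder framework of Henry \cite{henry}.
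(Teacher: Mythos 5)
Your proposal is correct and follows essentially the same route the paper takes: local existence and the blow-up criterion from Theorem \ref{thm:class3}, non-negativity via quasi-positivity (Lemma \ref{lem:pos1}), uniform $\mathbb{L}^{\infty}$ bounds by comparison, and then classical parabolic regularity in the framework of \cite{henry}. The only difference is that you make explicit the constant supersolutions $\bar{u}=\max\{1,\|u_0\|_\infty\}$ and $\bar{v}=\max\{1,\|v_0\|_\infty\}$, which the paper leaves implicit in its appeal to ``standard comparison arguments,'' so no gap remains.
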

	
	Our goal in this section is to investigate the dynamics of \eqref{PDEmodel}. Herein we will use the comparison technique, and compare to the ODE cases of classical competition, or the constant fear function case, where the dynamics are well known. 
	
	\begin{remark}
		The analysis in this section are primarily focused on the choice of spatially homogenous (flat) initial data.
	\end{remark}

	Let's define some PDEs system,
	
	\begin{align}\label{eq:lv_model}
		\begin{split}
			\overline{u}_t &= d_1 \overline{u}_{xx} + \overline{u} \Big[ (1-\overline{u})(\overline{u}-p)  -a\overline{v} \Big] , \\
			\overline{v}_t &= d_2 \overline{v}_{xx} + \left (1-\overline{v}-c\overline{u} \right),
		\end{split}
	\end{align}
	
	\begin{align}\label{eq:upper}
		\begin{split}
			\widehat{u}_t &= d_1 \widehat{u}_{xx} + \widehat{u} \Big[ (1-\widehat{u} )(\widehat{u}-p)\dfrac{1}{1+\mathbf{\widehat{q}} \widehat{v}}  -a\widehat{v} \Big] , \\
			\widehat{v}_t &= d_2 \widehat{v}_{xx} + \left (1-\widehat{v}-c\widehat{u} \right),
		\end{split}
	\end{align}
	
	\begin{align}\label{eq:lower}
		\begin{split}
			\widetilde{u}_t &= d_1 \widetilde{u}_{xx} + \widetilde{u} \Big[ (1-\widetilde{u} )(\widetilde{u}-p)\dfrac{1}{1+ \mathbf{\widetilde{q}} \widetilde{v}}  -a\widetilde{v} \Big] , \\
			\widetilde{v}_t &= d_2 \widetilde{v}_{xx} + \left (1-\widetilde{v}-c\widetilde{u} \right),
		\end{split}
	\end{align}
	
	\begin{align}\label{eq:lowest}
		\begin{split}
			\tilde{u}_t &= d_1 \tilde{u}_{xx} + \tilde{u} \Big[ (1-\tilde{u} )(\tilde{u}-p)\dfrac{1}{1+\mathbf{\widetilde{q}}}  -a\tilde{v} \Big] , \\
			\tilde{v}_t &= d_2 \tilde{v}_{xx} + \left (1-\tilde{v}-c\tilde{u} \right),
		\end{split}
	\end{align}

	where
	\begin{align}\label{eq:lowest1}
		\mathbf{\widehat{q}} = \min_{x\in \Omega} q(x), \quad \, \quad   \mathbf{\widetilde{q} }= \max_{x\in \Omega} q(x).
	\end{align}
	
	We assume Neumann boundary conditions for all of the reaction diffusion systems \eqref{eq:lv_model}--\eqref{eq:lowest}. Also in each of the systems we prescribe spatially homogenous (flat) initial conditions
	$u(x,0)=u_0 (x) \equiv c>~0, \quad v(x,0)=v_0(x) \equiv d > 0.$
	
	\begin{theorem}\label{lem:com1}
		For the reaction diffusion system (\ref{PDEmodel}) for the Allee effect with a fear function $q(x)$, as well as the reaction diffusion  systems (\ref{eq:lv_model})-(\ref{eq:lowest}). Then the following point wise comparison holds,
		\[ \tilde{u} \le \widetilde{u} \le u \le \widehat{u} \le \overline{u}.\]
	\end{theorem}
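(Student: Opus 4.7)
The plan is to chain together four pointwise comparisons, one for each link in $\tilde u \le \widetilde u \le u \le \widehat u \le \overline u$, by applying a parabolic maximum principle for weakly coupled cooperative linear systems to the differences of two consecutive solutions. A preliminary step is to establish $v,\widehat v,\widetilde v,\tilde v,\overline v \le 1$ on $[0,\infty)\times\Omega$. This follows from the scalar inequality $v_t - d_2\Delta v = 1-v-cu \le 1-v$ together with the homogeneous Neumann boundary condition and the spatially flat initial datum $v_0\equiv d$; taking $d\le 1$ (or comparing with the ODE $\dot V=1-V$) gives the bound. The same kind of scalar comparison produces $0\le u,\widehat u,\widetilde u,\tilde u,\overline u\le 1$.

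The key monotonicity that drives every link of the chain is that for $v\in[0,1]$ and $0\le q_1\le q_2$,
\[
1 \;\ge\; \frac{1}{1+q_1 v} \;\ge\; \frac{1}{1+q_2 v} \;\ge\; \frac{1}{1+q_2},
\]
the final inequality using $v\le 1$. Applied to $(q_1,q_2)=(0,\widehat q),(\widehat q,q(x)),(q(x),\widetilde q)$ this orders the first three pairs; applied with $q_2=\widetilde q$ and the trailing $v\le 1$ comparison, it orders the last pair $(\widetilde u,\tilde u)$. For any two adjacent systems with solutions labelled $(u^U,v^U)$ (the one that should dominate) and $(u^L,v^L)$, set $W=u^U-u^L$ and $Z=v^L-v^U$; both are identically zero at $t=0$. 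Subtracting the $v$-equations gives the clean linear relation
\[
Z_t - d_2\Delta Z \;=\; -Z + cW,
\]
which is cooperative with a nonnegative coupling to $W$. Subtracting the $u$-equations and splitting the difference of the nonlinearities as $\bigl[F^{U}(u^U,v^U)-F^{U}(u^L,v^L)\bigr]+\bigl[F^{U}(u^L,v^L)-F^{L}(u^L,v^L)\bigr]$, the first bracket is handled by the mean-value theorem to give $A(x,t)W + B(x,t)Z$ with $A$ bounded on the invariant square $[0,1]^2$ and $B\ge 0$ (raising $v$ depresses $u$-growth through both the fear factor and the $-av$ term), while the second bracket is a nonnegative residual $R(x,t)\ge 0$ by the monotonicity above. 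The pair $(W,Z)$ therefore solves a weakly coupled, cooperative linear parabolic system with nonnegative source $R$, zero Neumann data and zero initial data, so Protter--Weinberger yields $W,Z\ge 0$ throughout $[0,\infty)\times\Omega$. Iterating this argument along the four adjacent pairs produces the full chain.

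The main obstacle is purely bookkeeping: because of the Allee factor $(1-u)(u-p)$ the reaction is not monotone in $u$ on the full invariant region, so the mean-value coefficient $A(x,t)$ has indefinite sign. This is not an actual impediment, since the maximum principle for cooperative weakly coupled systems only requires boundedness of the diagonal coefficients together with the right sign on the off-diagonal couplings (here $B\ge 0$ and $c>0$) and $R\ge 0$. One must, however, be careful in the expansion to place all signed terms containing $W$ into $A\cdot W$ and all signed terms containing $Z$ into $B\cdot Z$ so that the indefinite-sign pieces do not contaminate the cooperative structure; this is where the verification has to be executed with care. The final link $\widetilde u\le\tilde u^{\mathrm{rev}}$, i.e.\ $\tilde u\le\widetilde u$, is slightly special because the two equations have different functional forms in $v$, but the same scheme applies once $\widetilde v\le 1$ is invoked to compare $1/(1+\widetilde q\widetilde v)$ with $1/(1+\widetilde q)$.
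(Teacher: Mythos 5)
Your overall strategy --- establish $v \le 1$, order the fear factors $1/(1+qv)$ pointwise, and then compare adjacent systems --- is exactly the route the paper takes (its proof consists of the fear-factor chain plus an appeal to ``standard comparison theory''), and making that appeal explicit via the differences $W=u^U-u^L$, $Z=v^L-v^U$ and a maximum principle for weakly coupled cooperative systems is a reasonable way to instantiate it. However, your two key sign claims fail on part of the invariant region, and this is not mere bookkeeping. Writing $F(u,v)=u\bigl[(1-u)(u-p)/(1+qv)-av\bigr]$, one computes $\partial_v F=-u(1-u)(u-p)q/(1+qv)^2-au$, so your coefficient $B=-\partial_v F$ is nonnegative only where $(1-u)(u-p)\ge 0$, i.e.\ where $u\ge p$; for $u<p$ (below the Allee threshold) and $q$ not small relative to $a$, $B$ can be negative, and the difference system is then no longer cooperative. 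Likewise your residual is $R=u^L(1-u^L)(u^L-p)\bigl[\phi^U(v^L)-\phi^L(v^L)\bigr]$, which carries the factor $(u^L-p)$ and is negative whenever $u^L<p$: below the Allee threshold the bracketed growth term is negative, and a larger fear factor makes it \emph{less} negative, so more fear locally \emph{raises} the right-hand side of the $u$-equation rather than depressing it. The monotonicity in $q$ that drives every link of your chain therefore reverses sign precisely in the region $u<p$, which is dynamically relevant: Proposition 2.3 of the paper shows trajectories at or below the threshold decay to extinction, and solutions starting above $p$ may enter this region.

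You flagged the non-monotonicity of the reaction in $u$ and correctly observed that an indefinite diagonal coefficient $A$ is harmless for the maximum principle, but the same Allee factor contaminates exactly the terms whose signs you must control, namely the off-diagonal coupling $B$ and the forcing $R$. As written, the Protter--Weinberger argument applies only on the sub-region $u\ge p$, or in the weak Allee case $p=0$, where $(1-u)(u-p)=u(1-u)\ge 0$ on $[0,1]$ and your computation goes through verbatim. To close the gap in the strong Allee case ($0<p<1$, the paper's standing assumption) you would need an additional ingredient, for instance an argument that all five solutions remain above $p$ for the initial data under consideration, or a genuinely different comparison structure. It is fair to note that the paper's own one-line proof (positivity, $v\le 1$ by logistic comparison, the fear-factor chain, then a citation of standard comparison theory) does not confront this point either; your attempt makes explicit exactly the step the paper glosses over, but the verification as stated does not hold on the whole invariant square.
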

	
	\begin{proof}
		From the positivity of the solutions to reaction diffusion systems \eqref{eq:upper}-\eqref{eq:lowest} and via comparison of \eqref{PDEmodel} to logistic equation to get upper bound for second species, i.e., $v \le 1$. Hence, we have
		\[ \dfrac{1}{1+\mathbf{\widetilde{q}}} \le \dfrac{1}{1+ \mathbf{\widetilde{q}} \widetilde{v}} \le  \dfrac{1}{1+ q(x) v} \le \dfrac{1}{1+\mathbf{\widehat{q}} \widehat{v}} \le 1, \quad x \in \Omega. \]
		Hence, the result follows from the standard comparison theory \cite{gilbarg1977elliptic}.
	\end{proof}
	
	Let's recall the notations: Denote the discriminant of \eqref{5} as 
	\begin{align}\label{eq:discrim}
		\Delta=A_2^2-4A_1A_3,
	\end{align}
	where 
	\[ A_1=ac^2q+1, \quad A_2=-(2acq+ac+p+1), \quad A_3=a+aq+p. \]

	\begin{theorem}\label{thm:ce1}
		For the reaction diffusion system (\ref{PDEmodel}) for the Allee effect with a fear function $q(x)$ that satisfies the parametric restriction
		\begin{equation}\label{eq:ce_pde}
			\Big( 2 a c \mathbf{q} + ac + p + 1 \Big)^2 < 4 \Big(ac^2 \mathbf{q}  +1 \Big) \Big(a+ a \mathbf{q} +p \Big)
		\end{equation}
		for $\mathbf{q}=\mathbf{\widehat{q}},\mathbf{\widetilde{q} }$, then the solution $(u,v)$ to (\ref{PDEmodel})
		converges uniformly to the spatially homogeneous state $(0,1)$ as $t \to \infty$ for any choice of flat initial data.
	\end{theorem}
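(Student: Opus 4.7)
The plan is to reduce the PDE convergence question to a phase-plane analysis of the constant-fear kinetic ODE, bridging the two by the pointwise comparison of Theorem \ref{lem:com1} and the PDE--ODE equivalence for spatially homogeneous data in Theorem \ref{thm:km1}.

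First I would bound $u$ from above by $\widehat{u}$, the first component of the solution to the constant-fear system \eqref{eq:upper} with coefficient $\mathbf{\widehat{q}}$. Since the initial data for \eqref{PDEmodel} and \eqref{eq:upper} are both flat, Theorem \ref{thm:km1} ensures that $(\widehat{u},\widehat{v})$ remains spatially homogeneous and solves exactly the kinetic ODE, namely system \eqref{3} with $q=\mathbf{\widehat{q}}$. The hypothesis \eqref{eq:ce_pde} applied with $\mathbf{q}=\mathbf{\widehat{q}}$ is precisely $\Delta(\mathbf{\widehat{q}})<0$, so by Theorem \ref{thm:exist} this ODE has no positive equilibrium; its only rest points are the boundary states $E_{0},E_{1},E_{2},E_{3}$. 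The stability analysis of Section 4, together with the invariant rectangle $[0,1]\times[0,1]$ provided by Propositions 2.1 and 2.3, forces every interior trajectory to enter the basin of the stable node $E_{3}=(0,1)$, giving $\widehat{u}(t)\to 0$ and $\widehat{v}(t)\to 1$.

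Next, the comparison in Theorem \ref{lem:com1} yields $0\le u(x,t)\le \widehat{u}(t)\to 0$ uniformly in $x$, so $u(\cdot,t)\to 0$ in $L^{\infty}(\Omega)$. For the second component, $v$ itself satisfies $v_{t}=d_{2}\Delta v+v(1-v-cu)$, which is a Fisher--KPP equation perturbed by $-cuv$, a term that tends to $0$ uniformly. A standard sub/supersolution argument built from the auxiliary system \eqref{eq:lowest} with $\mathbf{q}=\mathbf{\widetilde{q}}$ (whose kinetic ODE also loses its positive equilibrium by \eqref{eq:ce_pde}, so that $\tilde{v}(t)\to 1$ by the same phase-plane argument) pins $v$ between $\tilde{v}$ and $\widehat{v}$, both of which converge to $1$. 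Hence $v(\cdot,t)\to 1$ uniformly, completing the convergence to the spatially homogeneous state $(0,1)$.

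The main obstacle will be verifying that the ODE convergence to $E_{3}$ holds \emph{globally} on the invariant rectangle, rather than only locally. The boundary equilibrium $E_{1}=(1,0)$ becomes a stable node whenever $c>1$ by Theorem 4.1, so without a further restriction on $c$ one could have bistability between $E_{1}$ and $E_{3}$, which would break the claimed uniform convergence. I would therefore either restrict to the regime $c<1$, where $E_{1}$ is a saddle and $E_{3}$ is the unique interior attractor, or provide an explicit basin estimate showing that every admissible flat datum $(c,d)$ lies above the stable manifold of $E_{1}$ and above the Allee threshold $p$ (so that the extinction state $E_{0}$ is also excluded in view of Proposition 2.4). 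A Bendixson--Dulac test with a judiciously chosen weight on $[0,1]\times[0,1]$ should then rule out periodic orbits and close the argument.
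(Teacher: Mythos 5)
Your proposal follows essentially the same route as the paper's own proof: bound $u$ by the constant-fear systems \eqref{eq:upper}--\eqref{eq:lowest}, use the flatness of the data (Theorem \ref{thm:km1}) to reduce these to the kinetic ODE \eqref{3}, observe that the stated restriction is exactly $\Delta<0$ so that no interior equilibrium exists, and then squeeze via Theorem \ref{lem:com1}. Where you go beyond the paper is in demanding that the ODE convergence to $E_3=(0,1)$ be \emph{global} on the invariant rectangle, and the obstacle you flag is genuine --- it is a gap in the theorem as stated, not in your argument. The hypothesis $\Delta<0$ does not exclude $c>1$: for instance $a=0.5$, $p=0.6$, $c=2$, $q=10$ gives $A_1=21$, $A_2=-22.6$, $A_3=6.1$, hence $\Delta=-1.64<0$, yet $c>1$ makes $E_1=(1,0)$ a stable node (Theorem 4.1), so positive flat data chosen sufficiently close to $(1,0)$ converge to $(1,0)$ rather than $(0,1)$. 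The paper's proof buries this issue in the unexplained assertion that ``from the geometry of the nullclines'' one has $\mathrm{d}u/\mathrm{d}t\le \mathrm{d}v/\mathrm{d}t$; that geometry (the $v$-nullcline lying above the $u$-nullcline hump on $[p,1]$) is only available when $0<c<1$, which is what all of the paper's supporting simulations use ($c=0.3$, $0.5$). So your proposed restriction $0<c<1$ is precisely what is needed; with it, $E_1$ and $E_0$ are saddles whose stable manifolds lie in the invariant $x$-axis, $E_2$ is a repelling node, there is no interior equilibrium, and Poincar\'e--Bendixson finishes the job (index theory already rules out periodic orbits since any cycle would have to enclose an interior equilibrium, so no Dulac weight is actually required).

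Two smaller remarks. First, Lemma \ref{lem:com1} orders only the $u$-components, so your sub/supersolution treatment of the $v$-equation (using that $cu\to 0$ uniformly) is a needed supplement to the paper's bare assertion $\widetilde{v}\le v\le\overline{v}$. Second, your precaution about staying above the Allee threshold $p$ to avoid $E_0$ is unnecessary and would weaken the statement: for $0<c<1$ an interior trajectory with $u_0\le p$ still has $u$ decreasing to $0$ and $v\to 1$, so it converges to $(0,1)$; no positive flat datum is attracted to $E_0$ (this also shows the limit $(0,0)$ claimed in Proposition 2.4 cannot be taken literally when $y(0)>0$).
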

	
	\begin{proof}
		
		Under the given parametric restriction for for $\mathbf{q}=\mathbf{\widehat{q}},\mathbf{\widetilde{q} }$, it is evident that discriminat $\Delta<0$, hence there is no real interior equilibrium. Moreover, from the geometry of the nullclines, we have $\dfrac{\mathrm{d} u}{\mathrm{d} t} \le \dfrac{\mathrm{d} v}{\mathrm{d} t}$. Hence, we have
		\[ (\widehat{u},\widehat{v}) \to (0,1) \quad \& \quad (\widetilde{u},\widetilde{v}) \to (0,1).\]
		Moreover, on using Lemma \ref{lem:com1} we have,
		\[ \widetilde{v}	 \leq   v    \leq \overline{v},\]
		which entails,
		\begin{align*}
			\lim_{t \rightarrow \infty}(\widetilde{u}, \widetilde{v})	 \leq  \lim_{t \rightarrow \infty} (u,v)    \leq \lim_{t \rightarrow \infty} (\widehat{u},\widehat{v}),
		\end{align*}
		subsequently,
		\begin{align*}
			\left(0,1\right)	 \leq  \lim_{t \rightarrow \infty}   (u,v)    \leq 	(0,1).
		\end{align*}
		Now using a squeezing argument, in the limit that $t \rightarrow \infty$, we have uniform convergence of solutions of \eqref{PDEmodel}, i.e.,
		\[ (u,v) \to (0,1)\] as $t \rightarrow \infty$.
	\end{proof}
	
	\begin{figure}[h]
		\subfigure[]
		{\scalebox{0.45}[0.45]{
				\includegraphics[width=\linewidth,height=5in]{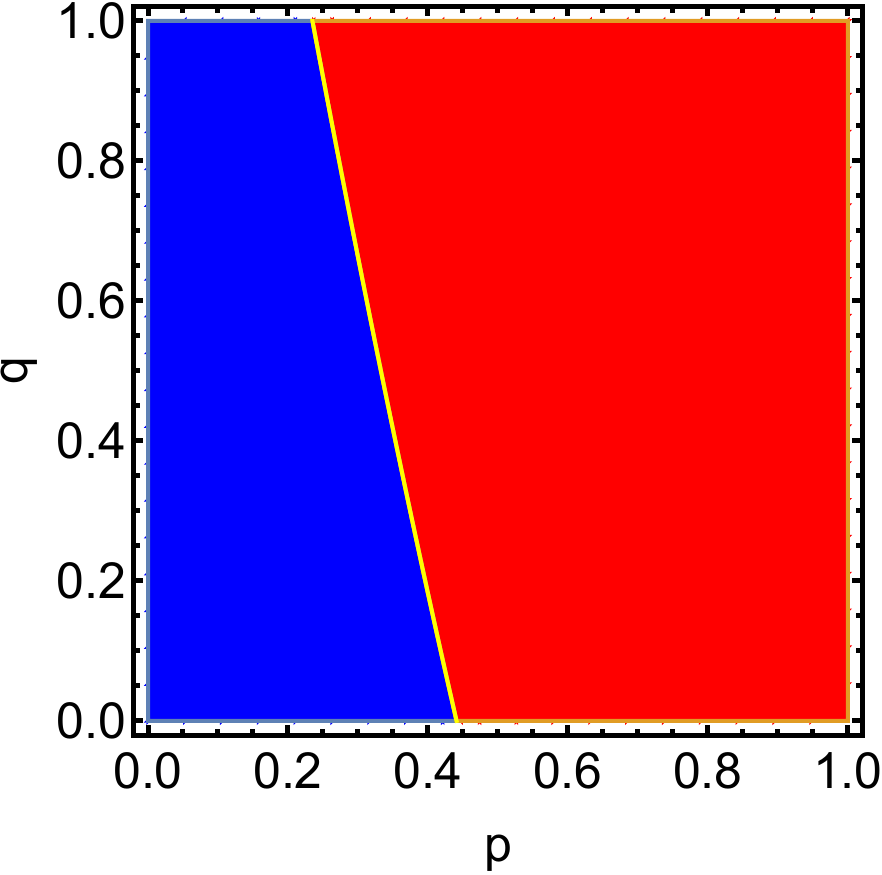}}}
		\subfigure[]
		{\scalebox{0.45}[0.45]{
				\includegraphics[width=\linewidth,height=4.9in]{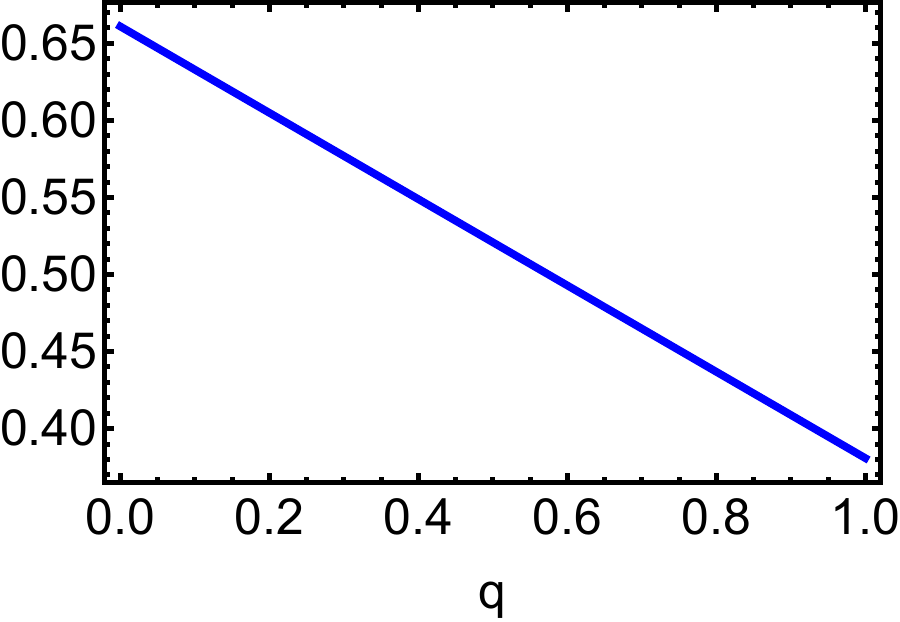}}}
		\caption{A graphical representation of $\Delta(p,q)$ defined by equation \eqref{eq:discrim} is presented through a region plot and a line plot. These plots illustrate the relationship between the parameters $p$ and $q$ that measure the Allee and fear for the reaction-diffusion system described in equation \eqref{PDEmodel}. The values of $a$, $b$, and $c$ are set to $0.1$, $0.7$, and $0.3$, respectively. In the region plot, the red region indicates that the competitive exclusion state $(0,1)$ is globally stable for any initial data, while the blue region shows that the dynamics depend on the choice of initial data. The line plot specifically represents the case where the Allee effect is weak ($p=0$).}
		\label{fig_region}
	\end{figure}

	\begin{remark}
		The global stability of the competition state $(0,1)$ holds true for both the case of strong and weak Allee effects, which entails Theorem~\ref{thm:ce1} also stands true for both types of Allee effects (See Figs~[\ref{fig_ce_st_al} ,\ref{fig_ce_wk_al}]). 
	\end{remark}
	
	\begin{figure}[H]
		\subfigure[]
		{\scalebox{0.45}[0.45]{
				\includegraphics[width=\linewidth,height=4.5in]{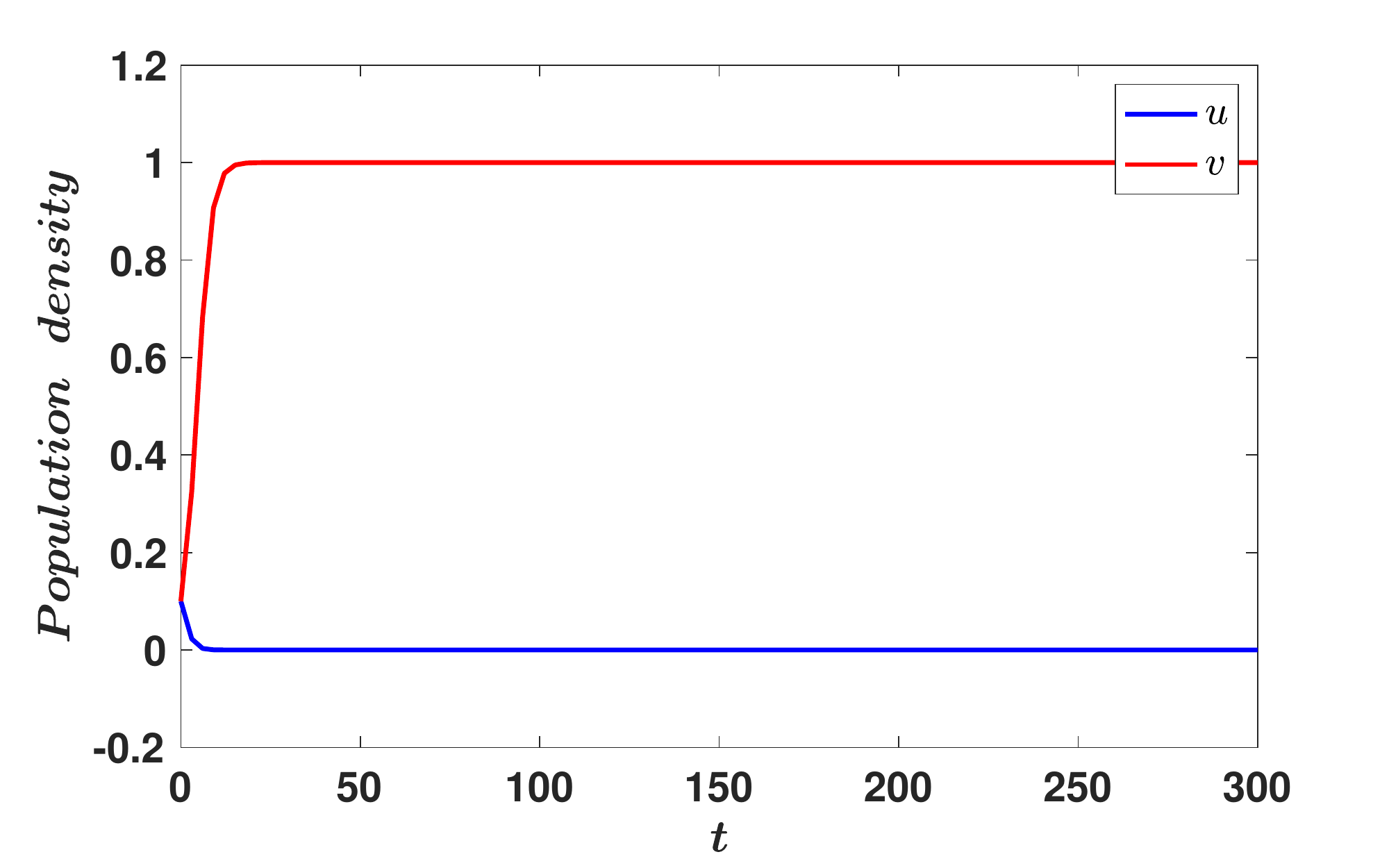}}}
		\subfigure[]
		{\scalebox{0.45}[0.45]{
				\includegraphics[width=\linewidth,height=4.5in]{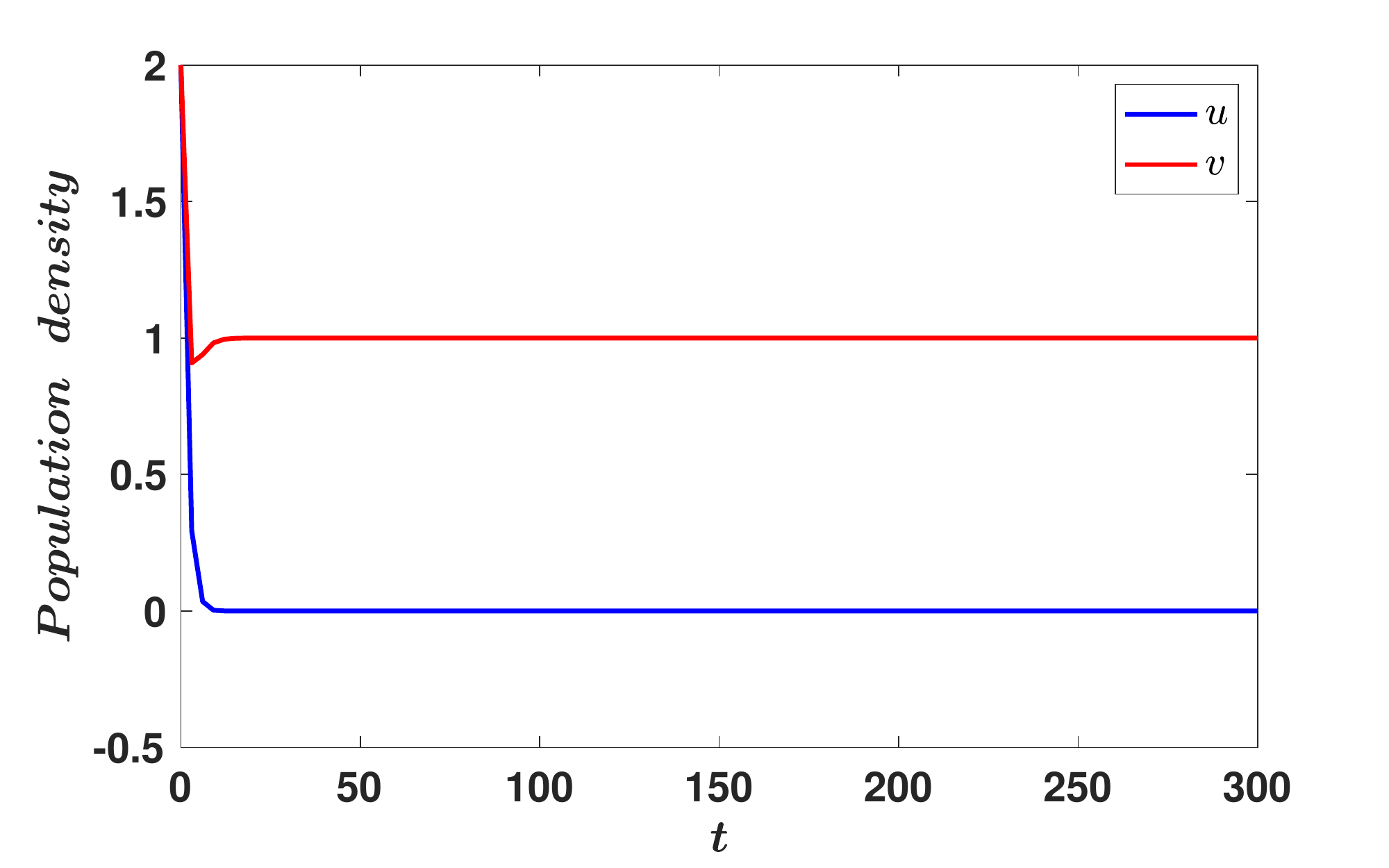}}}
		\caption{The parameters specified for the simulation of the reaction-diffusion system given in (\ref{PDEmodel}) on the spatial domain $\Omega=[0,1]$ for the case of competition exclusion with fear function $q(x)=\sin^2 (4x)$ and strong Allee effect are $d_1=1,d_2=1,a=0.5,b=0.5,c=0.5$ and $p=0.5$ Two different sets of initial data is used: (a) $[u_0,v_0]=[0.1,0.1]$ and (b) $[u_0,v_0]=[2,2]$. It should be noted that these parameters satisfy the parametric constraints specified in the Theorem~\ref{thm:ce1}}.
		\label{fig_ce_st_al}
	\end{figure}

	\begin{figure}[h]
		\subfigure[]
		{\scalebox{0.45}[0.45]{
				\includegraphics[width=\linewidth,height=4.5in]{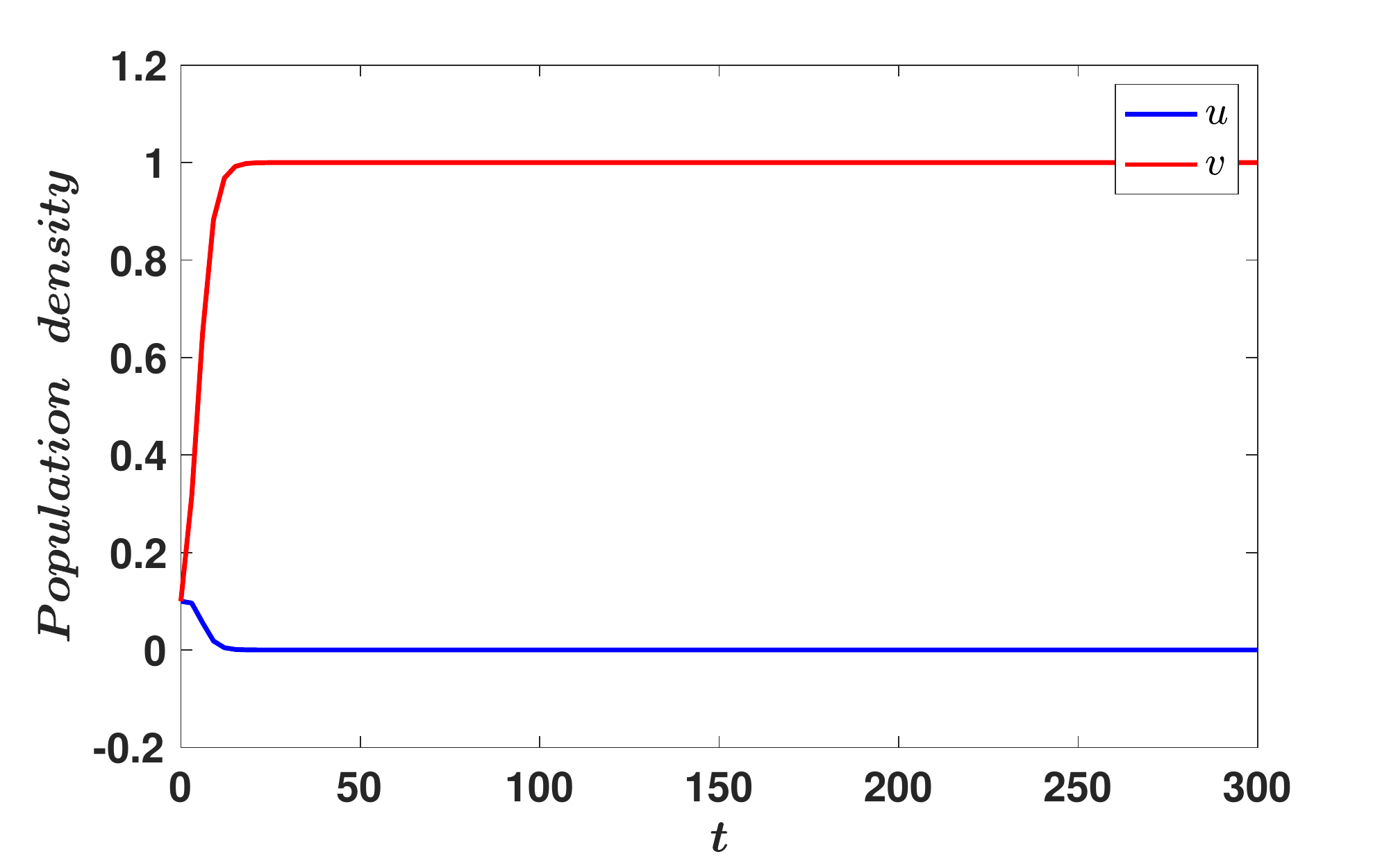}}}
		\subfigure[]
		{\scalebox{0.45}[0.45]{
				\includegraphics[width=\linewidth,height=4.5in]{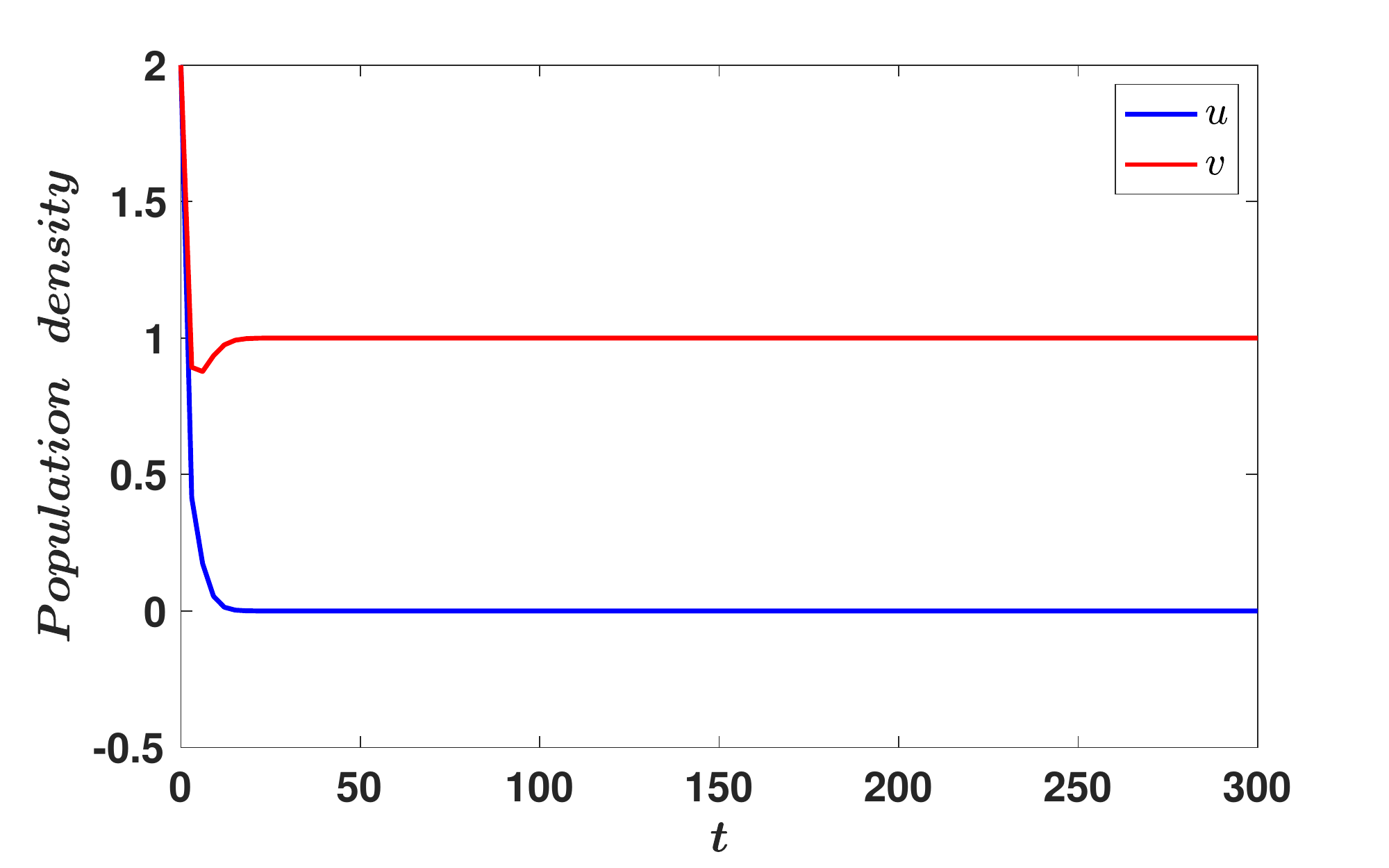}}}
		\caption{The parameters specified for the simulation of the reaction-diffusion system given in (\ref{PDEmodel}) on the spatial domain $\Omega=[0,1]$ for the case of competition exclusion with fear function $q(x)=\sin^2 (4x)$ and weak Allee effect are $d_1=1,d_2=1,a=0.5,b=0.5,c=0.5$ and $p=0$. Two different sets of initial data is used: (a) $[u_0,v_0]=[0.2,0.6]$and (b) $[u_0,v_0]=[1,0.1]$. It should be noted that these parameters satisfy the parametric constraints specified in the Theorem~\ref{thm:ce1}}.
		\label{fig_ce_wk_al}
	\end{figure}
	
	We now provide a lemma that entails a result such as Theorem \ref{thm:ce1}, but for any initial data,

	\begin{corollary}\label{cor:ce1n}
		For the reaction diffusion system (\ref{PDEmodel}) with Allee effect and a fear function $q(x)$ that satisfies the parametric restriction
		\begin{equation}\label{eq:ce_pde}
			\Big( 2 a c \mathbf{q} + ac + p + 1 \Big)^2 < 4 \Big(ac^2 \mathbf{q} +1 \Big) \Big(a+ a \mathbf{q} +p \Big),
		\end{equation}
		for $\mathbf{q}=\mathbf{\widehat{q}},\mathbf{\widetilde{q} }$, the solution $(u,v)$ to (\ref{PDEmodel})
		converges uniformly to the spatially homogeneous state $(0,1)$ as $t \to \infty$ for initial data, $(u_{0}(x), v_{0}(x)) \in L^{\infty}(\Omega)$.
	\end{corollary}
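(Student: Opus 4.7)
The plan is to reduce the general $L^{\infty}$ initial data case to the spatially flat initial data case already handled in Theorem \ref{thm:ce1}, via a parabolic sandwich. Given $(u_{0},v_{0}) \in L^{\infty}(\Omega)$ with $(u_{0},v_{0})\ge 0$, first set the essential bounds
\begin{equation*}
\underline{c} := \inf_{x\in\Omega} u_{0}(x), \quad \overline{c} := \sup_{x\in\Omega} u_{0}(x), \quad \underline{d} := \inf_{x\in\Omega} v_{0}(x), \quad \overline{d} := \sup_{x\in\Omega} v_{0}(x),
\end{equation*}
all of which are finite. Let $(U^{+},V^{-})$ and $(U^{-},V^{+})$ denote the solutions of \eqref{PDEmodel} (with the same heterogeneous fear function $q(x)$) starting from the flat initial data $(\overline{c},\underline{d})$ and $(\underline{c},\overline{d})$ respectively. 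By Theorem \ref{thm:km1}, each of these problems is equivalent to its ODE kinetics, so $(U^{\pm}(t),V^{\mp}(t))$ remain spatially homogeneous for every $t\ge 0$.

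Next I would invoke the standard competitive comparison principle for two-species reaction-diffusion systems under Neumann boundary conditions (see \cite{gilbarg1977elliptic}): since the initial data are ordered in the interleaved competitive sense, namely $\underline{c}\le u_{0}(x)\le \overline{c}$ and $\overline{d}\ge v_{0}(x)\ge \underline{d}$, this ordering propagates in time to give the pointwise sandwich
\begin{equation*}
U^{-}(t)\;\le\; u(x,t)\;\le\; U^{+}(t), \qquad V^{-}(t)\;\le\; v(x,t)\;\le\; V^{+}(t),
\end{equation*}
for all $(x,t)\in\Omega\times [0,\infty)$. The parametric restriction \eqref{eq:ce_pde} is independent of the initial data, so Theorem \ref{thm:ce1} applies to each of the two bounding problems and yields $U^{\pm}(t)\to 0$ and $V^{\mp}(t)\to 1$ as $t\to\infty$. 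A squeezing argument then forces the uniform convergence $(u(\cdot,t),v(\cdot,t))\to (0,1)$ in $L^{\infty}(\Omega)$, which is the desired conclusion.

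The main obstacle is justifying the competitive comparison rigorously in the presence of the multiplicative strong Allee factor $(1-u)(u-p)$. A direct computation gives
\begin{equation*}
\partial_{v} f(u,v) \;=\; -\,u\Bigl[\tfrac{(1-u)(u-p)q(x)}{(1+q(x)v)^{2}}+a\Bigr],
\end{equation*}
which is strictly negative on $\{u\ge p\}$ but can change sign on $\{0<u<p\}$, so the kinetics are not globally quasimonotone decreasing. To circumvent this I would work inside the positively invariant rectangle $R := [0,1]\times [0,1]$, whose invariance in the PDE setting follows by combining the arguments of Proposition 2.3 with a logistic upper envelope for $v$ obtained by comparing the $v$-equation in \eqref{PDEmodel} to the Fisher--KPP PDE. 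On $R$ the nonlinearity is quasimonotone decreasing and the local competitive comparison applies; equivalently, one can modify the nonlinearity outside $R$ to be globally quasimonotone-decreasing while preserving it on $R$, run the comparison, and recover \eqref{PDEmodel} by the invariance of $R$. Either route yields the sandwich above, and combining with Theorem \ref{thm:ce1} closes the proof.
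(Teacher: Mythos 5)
Your reduction hinges on a two--species competitive comparison principle: you sandwich $(u,v)$ between solutions started from the flat data $(\overline{c},\underline{d})$ and $(\underline{c},\overline{d})$ and propagate the interleaved ordering in time. That step requires the kinetics to be quasimonotone decreasing (competitive) on the region where the solutions live, and, as you yourself compute, $\partial_v F = u\bigl[-\tfrac{(1-u)(u-p)q}{(1+qv)^2}-a\bigr]$ can be positive when $0<u<p$. Your proposed fix --- restricting to the invariant rectangle $R=[0,1]\times[0,1]$ --- does not remove this obstruction, because the problematic strip $\{0<u<p\}$ lies \emph{inside} $R$ (recall $0<p<1$). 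Nor does the corollary's hypothesis $\Delta<0$ rescue quasimonotonicity: for example $a=0.5$, $c=0.5$, $p=0.5$, $q=2$ satisfies \eqref{eq:ce_pde}, yet near $(u,v)$ with $u,v$ small one has $(1-u)(p-u)q\approx pq=1 > a(1+qv)^2\approx 0.5$, so $\partial_v F>0$ there. The same objection defeats the alternative patch of modifying the nonlinearity outside $R$: the failure is not outside $R$. So the key sandwich inequality is not justified as written; you would need an additional restriction of the type $pq\le a$ (not implied by \eqref{eq:ce_pde}), or a genuinely different argument. A secondary, repairable slip: with a heterogeneous $q(x)$ the reaction term is $x$-dependent, so solutions of \eqref{PDEmodel} launched from flat data do \emph{not} stay spatially homogeneous and Theorem \ref{thm:km1} does not apply to them; you could instead invoke Theorem \ref{thm:ce1} directly on those bounding problems, since it is stated for flat initial data with heterogeneous $q(x)$.

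For comparison, the paper's proof avoids the two-species quasimonotone sandwich altogether. It bounds the solution of \eqref{PDEmodel} by the solution of the Allee-without-fear system \eqref{eq:lv_model} (this is a one-sided, essentially scalar comparison in the spirit of Theorem \ref{lem:com1}, using $v\le 1$ and the geometry of the nullclines), then bounds that by the classical Lotka--Volterra competition system without the Allee factor, and for the classical system establishes global convergence to $(0,1)$ for arbitrary $L^{\infty}$ data via the Lyapunov functional $E(u,v)=\int_{\Omega}\bigl(|u|^{2}+|v-1|^{2}\bigr)\,dx$; the conclusion is then transferred back by comparison and squeezing. If you want to salvage your route, you would either have to verify quasimonotonicity under an extra hypothesis such as $\widetilde{\mathbf{q}}\,p\le a$, or replace the two-sided competitive sandwich by the one-sided comparison plus Lyapunov (or monotone-dynamical-systems) argument for the dominating system, which is what the paper does.
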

	
	\begin{proof}
		The proof follows the proof of Theorem \ref{thm:ce1}. We can show that the solution to (\ref{PDEmodel}) via comparison is bounded by the solution to \eqref{eq:lv_model}. Via the geometry of nullclines and parametric restrictions, the solution to \eqref{eq:lv_model} is bounded by the solution to the classical Lokta-Volterra model, with the same parametric restrictions, only without the Allee term.
		Next we can construct a Lyapunov function,
		
		\begin{equation}\label{eq:ce_pde}
			E(u,v) = \int_{\Omega}\left(|u|^{2} + |v-1|^{2} \right)dx,
		\end{equation}
		for the classical Lokta-Volterra model. Via standard methods \cite{Ni2011} we see that for the parametric restrictions considered, $\frac{d}{dt}E(u,v) \leq 0$, so $(u,v) \rightarrow (0,1)$ as $t \rightarrow \infty$.
		This is the classical case of competitive exclusion. Now the geometry of nullclines and parametric restrictions, enable via direct comparison to show $(\tilde{u},\tilde{v}) \rightarrow (0,1)$ as $t \rightarrow \infty$, where $(\tilde{u},\tilde{v})$, is the solution to (\ref{PDEmodel}).
		
	\end{proof}
	
	\begin{figure}[h]
		\subfigure[]
		{\scalebox{0.45}[0.45]{
				\includegraphics[width=\linewidth,height=5in]{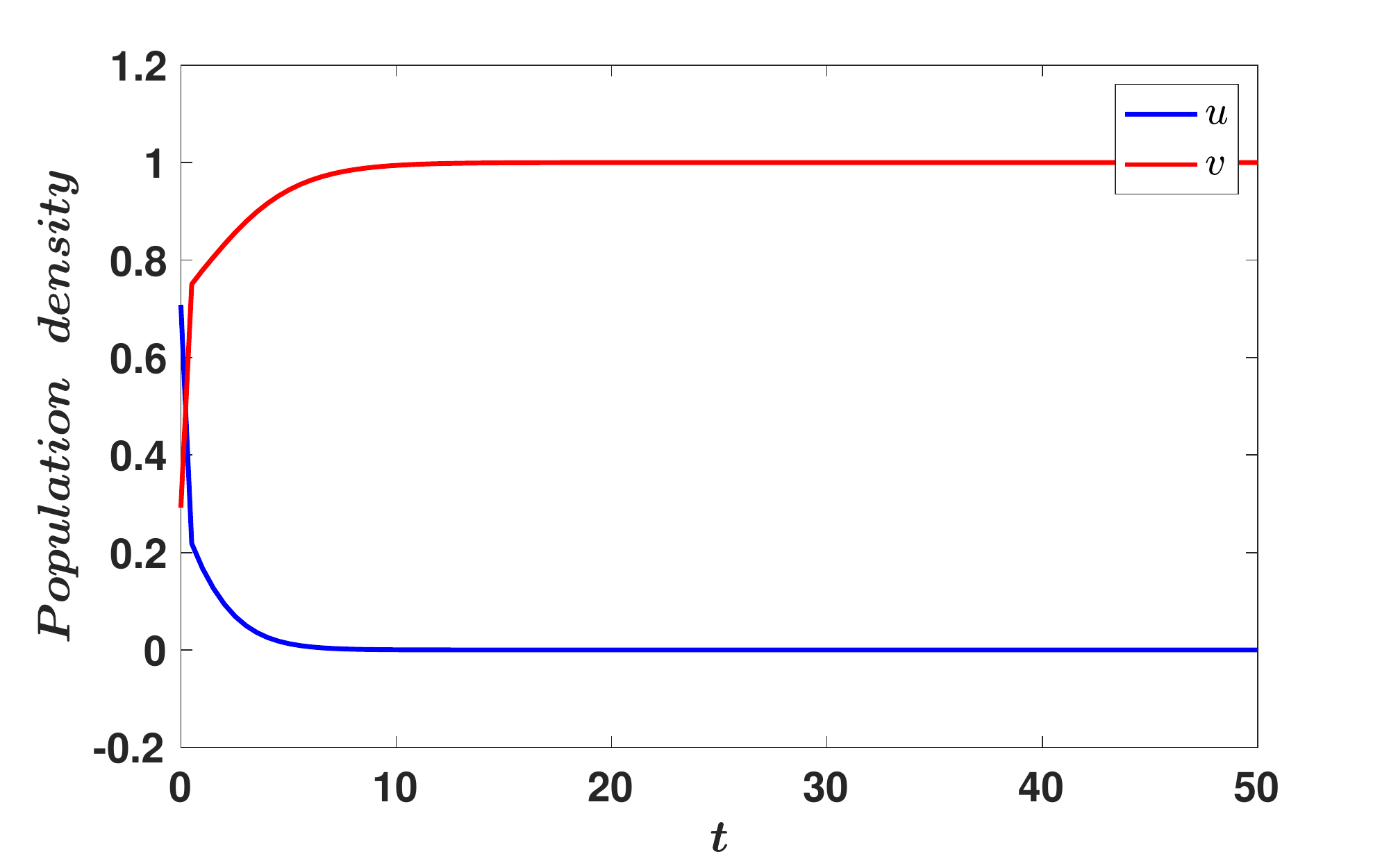}}}
		\subfigure[]
		{\scalebox{0.45}[0.45]{
				\includegraphics[width=\linewidth,height=5in]{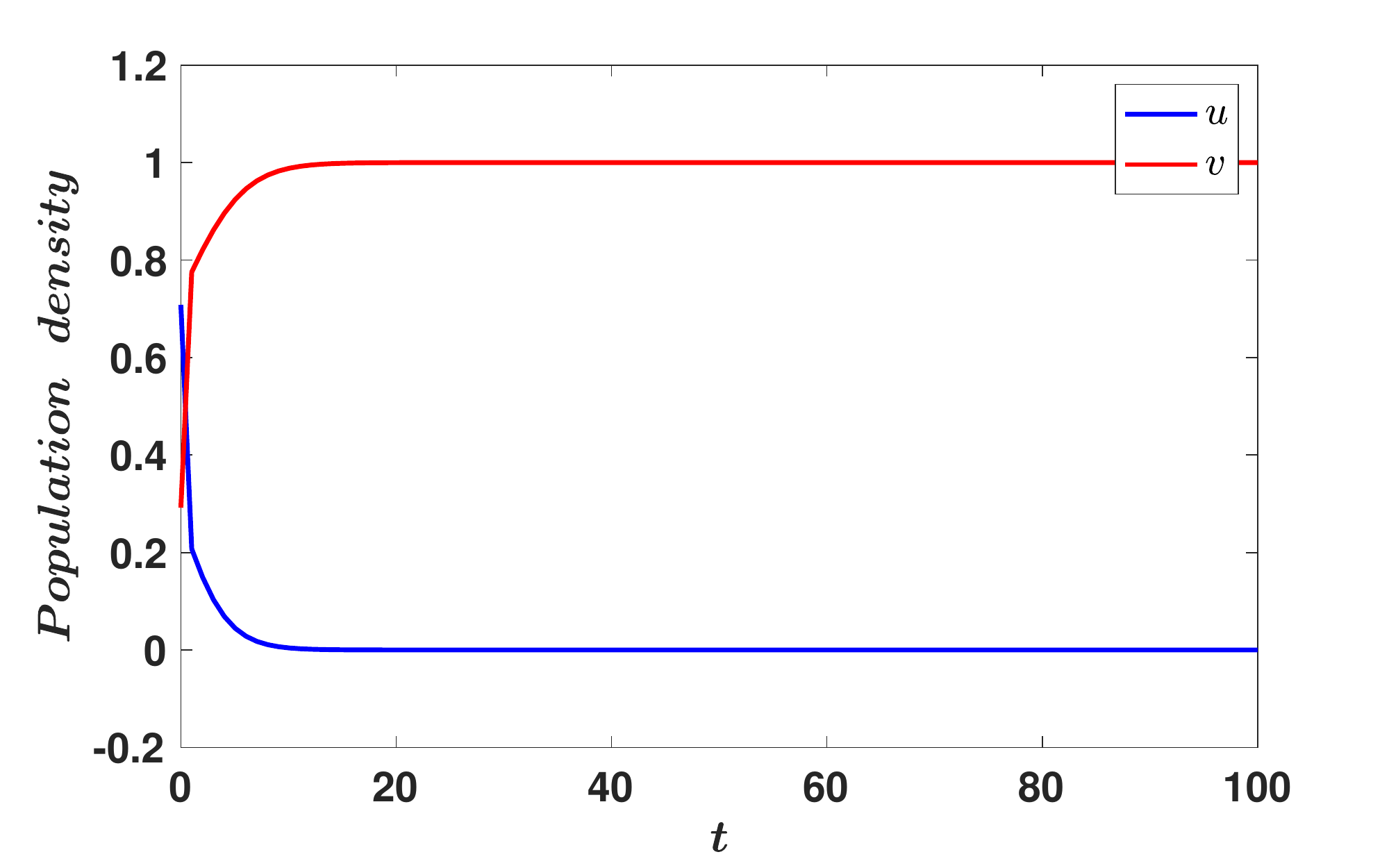}}}
		\caption{The parameters specified for the simulation of the reaction-diffusion system given in (\ref{PDEmodel}) on the spatial domain $\Omega=[0,1]$ for the case of competition exclusion with the fear function $q(x)=0.8 + \sin^2 (4x)$ are $d_1=1,d_2=1,a=0.5,b=0.5,c=0.5$ and $[u_0(x),v_0(x)]=[sin^2 (x),cos^2 (x)]$. (a) Strong Allee $p=0.5$ and (b) Weak Allee $p=0$. It should be noted that these parameters satisfy the parametric constraints specified in the Corollary~\ref{cor:ce1n}}.
		\label{fig_ce_cor}
	\end{figure}

	\begin{theorem}\label{thm:str_str}
		For the reaction diffusion system (\ref{PDEmodel}) for the Strong Allee effect with a fear function $q(x)$ that satisfies the parametric restriction
		\begin{equation}\label{eq:st_pde_str_allee}
			\Delta >0, \quad ,\quad 2 A_1 + A_2 >0, \quad  1 \le c \le \dfrac{1}{\mathbf{q}} + 1,\quad \text{and} \quad 0< p < \dfrac{1}{c}
		\end{equation}
		for $\mathbf{q}=\mathbf{\widehat{q}},\mathbf{\widetilde{q} }$. Then there exists sufficiently small initial data $[u_0(x),v_0(x)]$ such that the solution $(u,v)$ to (\ref{PDEmodel}) converges uniformly to the spatially homogeneous state $(1,0)$ as $t \to \infty$, while there exits also sufficiently large intial data $[u_1 (x),v_1 (x)]$, for which the solution $(u,v)$ to (\ref{PDEmodel}) converges uniformly to the spatially homogeneous state $(0,1)$ as $t \to \infty$.
	\end{theorem}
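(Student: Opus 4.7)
The plan is to invoke the comparison principle from Theorem~\ref{lem:com1} to squeeze the solution $(u,v)$ of \eqref{PDEmodel} between the solutions $(\widehat{u},\widehat{v})$ of \eqref{eq:upper} and $(\widetilde{u},\widetilde{v})$ of \eqref{eq:lower}. Because both bounding systems carry a spatially constant fear coefficient ($\widehat{q}$ and $\widetilde{q}$ respectively), spatially homogeneous initial data makes each bounding PDE reduce to its kinetic ODE counterpart via Theorem~\ref{thm:km1}. Hence the long-time behavior of $(\widehat{u},\widehat{v})$ and $(\widetilde{u},\widetilde{v})$ is governed entirely by the ODE analysis of Sections~3 and~4, applied with $q=\widehat{q}$ and $q=\widetilde{q}$.

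First I would verify that under \eqref{eq:st_pde_str_allee} with $\mathbf{q}$ replaced by either $\widehat{q}$ or $\widetilde{q}$, each bounding ODE system falls into Case~I.1.(c) of Theorem~\ref{thm:exist}, so admits a unique interior equilibrium $E_{1*}$, which by Theorem~\ref{thm:stab_pos} is a saddle. Meanwhile, $E_1=(1,0)$ is a stable node by Theorem~4.1 (since $c>1$) and $E_3=(0,1)$ is a stable node, so each bounding ODE system is bistable, with open, non-empty basins $B^{\mathbf{q}}_{(1,0)}$ and $B^{\mathbf{q}}_{(0,1)}$. Since each $B^{\mathbf{q}}_{(1,0)}$ contains a full open neighborhood of $(1,0)$, the intersection $B^{\widehat{q}}_{(1,0)}\cap B^{\widetilde{q}}_{(1,0)}$ is non-empty; I would pick $(c_0,d_0)$ there (close to $(1,0)$, i.e.\ with $v_0=d_0$ small; this is the ``sufficiently small'' initial datum of the statement) and set $u_0(x)\equiv c_0$, $v_0(x)\equiv d_0$. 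By Theorem~\ref{thm:km1} both $(\widehat{u},\widehat{v})$ and $(\widetilde{u},\widetilde{v})$ converge to $(1,0)$, and combining Theorem~\ref{lem:com1} with the analogous comparison on the $v$-component squeezes $(u,v)$ to $(1,0)$ uniformly. The convergence to $(0,1)$ for ``sufficiently large'' data $(c_1,d_1)\in B^{\widehat{q}}_{(0,1)}\cap B^{\widetilde{q}}_{(0,1)}$ (taken near $(0,1)$, with $d_1$ close to~$1$) then follows by the symmetric argument.

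The main obstacle is the $v$-comparison: Theorem~\ref{lem:com1} as stated gives the pointwise chain only in the $u$-variable, but the squeezing argument requires the reversed chain $\widehat{v}\le v \le \widetilde{v}$. This should follow because the $v$-reaction $v(1-v-cu)$ is monotone decreasing in $u$, so substituting the established $u$-bounds into the scalar $v$-equation and reapplying the parabolic comparison principle (as in \cite{gilbarg1977elliptic}) yields the needed inequalities. A secondary delicate point is that the squeeze to either $(1,0)$ or $(0,1)$ requires the spatially homogeneous initial datum to lie simultaneously in both bounding ODE basins, not merely in one; non-emptiness of the intersection, which holds because both basins contain an open neighborhood of the target equilibrium, is what ultimately justifies the informal quantifiers ``sufficiently small'' and ``sufficiently large'' appearing in the statement.
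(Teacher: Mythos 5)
Your proposal is correct and follows essentially the same route as the paper: comparison with the constant-fear bounding systems \eqref{eq:upper}--\eqref{eq:lower}, reduction to their kinetic ODEs for flat initial data, bistability of those ODEs under the stated parametric restrictions (interior saddle via Theorems \ref{thm:exist} and \ref{thm:stab_pos}), and a squeeze once the flat datum is chosen so that both bounding systems converge to the same boundary state --- the paper phrases this last step via the ordered separatrices $W^1_s(E_{1*}) \le W_s(E_{1**})$ inside a wedge emanating from $E_{1*}$, while you phrase it via the non-empty intersection of the two basins of attraction, which is the same idea. Your explicit handling of the reversed $v$-comparison $\widehat{v} \le v \le \widetilde{v}$ (via monotonicity of the $v$-reaction in $u$ and scalar parabolic comparison) addresses a point the paper's proof leaves implicit, and is correct.
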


	\begin{proof}
		Consider the reaction diffusion system \eqref{eq:upper}. Since the $\widehat{q}$ satisfies the parametric restriction, from Theorem~\ref{thm:exist}(c) and Theorem~\ref{thm:stab_pos}, there exists a interior saddle equilibrium $E_{1*}$ to the kinetic (ODE) system \eqref{eq:upper}. On making use of the stable manifold theorem \cite{23}, i.e., $\exists \hspace{0.1in} W^1_{s}(E_{1*}) \in \mathcal{C}^{1}$ separatrix, such that for initial data $(\widehat{u}_0,\widehat{v}_0)$ chosen above $W^1_{s}(E_{1*})$ the solution $(\widehat{u},\widehat{v}) \to (0,1)$ and for initial data chosen below $W^1_{s}(E_{1*})$, $(\widehat{u},\widehat{v}) \to (1,0)$.

		Moreover, notice that $\dfrac{1}{1 + \mathbf{\widetilde{q}} v} \le \dfrac{1}{1 + \mathbf{\widehat{q}} v}$, we have that for the kinetic (ODE) system \eqref{eq:lower}, we still remain in the strong competition case, and via standard theory again, $\exists \hspace{0.1in} W_{s}(E_{1**}) \in \mathcal{C}^{1}$ separatrix, such that for initial data $(\widetilde{u}_0,\widetilde{v}_0)$ chosen above $W_{s}(E^{1**})$ the solution $(\widetilde{u},\widetilde{v}) \to (0,1)$ and for initial data chosen below $W_{s}(E_{1**})$, $(\widetilde{u},\widetilde{v})\to (1,0)$. Here $E^{1**}$ is the interior saddle equilibrium to the kinetic (ODE) system for \eqref{eq:lower}.
		
		Now since $\dfrac{1}{1 + \mathbf{\widetilde{q}} v} \le \dfrac{1}{1 + \mathbf{\widehat{q}} v}$, the $u$ component of $E_{1**}$ is more than the $u$ component of $E_{1*}$. Now using the $\mathcal{C}^{1}$ property of the separatricies $W^{1}_{s}(E_{1*}) , W_{s}(E_{1**})$, we have the existence of a wedge $\mathbb{V}$ emanating from $E_{1*}$, s.t within $\mathbb{V}$ we have $W^{1}_{s}(E_{1*}) \leq W_{s}(E_{1**})$. Note via Lemma \ref{lem:com1} we have $ \widetilde{u} \leq u \leq \widehat{u}$.	Let us consider positive initial data $(u_0,v_0)$ chosen small enough, within $\mathbb{V}$ s.t. $  (u_0,v_0) < W^{1}_{s}(E_{1*})  \le W_{s}(E_{1**})$, we will have 
		\begin{align*}
			\Big\{  (1,0) \Big\} \le \Big\{ (u,v) \Big\} \le \Big\{ (1,0) \Big\}.
		\end{align*}
		On the other hand,  for sufficiently large initial data $(u_1,v_1)$ via an analogous construction we will have 
		\begin{align*}
			\Big\{  (0,1)\Big\} \le \Big\{ (u,v) \Big\} \le \Big\{ (0,1) \Big\}.
		\end{align*}
		This proves the theorem.
	\end{proof}
	
	\begin{figure}[H]\label{str_sepratrix}
		\centering
		\includegraphics[scale=0.5]{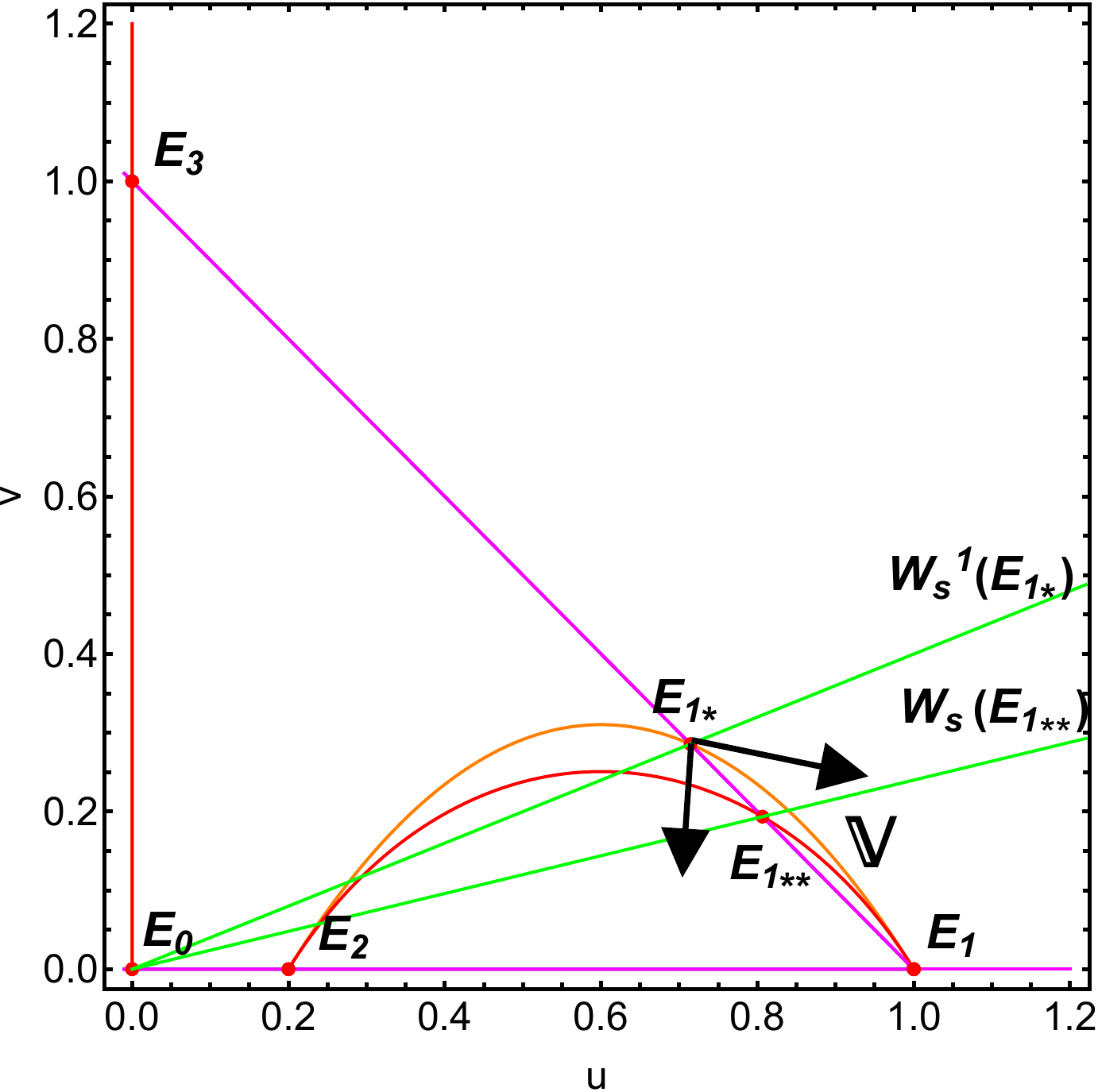}
		\caption{Numerical simulation for the Allee competition PDE case when species $v$ is fearing species $u$ in $\Omega=[0,1]$. The parameters are chosen as $a=0.5,b=0.5,c=1$ and $p=0.2$. Equilibria: $E_{1*}=(0.7142,0.2857),E_{1**}=(0.8064,0.1935), E_2=(0.2,0), E_1=(1,0),E_3=(0,1)$ and $E_0=(0,0)$. $W_{s}^{1}(E_{1*})$ ($q=0.1$) and $W_{s} (E_{1**)}$ ($q=1.1$) are two sepratrices passing through $E_{1*}$ and  $E_{1**}$ respectively. The $C^{1}$ property of the separatrices $W^{1}_{s}(E^{1*} , W_{s}(E_{1**})$, shows a wedge $V$ emanating from $E_{1*}$, s.t within $V$ we have $W_{s}(E^{1**} \geq W^{1}_{s}(E_{1*})$. The u-nullcline is in orange for $q=0.1$  and red for $q=1.1$. The v-nullcline is in magenta for $q=0.1$ and $q=1.1$.}
	\end{figure}
	
	The above theorem can also be  stated for the Weak Allee model and the proof follows the same argument as the Theorem~\ref{thm:str_str}.
	
	\begin{theorem}\label{thm:str_weak}
		For the reaction diffusion system (\ref{PDEmodel}) for the Strong Allee effect with a fear function $q(x)$ that satisfies the parametric restriction
		\begin{equation}\label{eq:st_pde_weak_allee}
			\Delta >0, \quad ,\quad 2 A_1 + A_2 >0,\quad \text{and}  \quad c=1
		\end{equation}
		for $\mathbf{q}=\mathbf{\widehat{q}},\mathbf{\widetilde{q} }$. Then there exists sufficiently large initial data $[u_0(x),v_0(x)]$ such that the solution $(u,v)$ to (\ref{PDEmodel}) converges uniformly to the spatially homogeneous state $(0,1)$ as $t \to \infty$, while there exits also sufficiently small data $[u_1 (x),v_1 (x)]$, for which the solution $(u,v)$ to (\ref{PDEmodel}) converges uniformly to the spatially homogeneous state $(1,0)$ as $t \to \infty$.
	\end{theorem}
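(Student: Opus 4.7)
The plan is to mirror the structure of the proof of Theorem~\ref{thm:str_str}, using the pointwise sandwich $\widetilde u \le u \le \widehat u$ from Theorem~\ref{lem:com1} together with a stable--manifold analysis of the ODE systems underlying the bounding PDEs \eqref{eq:upper} and \eqref{eq:lower}. First I would invoke Theorem~\ref{thm:exist} in the case $c=1$, $2A_1+A_2>0$, $\Delta>0$ (Case I(a)(ii)), applied to the kinetics of \eqref{eq:upper} with $\mathbf{\widehat q}$ and to the kinetics of \eqref{eq:lower} with $\mathbf{\widetilde q}$. This yields a single interior equilibrium for each bounding system, call them $E_{1*}$ and $E_{1**}$ respectively, and Theorem~\ref{thm:stab_pos} then identifies both as saddles. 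Theorem~\ref{thm:km1} lets me pass from the flat--initial--data PDE dynamics of \eqref{eq:upper}, \eqref{eq:lower} to these kinetic ODE systems.

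Next, the stable--manifold theorem gives $C^{1}$ separatrices $W^{1}_{s}(E_{1*})$ and $W_{s}(E_{1**})$ through the two saddles. In each bounding system, initial data chosen above the corresponding separatrix is driven to $(0,1)$, while data chosen below is driven to the attractor at $(1,0)$ (which, at $c=1$, is the attracting saddle--node identified in Theorem~4.1(3)). Since $\frac{1}{1+\mathbf{\widetilde q}v}\le \frac{1}{1+\mathbf{\widehat q}v}$, the $u$--coordinate of $E_{1**}$ exceeds that of $E_{1*}$, and the $C^{1}$ regularity of the two separatrices produces a wedge $\mathbb{V}$ emanating from $E_{1*}$ on which the ordering $W^{1}_{s}(E_{1*}) \le W_{s}(E_{1**})$ holds, exactly as in the proof of Theorem~\ref{thm:str_str}.

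With this geometry in hand, I would conclude as follows. Choose flat initial data $(u_1(x),v_1(x)) \equiv (c_1,d_1)$ inside $\mathbb{V}$ and small enough to lie strictly below $W^{1}_{s}(E_{1*})$; then it also lies below $W_{s}(E_{1**})$, so both bounding systems flow to $(1,0)$, and the sandwich from Theorem~\ref{lem:com1} together with a squeezing argument yields $(u,v)\to(1,0)$ uniformly. Symmetrically, for flat initial data $(u_0(x),v_0(x))\equiv(c_0,d_0)$ taken large enough to lie above $W_{s}(E_{1**})$ (hence above $W^{1}_{s}(E_{1*})$ as well), both bounding systems flow to $(0,1)$, and the same squeeze yields $(u,v)\to(0,1)$.

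The main obstacle I anticipate is the degeneracy at $c=1$: the boundary equilibrium $E_{1}=(1,0)$ is no longer a hyperbolic node but an attracting saddle--node by Theorem~4.1(3), so one must ensure that the portion of $\mathbb{V}$ used to supply the ``small initial data'' actually lies in the parabolic (attracting) sector rather than in the hyperbolic sector of $E_{1}$. The sector orientation is dictated by the sign of $p-(1-a)$ from Theorem~4.1(3), and the wedge $\mathbb{V}$ must be refined accordingly so that trajectories below $W^{1}_{s}(E_{1*})$ in both bounding systems are genuinely captured by $(1,0)$. Once this orientation check is made, the rest of the argument proceeds exactly as for Theorem~\ref{thm:str_str}.
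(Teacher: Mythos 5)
Your proposal matches the paper's treatment: the paper gives no separate argument for this theorem, stating only that the proof ``follows the same argument as Theorem~\ref{thm:str_str}'', i.e. the comparison bounds of Theorem~\ref{lem:com1}, the interior saddle from Theorems~\ref{thm:exist} and \ref{thm:stab_pos}, the stable-manifold separatrices, the wedge construction, and the squeezing argument — exactly what you reconstruct. Your extra check that the ``small'' data must lie in the attracting (parabolic) sector of the saddle-node $E_1=(1,0)$ when $c=1$ is a careful refinement the paper glosses over, but it does not change the route.
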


	\begin{figure}[h]
		\subfigure[]
		{\scalebox{0.45}[0.45]{
				\includegraphics[width=\linewidth,height=4.5in]{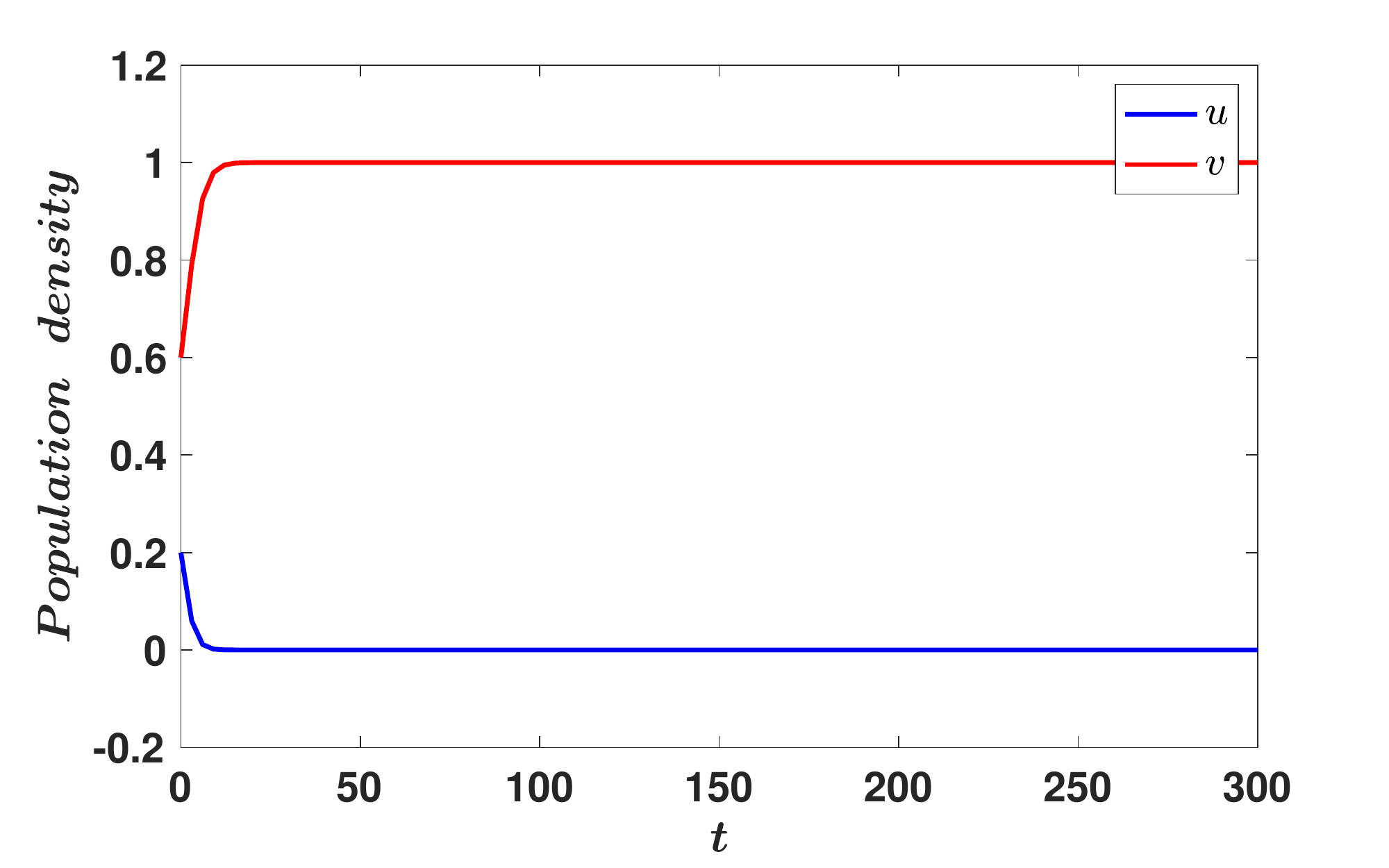}}}
		\subfigure[]
		{\scalebox{0.45}[0.45]{
				\includegraphics[width=\linewidth,height=4.5in]{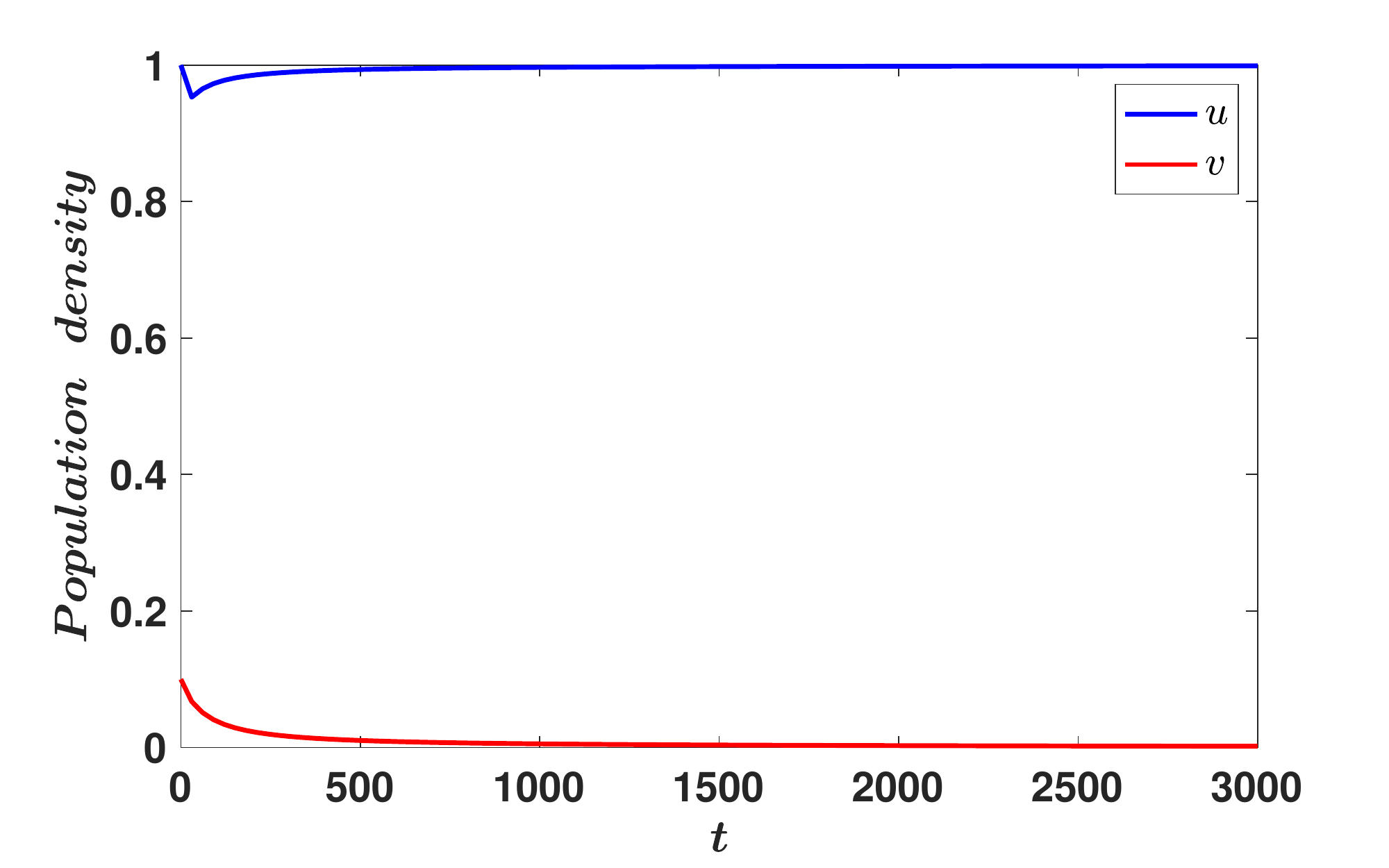}}}
		\caption{The parameters specified for the simulation of the reaction-diffusion system given in (\ref{PDEmodel}) on the spatial domain $\Omega=[0,1]$ for strong competition with fear function $q(x)=0.1+\sin^2 (4x)$ and strong Allee effect are $d_1=1$, $d_2=1$, $a=0.5$, $b=0.5$, $c=1$, and $p=0.2$. Two different sets of initial data is used: (a) $[u_0,v_0]=[0.2,0.6]$ and (b) $[u_0,v_0]=[1,0.1]$. It should be noted that these parameters satisfy the parametric constraints specified in the Theorem~\ref{thm:str_str}.}
		\label{fig_st_st_al}
	\end{figure}

	\begin{figure}[h]
		\subfigure[]
		{\scalebox{0.45}[0.45]{
				\includegraphics[width=\linewidth,height=4.5in]{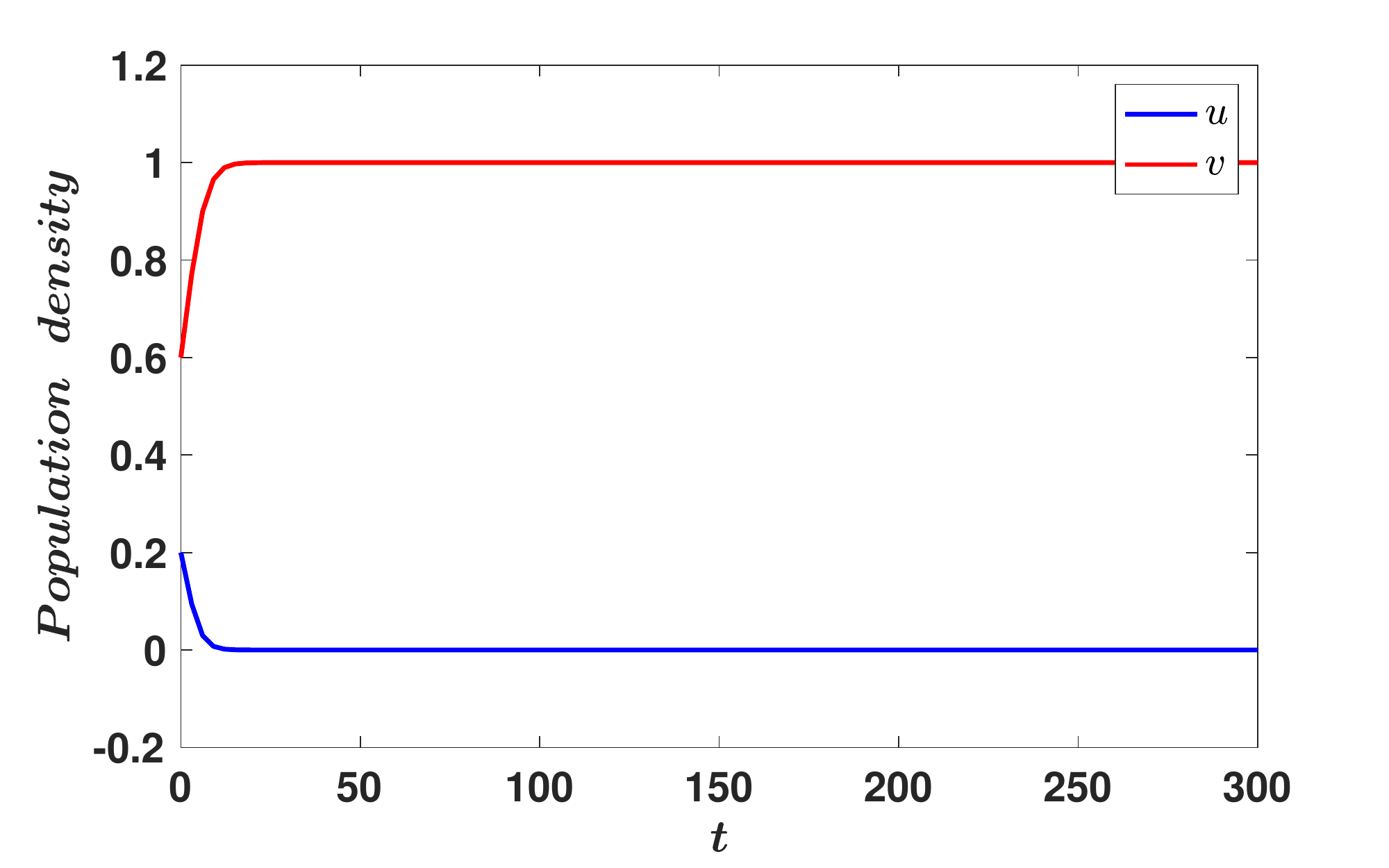}}}
		\subfigure[]
		{\scalebox{0.45}[0.45]{
				\includegraphics[width=\linewidth,height=4.5in]{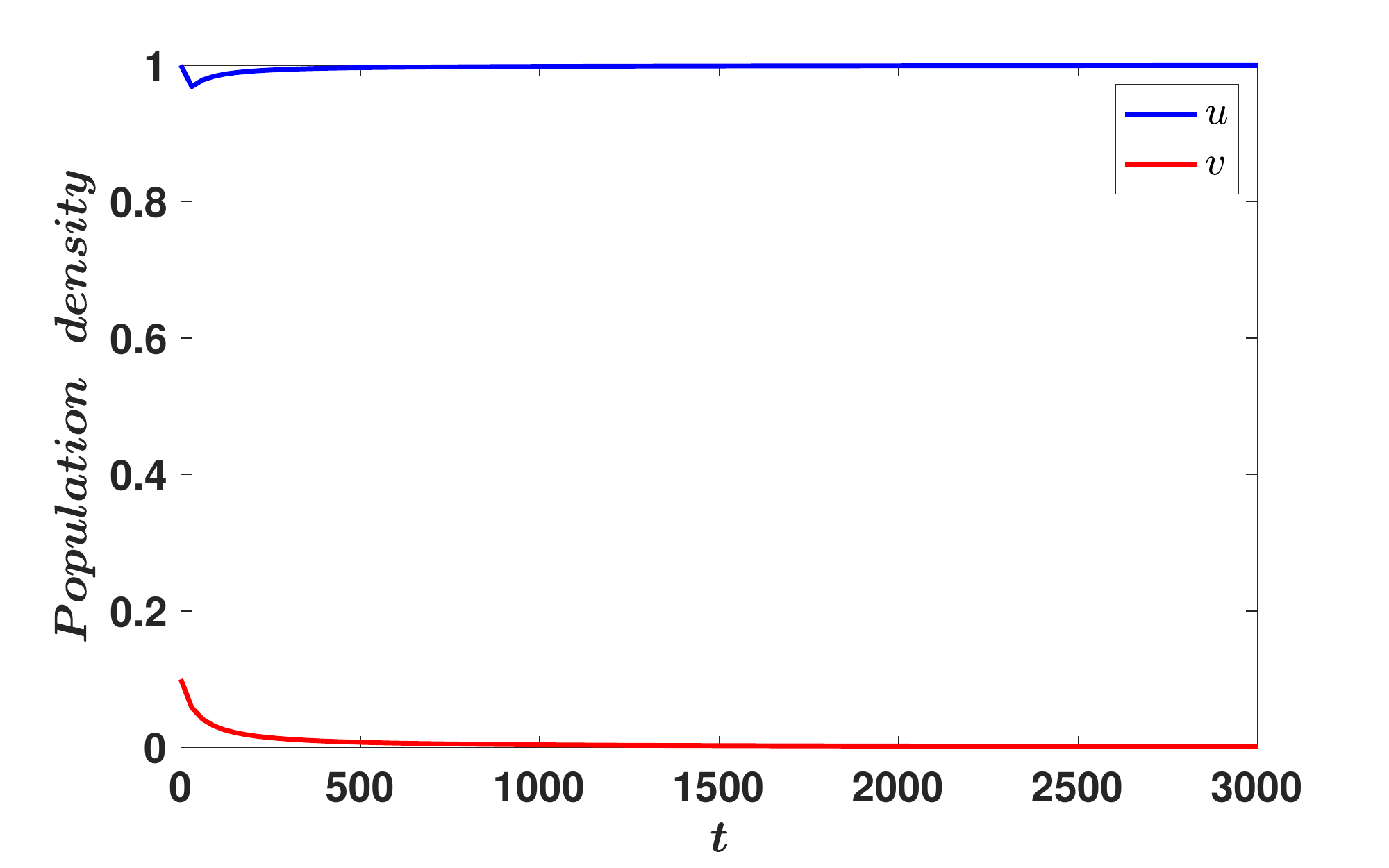}}}
		\caption{The parameters specified for the simulation of the reaction-diffusion system given in (\ref{PDEmodel}) on the spatial domain $\Omega=[0,1]$ for strong competition with fear function $q(x)=0.1+\sin^2 (4x)$ and weak Allee effect are $d_1=1$, $d_2=1$, $a=0.5$, $b=0.5$, $c=1$, and $p=0$. Two different sets of initial data is used: (a) $[u_0,v_0]=[0.2,0.6]$ and (b) $[u_0,v_0]=[1,0.1]$. It should be noted that these parameters satisfy the parametric constraints specified in the Theorem~\ref{thm:str_weak}}.
		\label{fig_st_wk_al}
	\end{figure}

	\begin{remark}
		For the reaction diffusion system (\ref{PDEmodel}) for the Strong Allee effect with a fear function $q(x)$ that satisfies the parametric restriction
		\begin{equation}\label{eq:st_pde_str_allee}
			\Delta >0 \quad \& \quad c>1
		\end{equation}
		holds true for $\mathbf{q}=\mathbf{\widehat{q}},\mathbf{\widetilde{q} }$. Then there exists sufficiently small data $[u_0 (x),v_0 (x)]$, for which the solution $(u,v)$ to (\ref{PDEmodel}) converges uniformly to the spatially homogeneous state $(1,0)$ as $t \to \infty$.
	\end{remark}

	Denote $q_c$, critical amount of fear required to transit from the globaly stable competition exclusion state $(0,v^*)$ to the other dynamical case, i.e for any $c \in (0,1)$ and $\mathbf{q} \in [0,q_c]$, we have 
	\[ \Delta (\mathbf{q}) > 0 \quad \& \quad 2 A_1 + A_2>0.\]
	
	\begin{theorem}\label{thm:two_post}
		For the reaction diffusion system (\ref{PDEmodel}) for the Allee effect (both strong and weak) with a fear function $q_{\epsilon}(x)$, where $(0 < \epsilon \ll 1)$ such that the parametric restriction
		\begin{equation}\label{eq:st_pde_str_allee}
			\Delta >0, \quad \quad 2 A_1 + A_2 >0,\quad \text{and} \quad 0< c < 1
		\end{equation}
		holds true for $\mathbf{\widetilde{q_\epsilon}}$ and $q(x)\equiv 0$. Then for some initial data $[u_0(x),v_0(x)]$ such that the solution $(u,v)$ to (\ref{PDEmodel}) converges uniformly to  $(u^*,v^*)$ as $t \to \infty$, while for some choice of initial data $[u_1 (x),v_1 (x)]$, for which the solution $(u,v)$ to (\ref{PDEmodel}) converges uniformly to the spatially homogeneous state $(0,1)$ as $t \to \infty$.
	\end{theorem}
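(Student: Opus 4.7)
The plan is to transport the bistable ODE structure of the kinetic system into the PDE setting via the comparison framework of Lemma~\ref{lem:com1}, using the smallness of $\epsilon$ to preserve the bistable phase portrait in both of the bounding systems \eqref{eq:upper} and \eqref{eq:lower}. First I would verify that the parametric restrictions $\Delta>0$, $2A_1+A_2>0$, $0<c<1$ pass to both $\mathbf{\widehat{q_\epsilon}}$ and $\mathbf{\widetilde{q_\epsilon}}$: they hold at $q\equiv 0$ and at $\mathbf{\widetilde{q_\epsilon}}$ by hypothesis, and since $A_1,A_2,A_3$, and hence $\Delta$ and $2A_1+A_2$, are continuous in the fear parameter, the same restrictions propagate throughout $[0,\mathbf{\widetilde{q_\epsilon}}]$ for $\epsilon$ small.

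With these conditions in force, Theorem~\ref{thm:exist} (Case I.1.(a).(i)) yields two positive equilibria in each of the kinetic ODE systems associated with \eqref{eq:upper} and \eqref{eq:lower}, and Theorem~\ref{thm:stab_pos} identifies them as a saddle $E_{1*}$ with stable coexistence node $E_{2*}$ for the upper system, and a saddle $E_{1**}$ with stable coexistence node $E_{2**}$ for the lower system. Together with the stable node at $(0,1)$, each bounding kinetic is genuinely bistable between coexistence and competitive exclusion of $u$. The Stable Manifold Theorem then produces $C^1$ separatrices $W_s(E_{1*})$ and $W_s(E_{1**})$ partitioning the positive quadrant into the two basins of attraction; by smooth dependence of invariant manifolds on parameters, for $\epsilon$ small both separatrices sit in a thin neighborhood of the $q\equiv 0$ separatrix, so they carve out two open wedges of flat initial data --- one lying strictly below both separatrices and one lying strictly above.

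For initial data $[u_1(x),v_1(x)]$ chosen flat and in the upper wedge, Theorem~\ref{thm:km1} applied to the bounding systems (which carry spatially constant fear coefficients) reduces them to their kinetic ODEs, each of which converges to $(0,1)$; Lemma~\ref{lem:com1} sandwiches the PDE solution between these two and a squeezing argument delivers $(u,v)\to(0,1)$ uniformly as $t\to\infty$. For flat initial data $[u_0(x),v_0(x)]$ in the lower wedge, the two bounding ODEs converge to $E_{2*}$ and $E_{2**}$, both of which are $\mathcal{O}(\epsilon)$-close to the coexistence equilibrium $(u^*,v^*)$ of the $q\equiv 0$ kinetic, and the comparison traps the $\omega$-limit set of $(u,v)$ between these two nearby equilibria in the natural product ordering, forcing convergence to a coexistence state in this perturbative regime.

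The hard part will be the coexistence case: the sandwich only confines the $\omega$-limit set inside a small order-interval of coexistence equilibria, and does not by itself isolate a single spatially homogeneous profile when $q(x)$ is genuinely heterogeneous. To identify the limit with a definite $(u^*,v^*)$ I would exploit the $\mathcal{O}(\epsilon)$-closeness of $E_{2*}$ and $E_{2**}$ together with the smoothness of these coexistence branches in the fear parameter; upgrading this to genuine pointwise convergence in the heterogeneous PDE would ultimately require either an energy/Lyapunov estimate restricted to the coexistence basin or a monotonicity argument tailored to the spatial dependence of $q(x)$, which is the only nontrivial piece beyond the comparison machinery already in place.
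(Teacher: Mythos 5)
Your proposal follows essentially the same route as the paper: comparison via Lemma \ref{lem:com1} with constant-fear bounding systems, bistability of the bounding kinetics from Theorems \ref{thm:exist} and \ref{thm:stab_pos}, stable-manifold separatrices, a wedge of flat data below both separatrices and one above, and a squeezing argument. Two remarks. First, the paper takes the upper barrier to be the no-fear system \eqref{eq:lv_model} (consistent with the hypothesis that the restriction holds at $q\equiv 0$) rather than \eqref{eq:upper}; your continuity claim that the conditions at the endpoints propagate to $\mathbf{\widehat{q_\epsilon}}$ is not automatic from the two endpoint hypotheses alone, but it is covered by the paper's definition of $q_c$ (the restriction holds on all of $[0,q_c]$), so this is harmless. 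Second, the ``hard part'' you defer --- identifying the coexistence limit when the two bounding attractors $E_{2*}$ and $E_{2**}$ differ --- is exactly where the paper does something cheaper than what you propose: it does not invoke any Lyapunov or monotonicity argument, but simply notes that the two spatially homogeneous limits may differ and then sends $\epsilon\to 0$ after $t\to\infty$, so that the sandwich tightens and the conclusion is asserted in the double-limit sense $\lim_{\epsilon\to 0}\lim_{t\to\infty}(u,v)=(u^*,v^*)$. In other words, your concern about pinning down a single limit for fixed $\epsilon$ is legitimate, but the paper's own proof sidesteps it by interpreting the convergence in this weaker sense; with that reading, your argument matches the paper's and the extra machinery you sketch is not needed.
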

	
	\begin{proof}
		Given $\epsilon$ such that $0 < \epsilon \ll 1$, we can always construct a fear function $q_\epsilon (x)$ such that $q_c -\epsilon \le \mathbf{\overline{q_\epsilon}}$ and $\mathbf{\widetilde{q}_\epsilon} \le q_c +\epsilon$. Hence from \ref{lem:com1}, we have $\widetilde{u} \le u_\epsilon \le \overline{u}$.
		
		Now, consider the reaction diffusion system \eqref{eq:lv_model}. Since the $q(x)\equiv 0$ satisfies the parametric restriction, from Theorems~\ref{thm:exist}(a) and Theorem~\ref{thm:stab_pos}, there exists a interior saddle equilibrium $E_{1*}$ and interior stable equilibrium $E_{2*}$ to the kinetic (ODE) system \eqref{eq:lv_model}. On making use of the stable manifold theorem, i.e., $\exists \hspace{0.1in} W^1_{s}(E_{1*}) \in \mathcal{C}^{1}$ separatrix, such that for initial data $(\overline{u}_0,\overline{v}_0)$ chosen above $W^1_{s}(E_{1*})$ the solution $(\overline{u},\overline{v}) \to (0,1)$ and for initial data chosen below $W^1_{s}(E_{1*})$, $(\overline{u},\overline{v}) \to (u^*,v^*)$ as $t\to \infty$.

		Also, for the reaction diffusion system \eqref{eq:lower}, since the $q_{\epsilon}$ satisfies the parametric restriction, from Theorems~\ref{thm:exist}(a) and Theorem~\ref{thm:stab_pos}, there exists a interior saddle equilibrium $E_{1**}$ and interior stable equilibrium $E_{2**}$ to the kinetic (ODE) system. Via standard theory again, $\exists \hspace{0.1in} W_{s}(E_{1**}) \in \mathcal{C}^{1}$ separatrix, such that for initial data $(\widetilde{u}_0,\widetilde{v}_0)$ chosen above $W_{s}(E^{1**})$ the solution $(\widetilde{u},\widetilde{v}) \to (0,1)$ and for initial data chosen below $W_{s}(E_{1**})$, $(\widetilde{u},\widetilde{v})\to (u^{**},v^{**})$ as $t \to \infty.$

		Now, since $\dfrac{1}{1 + \mathbf{\widetilde{q}} v} \le \dfrac{1}{1 + \mathbf{\widehat{q}} v}$, the $u$ component of $E_{1**}$ is more than the $u$ component of $E_{1*}$. Now using the $\mathcal{C}^{1}$ property of the separatricies $W^{1}_{s}(E_{1*}) , W_{s}(E_{1**})$, we have the existence of a wedge $\mathbb{V}_1$ emanating from $E_{1*}$, s.t within $\mathbb{V}_1$ we have $W^{1}_{s}(E_{1*}) \leq W_{s}(E_{1**})$. Similarly, since the $v$ componet of  $E_{1*}$ is higher than $v$ component of $E_{1**}$, we have the existence of wedge $\mathbb{V}_2$ emanating from $E_{1*}$, s.t within $\mathbb{V}_2$ we have $W^{1}_{s}(E_{1*}) \geq W_{s}(E_{1**})$. Let us consider positive initial data $(u_0,v_0)$ chosen small enough, within $\mathbb{V}_1$ s.t. $  (u_0,v_0) < W^{1}_{s}(E_{1*})  < W_{s}(E_{1**})$, we will have $(\overline{u},\overline{v}) \to (u^*,v^*)$ and $(\widetilde{u},\widetilde{v})\to (u^{**},v^{**})$. Note the spatially homogeneous solutions may be different. Hence, from squeezing argument, we can take $\epsilon \to 0$, to yield the uniform convergence of solutions, i.e.,
		\[ \lim_{\epsilon \to 0} \lim_{t \to \infty} (u,v) \to (u^*,v^*).\]
		
		On the other hand, let us consider sufficiently large positive initial data $(u_1,v_1)$, within $\mathbb{V}_2$ s.t. $  (u_1,v_1)>W^{1}_{s}(E_{1*}) \geq W_{s}(E_{1**})$, we will have $(\overline{u},\overline{v}) \to (0,1)$ and $(\widetilde{u},\widetilde{v})\to (0,1)$, i.e.,
		\begin{align*}
			\Big\{  (0,1)\Big\} \le \Big\{ (u,v) \Big\} \le \Big\{ (0,1) \Big\}.
		\end{align*}
		This proves the theorem.	
	\end{proof}
	
	\begin{figure}[H]\label{two_sepratrix}
		\centering
		\includegraphics[scale=0.4]{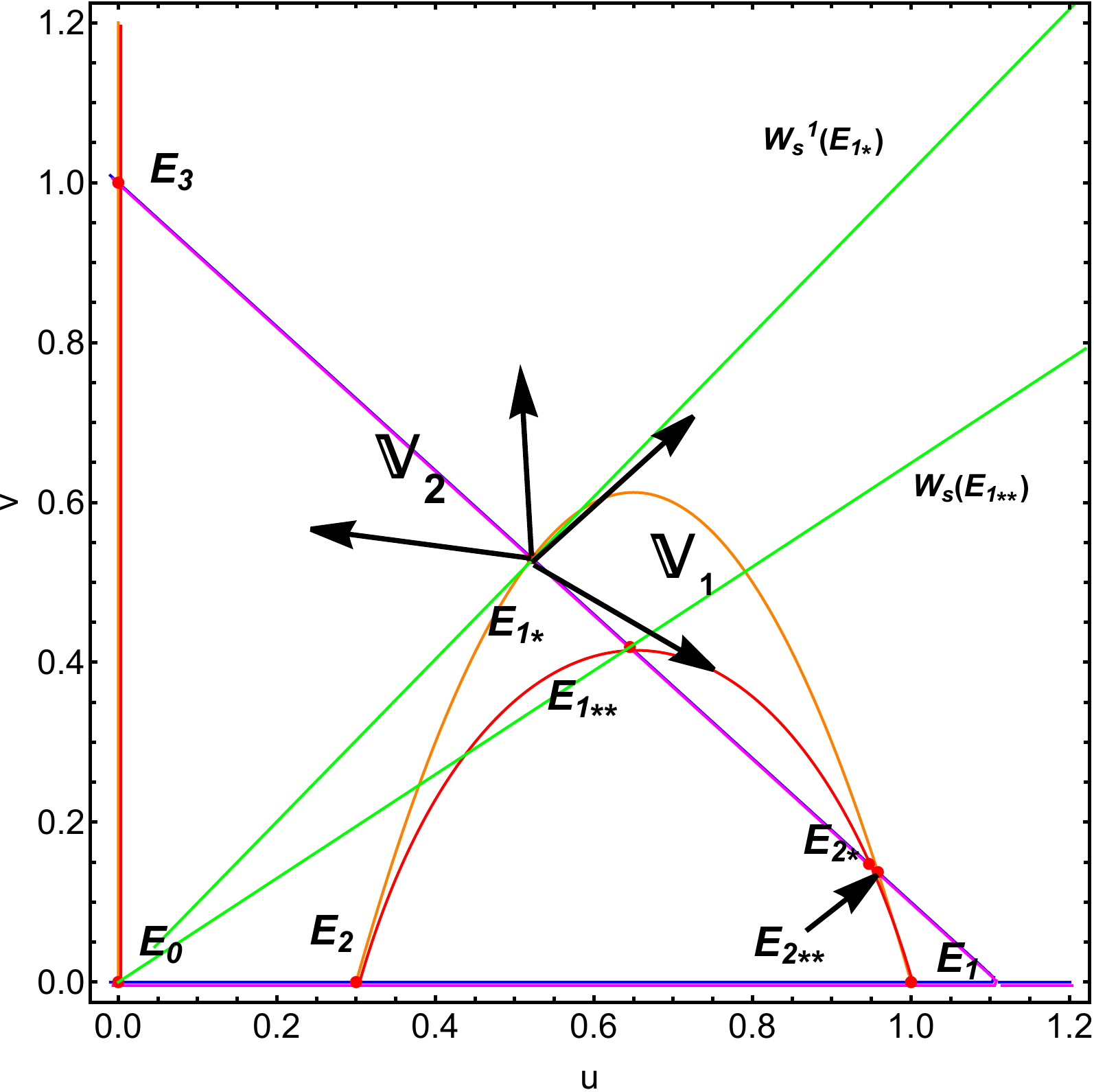}
		\caption{Numerical simulation for the Allee competition PDE case when species $v$ is fearing species $u$ in $\Omega=[0,1]$. The parameters are chosen as $a=0.2,b=0.9,c=0.9$ and $p=0.3$. Equilibria: $E_{1*}=(0.5218,0.5303),E_{1**}=(0.6453,0.4191), E_{2*}=(0.9581,0.1376),E_{2**}=(0.9468,0.1478), E_2=(0.3,0), E_1=(1,0),E_3=(0,1)$ and $E_0=(0,0)$. $W_{s}^{1}(E_{1*})$ ($q=0$) and $W_{s} (E_{1**)}$ ($q=1.1$) are two sepratrices passing through $E_{1*}$ and  $E_{1**}$ respectively. The $C^{1}$ property of the separatrices $W^{1}_{s}(E^{1*} , W_{s}(E_{1**})$, show a wedge $\mathbb{V}_1$ and $\mathbb{V}_2$ both emanating from $E_{1*}$, s.t in $\mathbb{V}_1$ we have $W_{s}(E^{1**} \geq W^{1}_{s}(E_{1*})$ and in $\mathbb{V}_2$ we have $W_{s}(E^{1**} \leq W^{1}_{s}(E_{1*})$. The u-nullcline is in orange for $q=0$  and red for $q=1.1$. The v-nullcline is in magenta for $q=0$ and $q=1.1$.}
	\end{figure}

	The numerical simulations (See Figure~\eqref{fig_conj_st_al} and \eqref{fig_conj_wk_al}), motivates us to formulate a conjecture:
	\begin{conjecture}\label{conj:two_post}
		For the reaction diffusion system (\ref{PDEmodel}) for the Allee effect (both strong and weak) with a fear function $q(x)$ such that the parametric restriction
		\begin{equation}\label{eq:st_pde_str_allee}
			\Delta >0, \quad \quad 2 A_1 + A_2 >0,\quad \text{and} \quad 0< c < 1
		\end{equation}
		holds true for $\mathbf{\widehat{q}}$ and $\mathbf{\widetilde{q}}$. Then there some initial data $[u_0(x),v_0(x)]$ such that the solution $(u,v)$ to (\ref{PDEmodel}) converges uniformly to  $(u^*,v^*)$ as $t \to \infty$, while for some choice of initial data $[u_1 (x),v_1 (x)]$, for which the solution $(u,v)$ to (\ref{PDEmodel}) converges uniformly to the spatially homogeneous state $(0,1)$ as $t \to \infty$.
	\end{conjecture}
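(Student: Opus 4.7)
The plan is to follow the architecture of Theorem~\ref{thm:two_post} but replace the $\epsilon\to 0$ squeezing (which is what forced that result into a perturbative regime) by a monotone dynamical systems argument. I would first freeze the fear coefficient at $\mathbf{\widehat q}$ and at $\mathbf{\widetilde q}$, obtaining the two bracketing kinetic systems associated to \eqref{eq:upper} and \eqref{eq:lower}; under the stated parametric restrictions, Theorems~\ref{thm:exist} and \ref{thm:stab_pos} furnish, in each of them, an interior saddle $E_{1*}$ (resp.\ $E_{1**}$) and a locally asymptotically stable coexistence node $E_{2*}$ (resp.\ $E_{2**}$), together with the stable boundary node $(0,1)$. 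Their stable manifolds give $C^{1}$ separatrices $W_{s}(E_{1*})$ and $W_{s}(E_{1**})$, and since $\mathbf{\widehat q}\le q(x)\le \mathbf{\widetilde q}$ forces $\frac{1}{1+\mathbf{\widetilde q} v}\le \frac{1}{1+q(x)v}\le \frac{1}{1+\mathbf{\widehat q} v}$, the roots of the quadratic \eqref{5} move toward each other with increasing fear, so the $u$-coordinate of $E_{1**}$ sits strictly above that of $E_{1*}$ while their $v$-coordinates reverse; in the same way $E_{2**}$ lies below $E_{2*}$ in $u$ and above it in $v$. Exactly as in Theorem~\ref{thm:two_post}, $C^{1}$-smoothness then produces two wedges $\mathbb V_{1}$ and $\mathbb V_{2}$ rooted near $E_{1*}$ on which the two separatrices are comparably ordered.

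The branch converging to $(0,1)$ is then essentially a repeat of the corresponding step in Theorem~\ref{thm:two_post}: for flat initial data $(u_{1},v_{1})\in \mathbb V_{2}$ lying above both separatrices, both \eqref{eq:upper} and \eqref{eq:lower} drive their flat kinetic orbits (Theorem~\ref{thm:km1}) to $(0,1)$, and Lemma~\ref{lem:com1} squeezes the full PDE solution to the same limit uniformly in $x$ by the argument already used in Theorem~\ref{thm:ce1}.

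The substantive half, which is precisely the reason the statement appears as a conjecture rather than a theorem, is the branch converging to coexistence. For flat initial data $(u_{0},v_{0})\in \mathbb V_{1}$ below both separatrices, the upper and lower comparison solutions converge to the \emph{different} constant states $E_{2*}$ and $E_{2**}$, so a direct squeeze only traps the PDE solution in an order interval rather than pinning it to a unique limit. My plan here is to exploit the fact that the reaction in \eqref{PDEmodel} is quasi-monotone decreasing off-diagonally — increasing $v$ lowers $\dot u$ through both the fear factor and the $-av$ term, and increasing $u$ lowers $\dot v$ — so the semiflow is order-preserving in the competitive order. The ordered pair $(E_{2**},E_{2*})$, which by the discussion above is ordered in the competitive sense, then serves as a sub/super-solution pair, and a standard monotone iteration yields a positive, possibly spatially heterogeneous equilibrium $(u^{*}(x),v^{*}(x))$ of \eqref{PDEmodel} inside the enclosed order interval. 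The theory of strongly order-preserving semiflows would then identify this equilibrium as the attractor of all trajectories initiating in the interval, completing the conjecture.

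The principal obstacle I anticipate is \emph{uniqueness} of the coexistence equilibrium inside the enclosed order interval: without it the monotone iteration yields only a minimal and a maximal equilibrium, and a continuum of intermediate steady states or a heteroclinic connection between them cannot be excluded by the general order-preserving machinery alone. Ruling this out in full generality appears to require either a strict spectral-gap estimate for the linearization at $(u^{*},v^{*})$ uniform in $q(x)$, or a Lyapunov functional tailored to the heterogeneous fear coefficient along the lines of the one used in Corollary~\ref{cor:ce1n}; neither is standard, and both may ultimately require a smallness or integrability hypothesis on $\mathbf{\widetilde q}-\mathbf{\widehat q}$, which is exactly why the unrestricted statement is posed here only as a conjecture.
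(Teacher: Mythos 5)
First, be aware that the statement you are addressing is posed in the paper as a \emph{conjecture}: the paper gives no proof, only the numerical evidence of Figs.~\ref{fig_conj_st_al} and \ref{fig_conj_wk_al}, and its rigorous result in this direction is the perturbative Theorem~\ref{thm:two_post}, which essentially requires $\mathbf{\widetilde{q}}-\mathbf{\widehat{q}}$ to be of order $\epsilon$ near the critical fear level $q_c$. So there is no paper proof to compare against, and your proposal must stand on its own. The $(0,1)$ branch of your argument is sound and is essentially the comparison/squeeze argument the paper already uses in Theorems~\ref{thm:ce1} and \ref{thm:two_post}; your ordering of the interior equilibria (saddle $u$-components increase, stable-node $u$-components decrease, as fear increases, by the motion of the roots of \eqref{5}) is also correct and matches the paper's usage.

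The coexistence branch is where the genuine gap sits, and you have named it yourself: monotone iteration between the constant quasi-sub/supersolutions built from $E_{2**}$ and $E_{2*}$ (whose differential inequalities you assert but never verify --- they do hold, since the relevant $u$-components lie in $(p,1)$ so $(1-u)(u-p)\ge 0$, and replacing $\mathbf{\widehat{q}}$ or $\mathbf{\widetilde{q}}$ by $q(x)$ then tilts the $u$-equation the right way, while the $v$-equation is $q$-independent and vanishes at both states) produces only a minimal and a maximal steady state in the order interval, and order-preserving machinery alone does not force convergence of trajectories to a single prescribed state. Two remarks sharpen this. (i) The conjecture asks only for \emph{some} initial data, so the full ``attractor of the interval'' statement you aim for is more than is needed: taking the flat initial datum equal to $E_{2*}$ (respectively $E_{2**}$), the solution of \eqref{PDEmodel} is monotone in time in the competitive order, stays in the order interval, and converges (uniformly, by parabolic regularity and compactness) to the maximal (respectively minimal) coexistence steady state --- no uniqueness required. (ii) However, that limit is in general spatially heterogeneous: for nonconstant $q(x)$ the system \eqref{PDEmodel} admits no constant interior equilibrium at all, since a constant $(u^*,v^*)$ would have to satisfy the $u$-equation pointwise for every value of $q(x)$. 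So either one reads the conjectured limit $(u^*,v^*)$ as ``some interior steady state,'' in which case the route just described closes your gap, or one insists on a specific (homogeneous) coexistence state, in which case your uniqueness/identification obstruction is unavoidable and a smallness condition on $\mathbf{\widetilde{q}}-\mathbf{\widehat{q}}$ of the type you anticipate puts you back precisely in the regime already covered by Theorem~\ref{thm:two_post}.
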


	\begin{figure}[h]
		\subfigure[]
		{\scalebox{0.45}[0.45]{
				\includegraphics[width=\linewidth,height=4.5in]{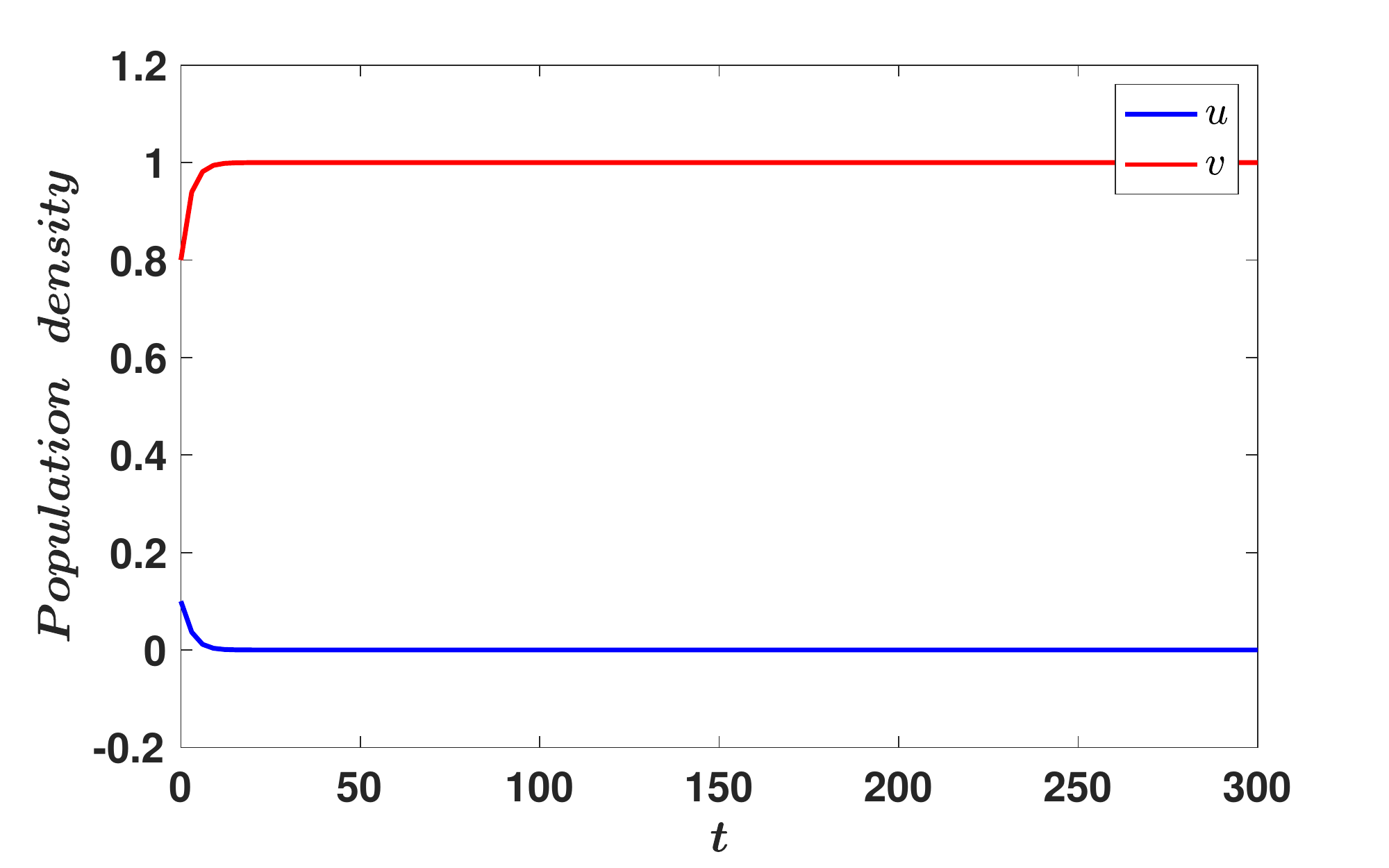}}}
		\subfigure[]
		{\scalebox{0.45}[0.45]{
				\includegraphics[width=\linewidth,height=4.5in]{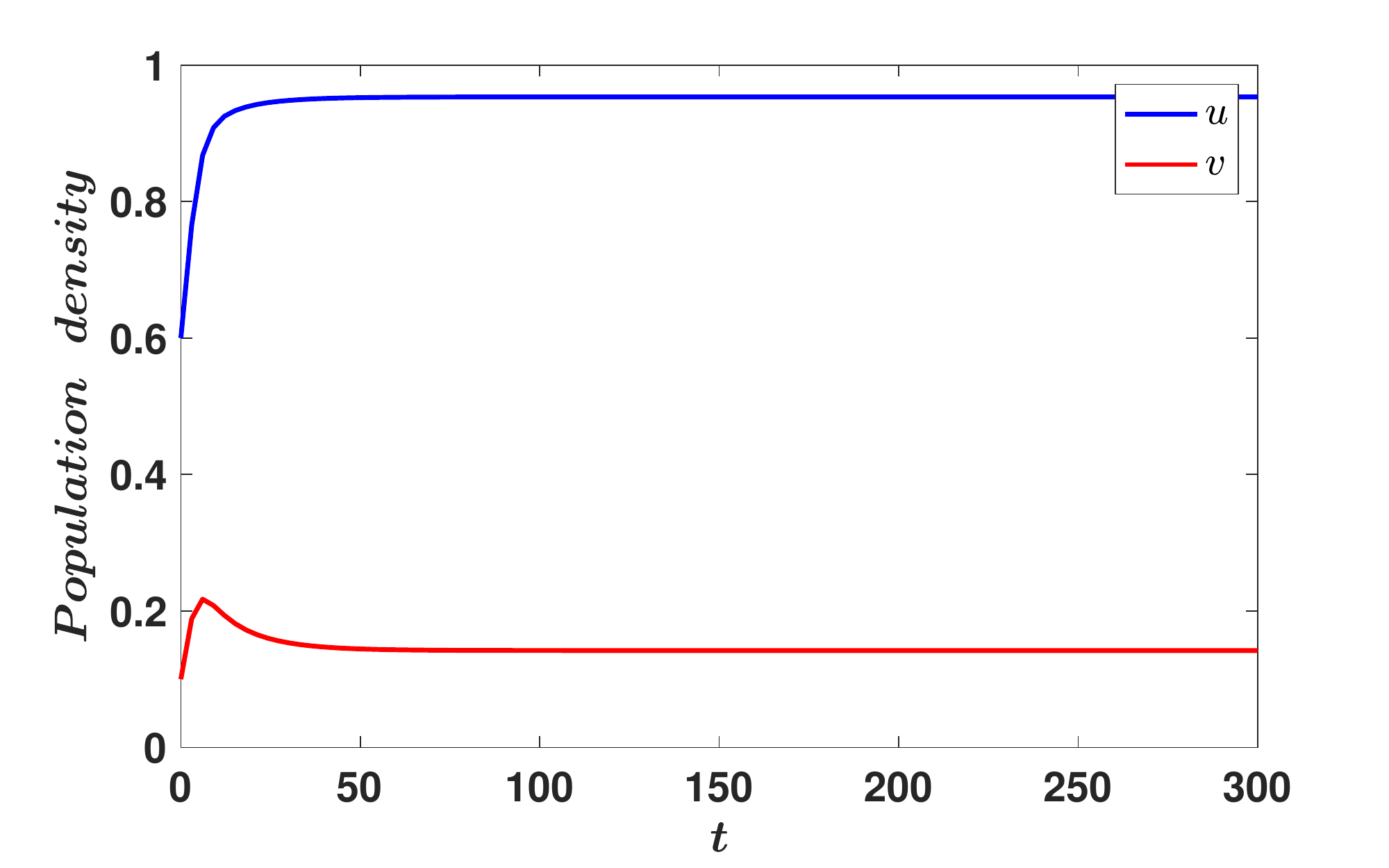}}}
		\caption{The parameters specified for the simulation of the reaction-diffusion system given in (\ref{PDEmodel}) on the spatial domain $\Omega=[0,1]$ for case of two positive interior equilibrium with fear function $q(x)=0.1+\sin^2 (4x)$ and strong Allee effect are $d_1=1,d_2=1,a=0.2,b=0.9,c=0.9$ and $p=0.3$. Two different sets of initial data is used: (a) $[u_0,v_0]=[0.1,0.8]$ and (b) $[u_0,v_0]=[0.6,0.2]$. It should be noted that these parameters satisfy the parametric constraints specified in the Theorem~\ref{conj:two_post}}.
		\label{fig_conj_st_al}
	\end{figure}

	\begin{figure}[h]
		\subfigure[]
		{\scalebox{0.45}[0.45]{
				\includegraphics[width=\linewidth,height=4.5in]{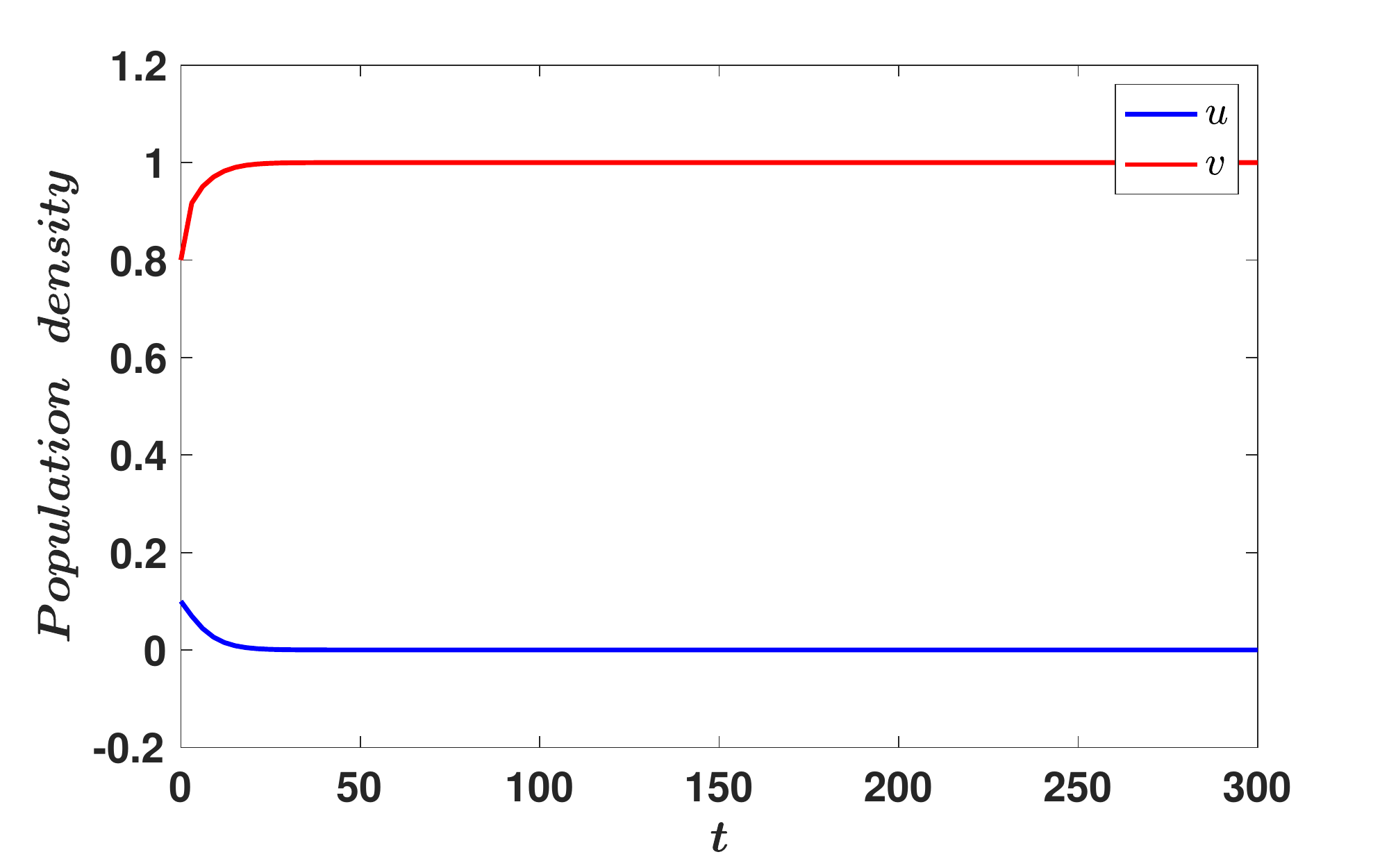}}}
		\subfigure[]
		{\scalebox{0.45}[0.45]{
				\includegraphics[width=\linewidth,height=4.5in]{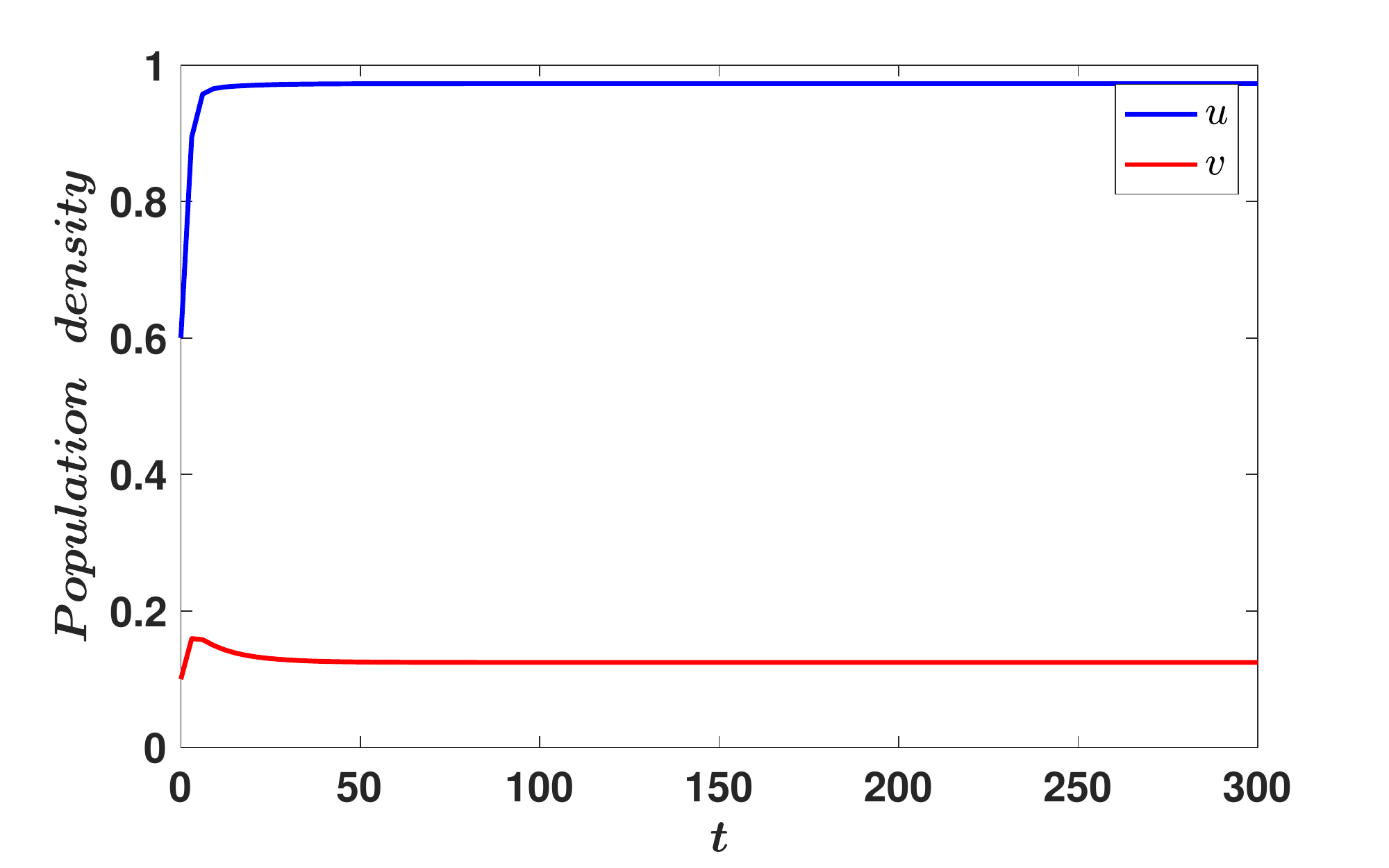}}}
		\caption{The parameters specified for the simulation of the reaction-diffusion system given in (\ref{PDEmodel}) on the spatial domain $\Omega=[0,1]$ for case of two positive interior equilibrium with fear function $q(x)=0.1+\sin^2 (4x)$ and weak Allee effect are $d_1=1,d_2=1,a=0.2,b=0.9,c=0.9$ and $p=0$. Two different sets of initial data is used: (a) $[u_0,v_0]=[0.1,0.8]$ and (b) $[u_0,v_0]=[0.6,0.2]$. It should be noted that these parameters satisfy the parametric constraints specified in the Theorem~\ref{conj:two_post}}.
		\label{fig_conj_wk_al}
	\end{figure}

	Let's make some observations about the spatially Homogeneous case. This is stated next,
	
	\begin{theorem}\label{lem:t11}
		For the reaction-diffusion system \eqref{PDEmodel}, for the Allee effect with a fear function $q(x) \equiv q >0$, where $q$ is a pure constant. Then the system does not possess diffusion-driven instability for any range of parameters or diffusion coefficients.
	\end{theorem}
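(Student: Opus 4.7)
The plan is to apply the standard Turing analysis around each spatially homogeneous stable equilibrium of \eqref{PDEmodel} with $q(x)\equiv q$, and to show that the relevant Jacobian never satisfies the sign conditions required for diffusion to destabilize it. Because flat initial data are prescribed, by Theorem~\ref{thm:km1} the ODE and PDE share the same equilibria, so the candidate stationary states are $E_0,E_1,E_2,E_3$ and, when they exist, $E_{1*},E_{2*},E_{3*}$. Diffusion-driven instability can only possibly arise at an equilibrium which is already linearly stable for the kinetics; hence by Theorem~\ref{thm:stab_pos} and the boundary-equilibrium analysis of Section~4.1 the only candidates are $E_3=(0,1)$ and the coexistence state $E_{2*}$.

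First I would carry out the classical dispersion-relation computation. Writing the linearized PDE at a stable homogeneous equilibrium as $w_t = J w + D\Delta w$ with $D=\mathrm{diag}(d_1,d_2)$, expanding in the Neumann eigenfunctions with eigenvalues $-k^{2}\le 0$, and considering
\begin{equation*}
h(k^{2}) \;=\; d_1 d_2 k^{4} - (d_2 J_{11}+d_1 J_{22})k^{2} + \det(J),
\end{equation*}
Turing instability requires $h(k^{2})<0$ for some $k^{2}>0$. Since $\det(J)>0$ at a stable equilibrium and $d_1 d_2 >0$, a necessary condition is $d_2 J_{11}+d_1 J_{22}>0$, which in turn forces at least one of $J_{11},J_{22}$ to be strictly positive.

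Next I would verify that this necessary sign condition fails at both candidates. At $E_3=(0,1)$ the Jacobian computed in Section~4.1 is diagonal with entries $-\tfrac{p}{q+1}-a<0$ and $-b<0$, so $d_2 J_{11}+d_1 J_{22}<0$ for every $d_1,d_2>0$ and every parameter choice. At $E_{2*}$, the matrix block \eqref{10} has $B_4=-by_{*}<0$, and in the proof of Theorem~\ref{thm:stab_pos} it was established that $B_1<0$ as well (this was deduced from $\det J(E_{2*})>0$ together with the already known signs of $B_2,B_3,B_4$). Therefore $d_2 B_1 + d_1 B_4 <0$ for all positive diffusion coefficients, so $h(k^{2})$ has positive leading coefficient, positive constant term, and non-positive linear coefficient in $k^{2}$, hence is strictly positive on $(0,\infty)$.

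Putting these together, at every linearly stable homogeneous equilibrium one has $\mathrm{tr}(J-k^{2}D)<\mathrm{tr}(J)<0$ and $h(k^{2})>0$ for all $k^{2}\ge 0$, so both eigenvalues of $J-k^{2}D$ have negative real part for every admissible wave number. No spatial mode can be destabilized, and the conclusion follows. The only mildly delicate step is the reuse of the sign $B_1<0$ at $E_{2*}$: this is not immediate from the expression for $B_1$ but was already extracted in Section~4.2, so I would simply cite that argument rather than redo it. All other steps are routine verification of the Turing criterion.
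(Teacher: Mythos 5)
Your proposal is correct and follows essentially the same route as the paper: the key step in both is that at the stable coexistence equilibrium the diagonal Jacobian entries satisfy $B_1<0$ and $B_4<0$ (with $B_1<0$ deduced from $\det J(E_{2*})>0$ and the known signs of $B_2,B_3,B_4$), which kills the necessary condition $d_2J_{11}+d_1J_{22}>0$ in the dispersion relation. Your additional check of $E_3$ and the explicit $h(k^2)$ computation are fine (note only that $J(E_3)$ is lower triangular rather than diagonal, which does not affect the argument).
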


	\begin{proof}
		The Jacobian matrix of the positive interior equilibrium, denoted as $E_{i*}$, for the reaction-diffusion system \eqref{PDEmodel} can be expressed as follows:
		\begin{equation*}
			J(u^*,v^*)  \nonumber
			= \begin{bmatrix}
				u^*\displaystyle\frac{\partial f}{\partial u_*} & u^*\displaystyle\frac{\partial f}{\partial v_*} \vspace{2ex}\\
				v_*\displaystyle\frac{\partial g}{\partial u_*} & v_*\displaystyle\frac{\partial g}{\partial v_*}
			\end{bmatrix}\triangleq\begin{bmatrix}
				B_1& B_2\\
				B_3&B_4
			\end{bmatrix},
			\label{10}
		\end{equation*}
		where
		\[ B_1=\frac{u_* \left(-2 u_* +p +1\right)}{q v_* +1}, \quad B_2=u_* \left(-\frac{\left(1-u_* \right) \left(u_* -p \right) q}{\left(q v_* +1\right)^{2}}-a \right)<0, \]
		\[B_3=-bcv_*<0, \quad \& \quad B_4=-bv_*<0.\]
		
		Clearly, the sign of $B_i$ is negative for $i\in\{2,3,4\}.$ Note that $(u^*,v^*)$ is dynamically stable, and
		
		\[ \begin{aligned}
			\mathrm{det}(J(E_{i*})) &=\left.\left [ u_*v_*\frac{\partial f}{\partial v_*} \frac{\partial g}{\partial v_*}\left ( \frac{\mathrm{d} v_*^{(g)}}{\mathrm{d} u_*}-\frac{\mathrm{d} v_*^{(f)}}{\mathrm{d} u_*} \right ) \right ]\right|_{(u_*,v_*)}\\
			&=\left.B_1B_4-B_2B_3\right|_{(u_*,v_*)}\\
			&>0.
		\end{aligned}
		\]
		Hence $B_1$ is negative, and it follows that $B_1 B_4> 0$. This violates the necessary criterion for Turing instability. Consequently, standard theory \cite{Winter13} implies that diffusion-driven instability cannot exist for any range of parameters or diffusion coefficients.
		
	\end{proof}

	\section{Numerical Simulations}

	The \verb|MATLAB| R2021b software was utilized to perform a PDE simulation for the reaction-diffusion system featuring Allee effect, with consideration for spatially heterogeneous fear function $q(x)$. The pdepe function was used for solving 1-D initial boundary value problems in one spatial dimension. The simulation was executed on an 8-core CPU, Apple M1 pro-chip-based workstation, and lasted between 5-7 seconds when the unit interval [0,1] was taken as a spatial domain, partitioned into 1000 sub-intervals.
	
	Our theoretical results and conjecture for the spatially explicit setting were validated numerically via a time series analysis over an extended period using simulations with parameters that adhered to theorems' parametric constraints. We utilized the standard comparison theory in the spatially explicit setting to determine point-wise restrictions on the fear function $q(x)$, which resulted in competitive exclusion, strong competition, and multiple equilibria type dynamics for the reaction-diffusion system with spatially heterogeneous fear function and Allee effect (weak or strong). Theorems \ref{thm:ce1}, \ref{thm:str_str}, \ref{thm:str_weak}, and \ref{thm:two_post}, and Conjecture \ref{conj:two_post} demonstrated these outcomes, while Figs [\ref{fig_ce_st_al},\ref{fig_ce_wk_al},\ref{fig_st_st_al},\ref{fig_st_wk_al},\ref{fig_conj_st_al},\ref{fig_conj_wk_al}] were used to validate the numerical results.
	
	Theoretical results were validated numerically using various heterogeneous fear functions, with captions specifying all parameters used for the numerical simulations and the relevant theorems. Based on the model and comparison to the logistics equation, it is evident that the population of any species cannot exceed one, so all parameters were chosen within the range [0,1].

	\begin{figure}[h]
		\subfigure[]
		{\scalebox{0.33}[0.45]{
				\includegraphics[width=\linewidth,height=5in]{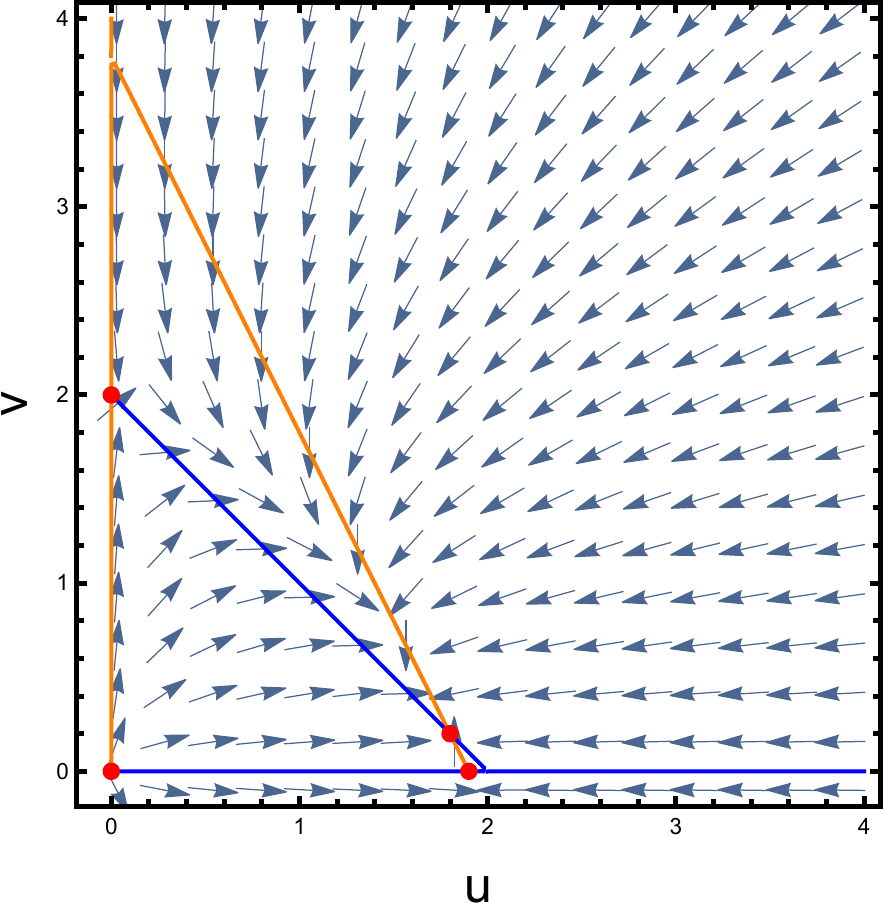}}}
		\subfigure[]
		{\scalebox{0.33}[0.45]{
				\includegraphics[width=\linewidth,height=5in]{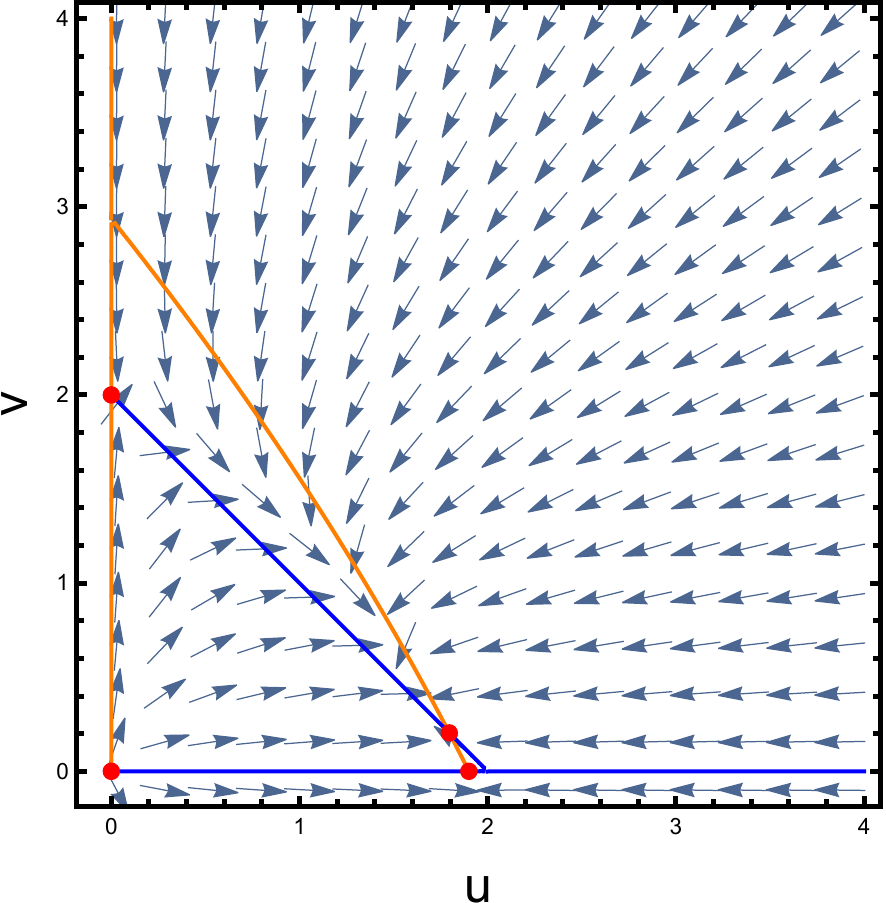}}}
		\subfigure[]
		{\scalebox{0.33}[0.45]{
				\includegraphics[width=\linewidth,height=5in]{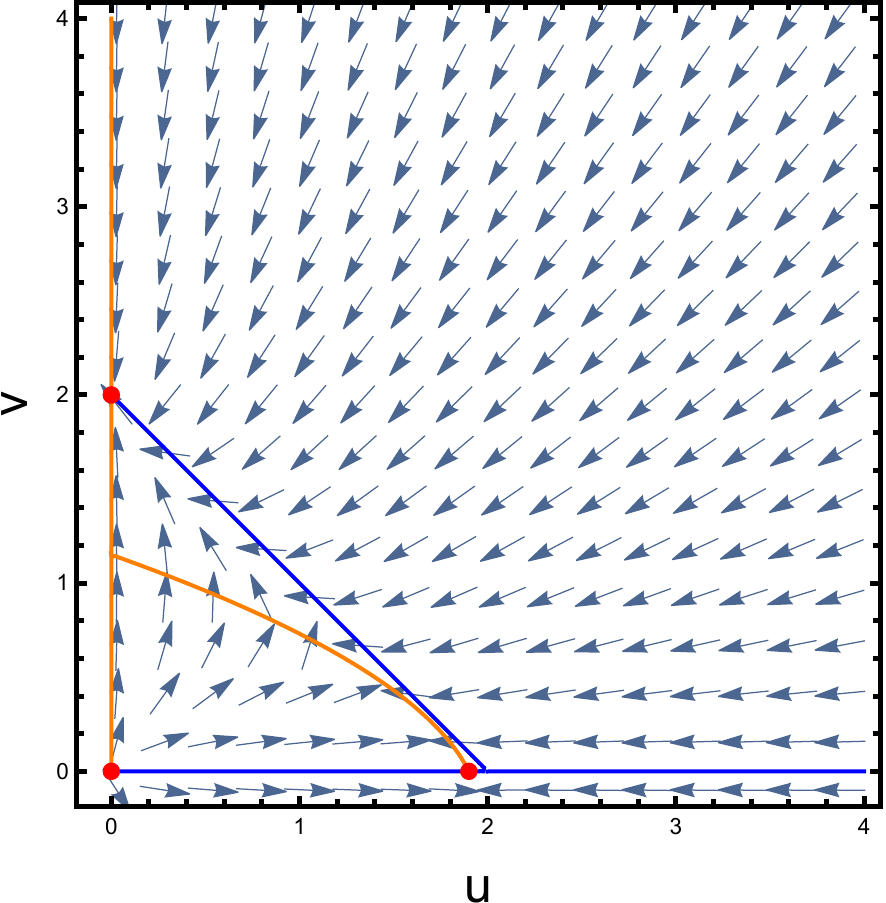}}}
		\newline
		\subfigure[]
		{\scalebox{0.33}[0.45]{
				\includegraphics[width=\linewidth,height=5in]{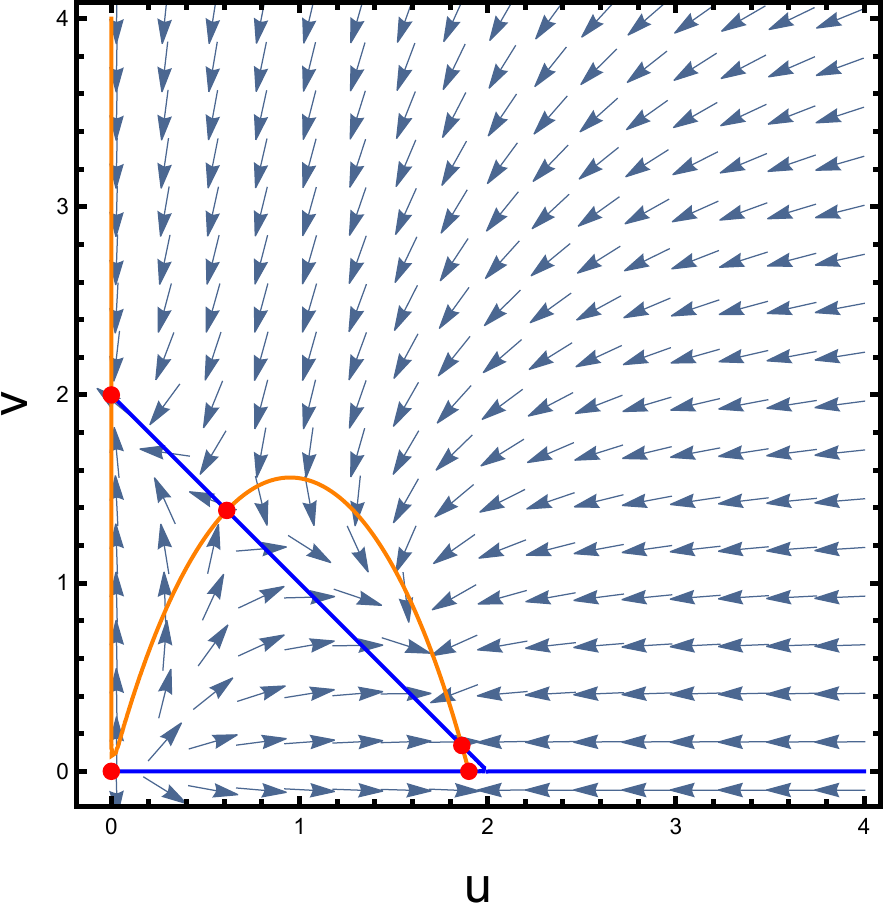}}}
		\subfigure[]
		{\scalebox{0.33}[0.45]{
				\includegraphics[width=\linewidth,height=5in]{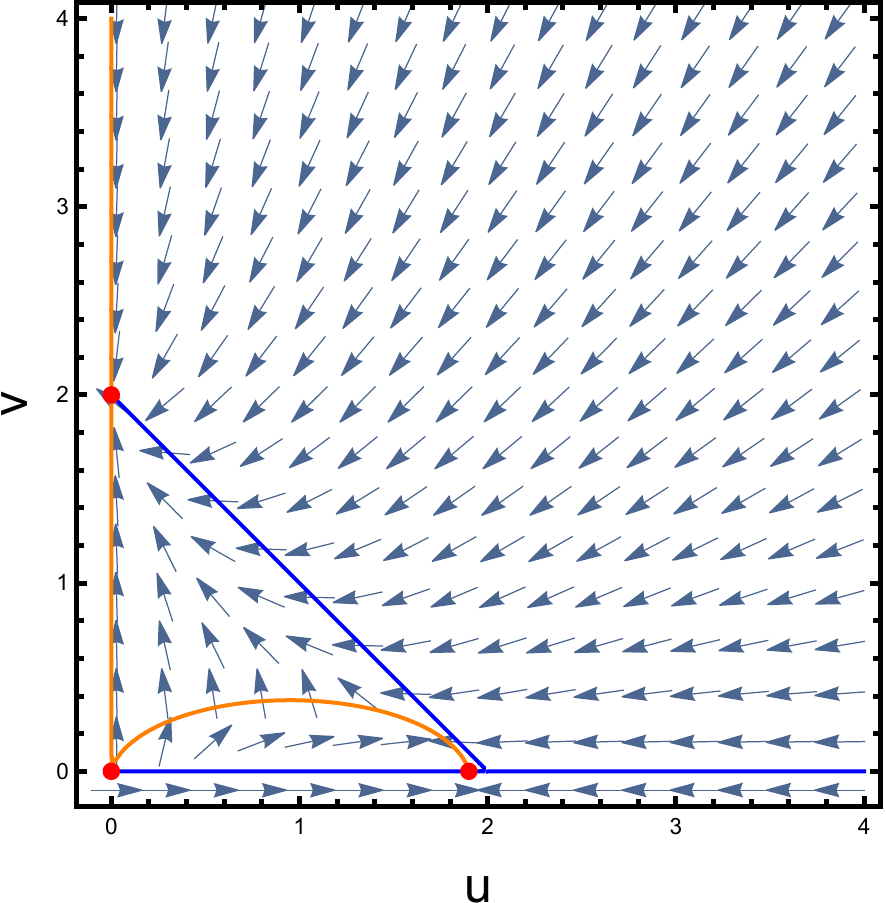}}}
		\subfigure[]
		{\scalebox{0.33}[0.45]{
				\includegraphics[width=\linewidth,height=5in]{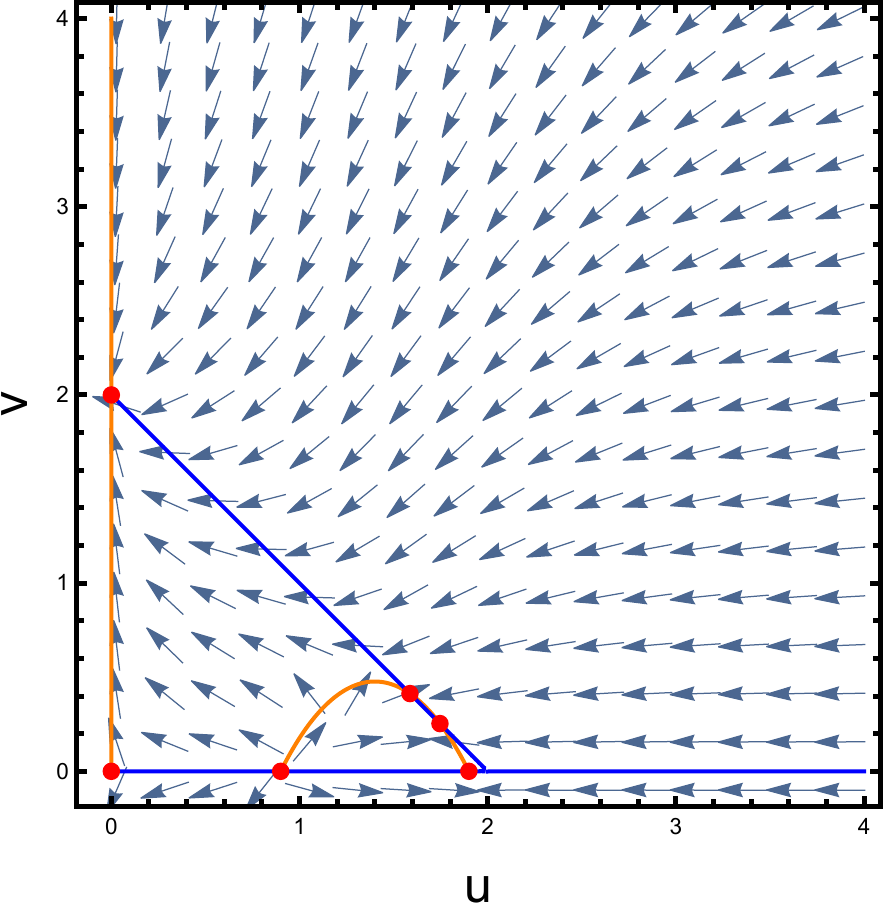}}}
		\newline
		\subfigure[]
		{\scalebox{0.33}[0.45]{
				\includegraphics[width=\linewidth,height=5in]{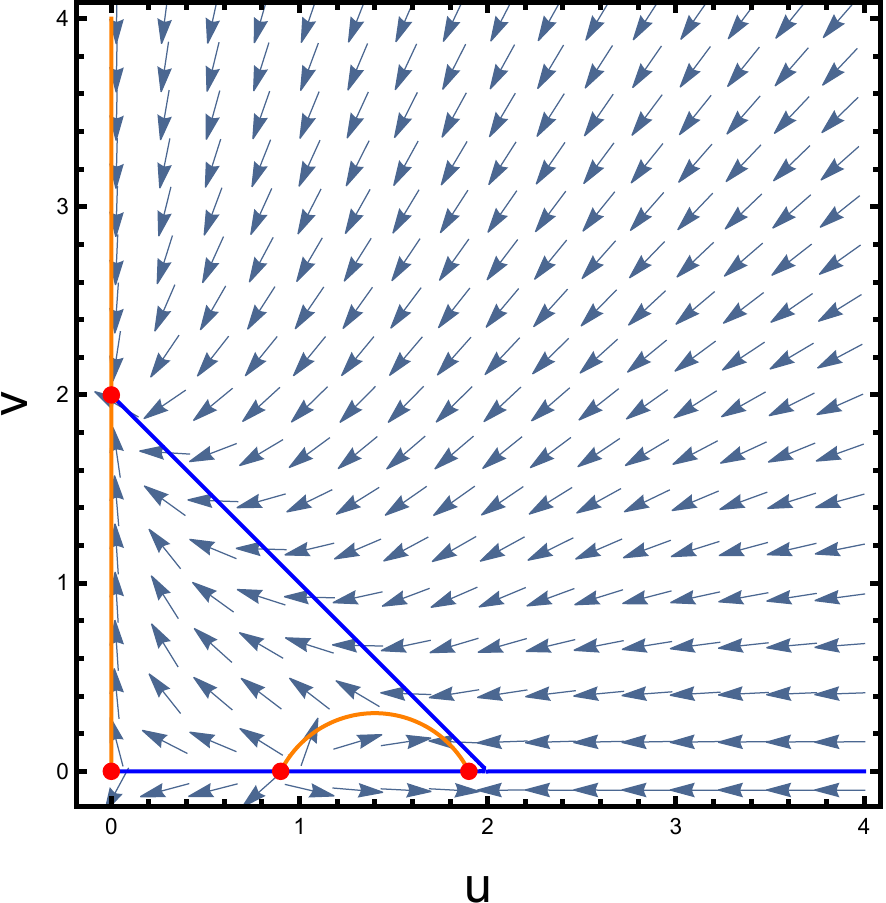}}}
		
		\caption{The phase plots below depict different dynamics for the two-species Lotka-Volterra competition model under the influence of Allee and fear effects \eqref{update_model}. The orange curve represents the $u$-nullcline, while the blue curve represents the $v$-nullcline. The parameters used in all cases are: $a=0.5, b=0.5, c=1, e=2.1$, and $f=2$. The cases shown are: (a) Classical case with no Allee and fear effects. (b) No Allee and small fear  $(q=0.1)$ (c) No Allee and large fear  $(q=2)$. (d) Weak Allee and small fear  $(q=0.1)$. (e) Weak Allee and large fear  $(q=10)$. (f) Strong Allee $(p=0.9)$ and small fear $(q=0.1)$. (g) Strong Allee $(p=0.9)$ and large fear  $(q=2)$.}
		\label{fig:dis_1}
	\end{figure}

	\begin{figure}[h]
		\subfigure[]
		{\scalebox{0.33}[0.45]{
				\includegraphics[width=\linewidth,height=5in]{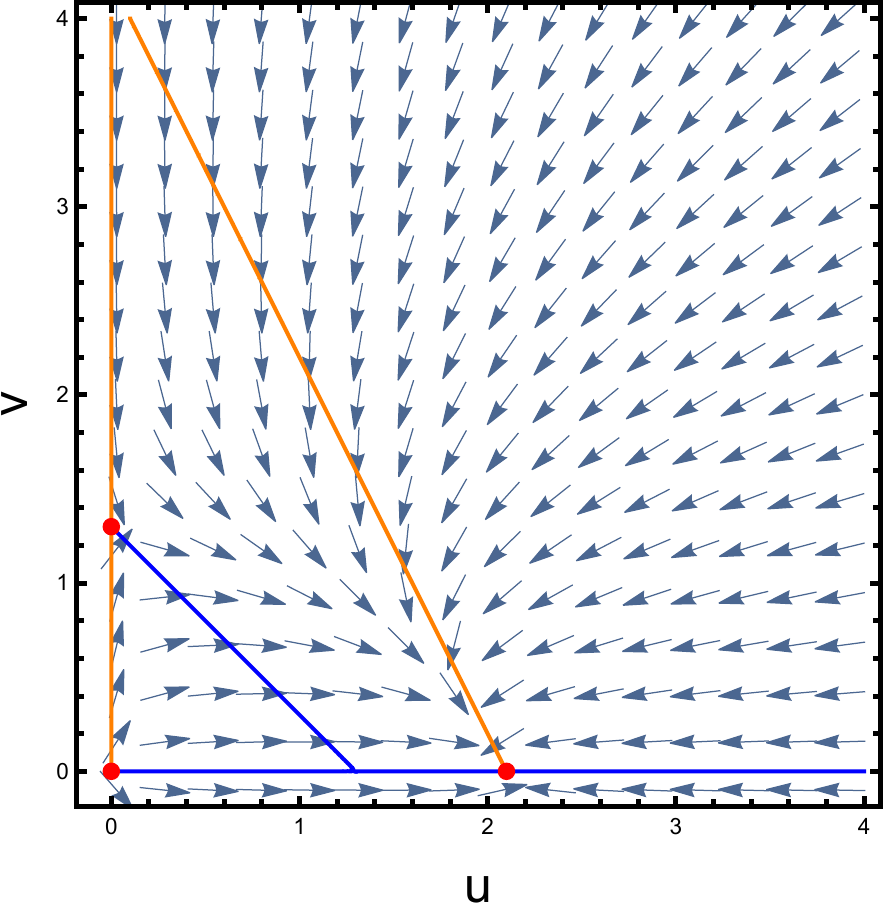}}}
		\subfigure[]
		{\scalebox{0.33}[0.45]{
				\includegraphics[width=\linewidth,height=5in]{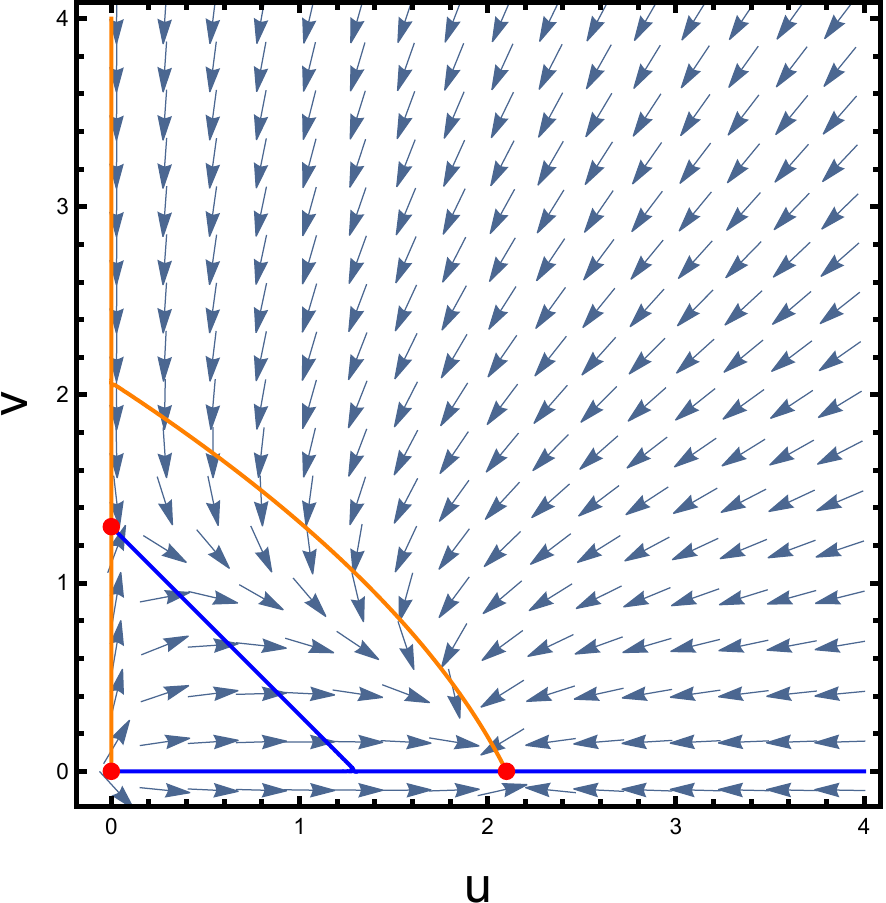}}}
		\subfigure[]
		{\scalebox{0.33}[0.45]{
				\includegraphics[width=\linewidth,height=5in]{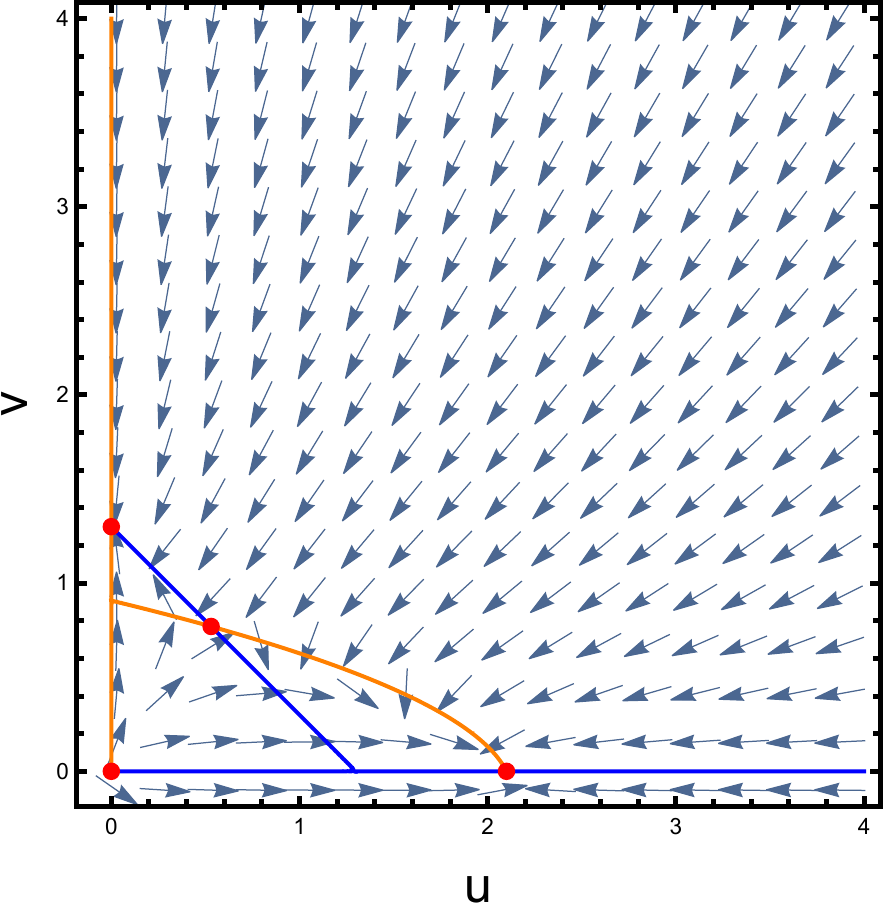}}}
		\newline
		\subfigure[]
		{\scalebox{0.33}[0.45]{
				\includegraphics[width=\linewidth,height=5in]{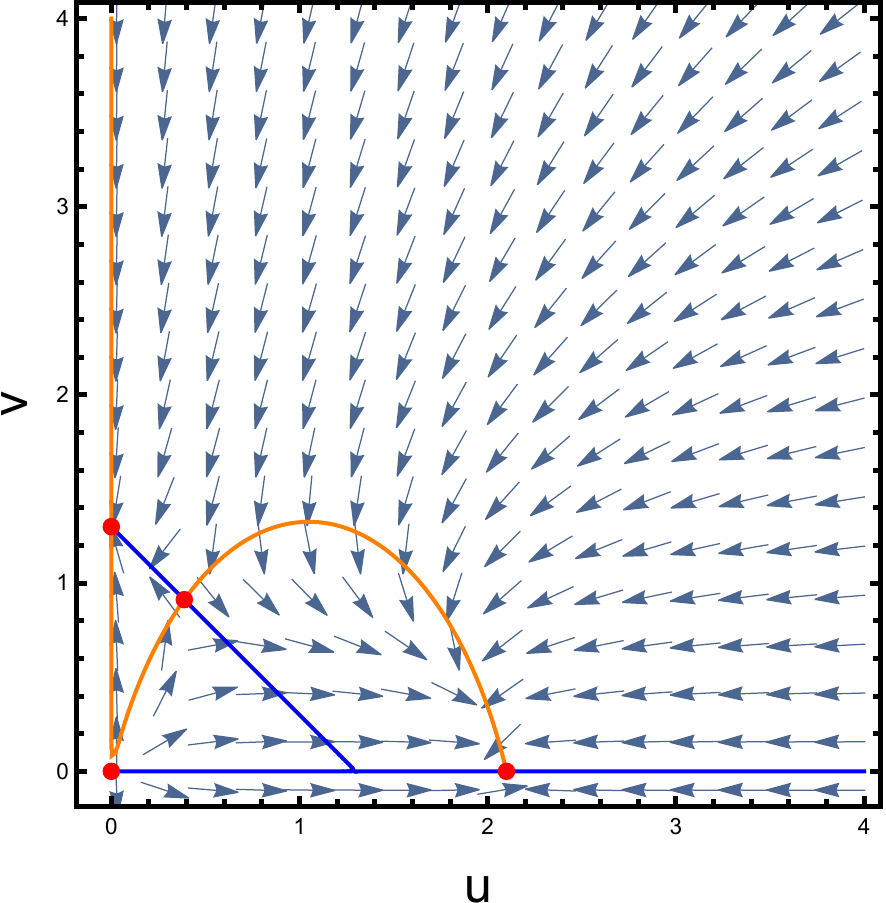}}}
		\subfigure[]
		{\scalebox{0.33}[0.45]{
				\includegraphics[width=\linewidth,height=5in]{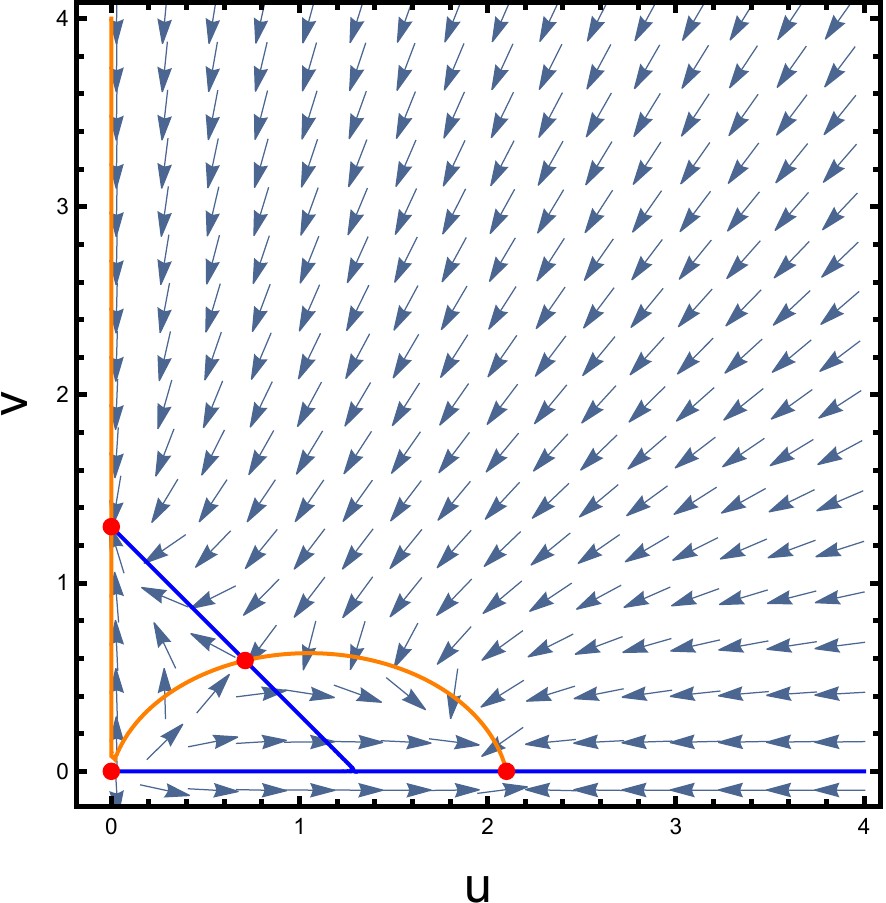}}}
		\subfigure[]
		{\scalebox{0.33}[0.45]{
				\includegraphics[width=\linewidth,height=5in]{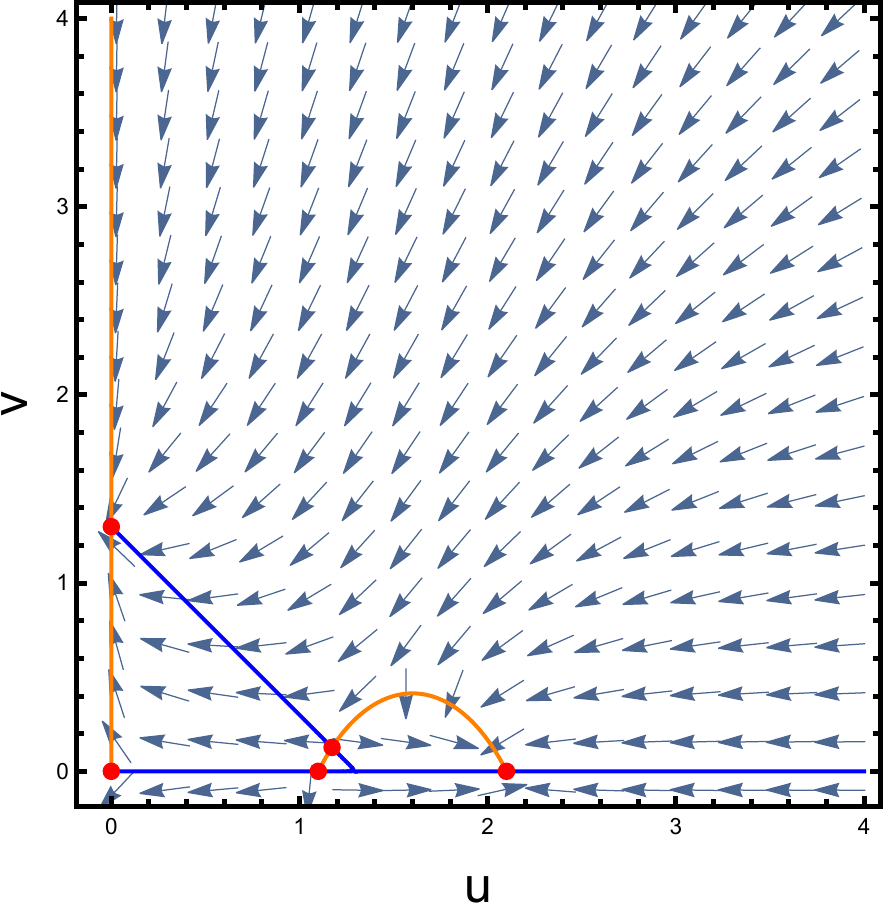}}}
		\newline
		\subfigure[]
		{\scalebox{0.33}[0.45]{
				\includegraphics[width=\linewidth,height=5in]{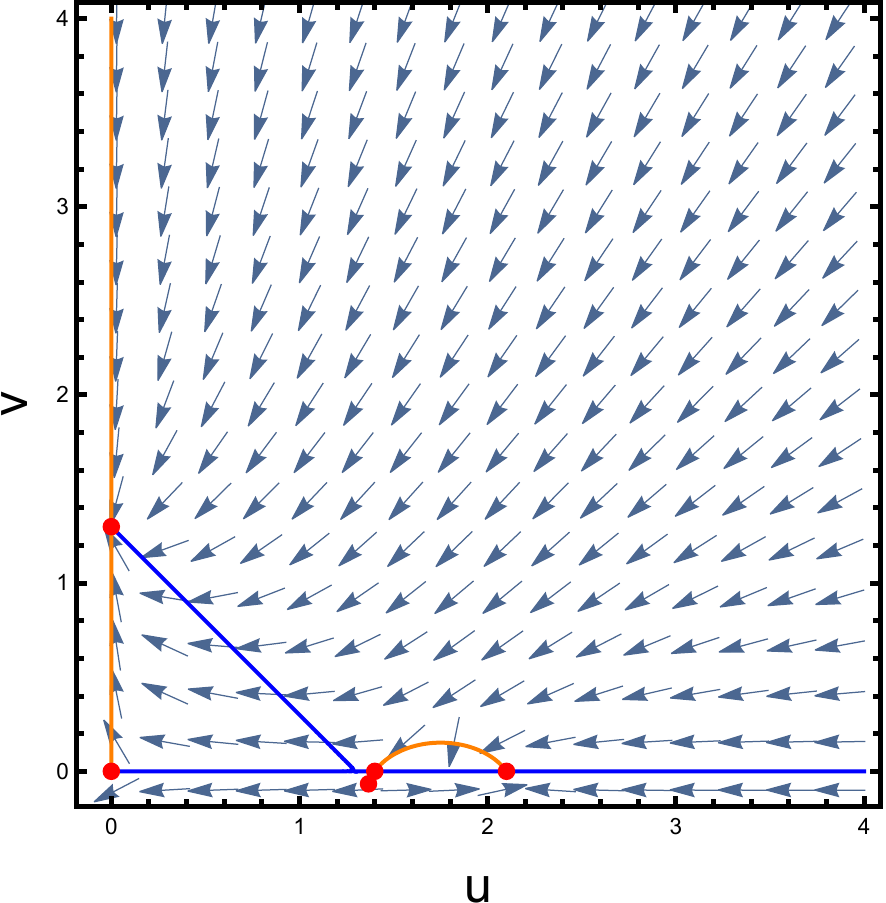}}}
		\caption{The phase plots below depict different dynamics for the two-species Lotka-Volterra competition model under the influence of Allee and fear effects \eqref{update_model}. The orange curve represents the $u$-nullcline, while the blue curve represents the $v$-nullcline. The parameters used in all cases are: $a=0.5, b=0.5, c=1, e=2.1$, and $f=1.3$. The cases shown are: (a) Classical case with no Allee and fear effects. (b) No Allee and small fear  $(q=0.5)$ (c) No Allee and large fear  $(q=4)$. (d) Weak Allee and small fear  $(q=0.5)$. (e) Weak Allee and large fear  $(q=4)$. (f) Strong Allee $(p=1.1)$ and small fear $(q=0.5)$. (g) Strong Allee $(p=1.4)$ and large fear  $(q=4)$}
		\label{fig:dis_2}
	\end{figure}

	\section{Discussion and Conclusion}
	In this paper, we study a competitive system in which the first species is affected by both the Allee and fear effects. First, we can determine that the system \eqref{3} always has four boundary equilibria. Let $f(x,y)=g(x,y)$ to obtain a quadratic equation \eqref{5}, where the real roots of \eqref{5} are the horizontal coordinates of the equilibria. To be consistent with the biological meaning, we only study the positive equilibria. According to Proposition 2.1 and 2.3, we know that the solution of the system \eqref{3} needs to satisfy $0<x(t)<1$ and $0<y(t)<1$. Thus we obtain the conditions to be satisfied by the system's parameters \eqref{3} in the presence of all positive equilibria.
	
	Next, we analyze the stability of the equilibria. Substituting the boundary equilibria into the Jacobi matrix of system \eqref{3} for verification, it is found that the boundary equilibria $E_1$ and $E_2$ may be degenerate equilibria under certain conditions. We use Theorem 7.1 in Chapter in \cite{22} to determine and obtain the stability of $E_1$ and $E_2$ under different conditions. Therefore, it can be concluded that a change in the Allee effect parameter $p$ leads to a change in the extinction of system \eqref{3} as well. For the positive equilibria $E_{1*}$ and $E_{2*}$, we transform the Jacobi matrix of system \eqref{3} into the form of \eqref{10}. By comparing the relationship between the magnitudes of the tangents of the two isoclines at the intersection, we can determine that $E_{1*}$ is always a saddle and $E_{2*}$ is a stable node. By changing the value of the fear effect parameter $q$, $E_{1*}$ and $E_{1*}$ recombine into $E_{3*}$ when $q = q_*$. By calculating $J(E_{3*})$ it is found that $E_{3*}$ is a degenerate equilibrium point. Similarly, we translate $E_{3*}$ to the origin and perform a Taylor expansion on it, using Theorem 7.1 in Chapter in \cite{22} to find that $E_{3*}$ is a saddle-node.
	
	In studying the existence of equilibria, we find that the positive equilibrium point $E_{2*}$ coincides with the boundary equilibrium point $E_1$ when $c=1$. We consider that system \eqref{3} undergoes a transcritical bifurcation at $E_1$. We choose $c_{TR}$ as the bifurcation parameter and use {\it Sotomayor's Theorem} to prove the transversality condition for system \eqref{3} to undergo a transcritical bifurcation. Similarly, we also find that the positive equilibria $E_{1*}$ and $E_{2*}$ will merge into $E_{3*}$ when $q=q_*$. By choosing $q_{SN}$ as the bifurcation parameter and using {\it Sotomayor's Theorem}, we prove that system \eqref{3} undergoes a saddle-node bifurcation around $E_{3*}$.
	
	In summary, when $0<q<q_*$, $2A_1+A_2>0$, and $0<c<1$, there is a stable positive equilibrium point $E_{2*}$, i.e., two species can maintain a coexistence relationship under this condition. This article has some guidance for the conservation of species diversity.
	
	It is worthwhile to comment to what degree the addition of the Allee effect changes the dynamical features of the model presented in \cite{19}, that is when there is only the fear effect present. We discuss this in terms of the classical scenarios of competitive exclusion, weak competition and strong competition, see Appendix B. In a weak competition type scenario, classically there is one globally attracting interior equilibrium. The effect of fear on the species $u$, does not dynamically change this situation for small values of the fear parameter, in that there remains one interior equilibrium which is globally attracting. However, intermediate values of fear could yield two interior equilibrium and large values of fear can enable a competitive exclusion type scenario where $(0,v^{*})$ becomes globally attracting. 
	However, with an added weak Allee effect, two interior equilibriums will always occur, see Fig. \ref{fig:dis_1}, for small values of fear - whereby Theorem \ref{thm:stab_pos}, one is stable and one a saddle. This creates a bi-stability situation whereby attraction to an interior equilibrium is possible, for certain initial conditions, while certain other initial conditions are attracted to $(0,v^*)$ - this is seen by comparing (b) to (d) in Fig. \ref{fig:dis_1}.
	Such monotonicity breaking behavior has been recently observed in variations of the classical competition model \cite{parshad2021some}.
	Note if fear values are chosen large one can again have a competitive exclusion type scenario where $(0,v^{*})$ becomes globally attracting. Similar dynamics are seen with a strong Allee effect in place. Note, for large enough Allee threshold or fear effect, a collision of the two interior equilibriums is possible via a saddle node bifurcation, leading to no interior equilibriums. In this setting we have a competitive exclusion type scenario with $(0,v^*)$, being globally attracting. Thus Combination of Allee threshold and fear parameter, lead to qualitatively different dynamics than with only a fear effect, when in particular the fear effect is \emph{small}.
	
	In the case of a competitive exclusion type scenario with $(u^*,0)$, being globally attracting, a large enough fear effect in $u$, can cause the occurrence of one interior equilibrium, which is a saddle, resulting in a bi-stability type situation. No effect is seen in the event of small fear. However, with an added weak Allee effect one interior equilibriums will always occur, see Fig. \ref{fig:dis_2}, for any values of fear, small or large, see Fig. \ref{fig:dis_2}. Note, with an added strong Allee effect, for any level of fear, there always exists an Allee threshold, s.t. competitive exclusion can be changed to a strong competition type scenario, that is $(0,v^*)$, becomes attracting for certain initial data while $(u^*,0)$, is attracting for certain other initial data, see Fig. \ref{fig:dis_2}. 
	
	These dynamics have interesting applications for bio-control. If we envision a situation where an invasive/pest population $u$ is attempted to be controlled by an introduced predator population $v$, where the invader is not allowing the introduced species to establish (so competitive exclusion of $v$ is seen) then knowledge that there exists a weak Allee effect in $u$, does not warrant investment in $v's$, that could incite fear - rather control should occur by direct consumptive effects. The reason being, any level of fear will not be sufficient to eradicate $u$, as seen in Fig. \ref{fig:dis_2} (d), (e). However, if a strong Allee effect is present in $u$, managers could think of how that threshold could be manipulated or increased - this could ``almost" reverse competitive exclusion and eradicate $u$ \cite{courchamp2008allee}, for most initial conditions - $(u^*,0)$ would still be locally attracting, se . Alternatively if the invader and the introduced species are coexisting, then irrespective of the type of Allee effect, investment into inciting large enough fear in $u$, could be fruitful and lead to its eradication, see Fig. \ref{fig:dis_2} (g). Thus a key theoretical direction for bio-control applications is to investigate ecological mechanisms in competitive systems, such that Theorem \ref{thm:stab_pos} \emph{changes} and one can have an interior equilibrium that changes in stability - perhaps as the equilibrium moves from the positive to negative quadrants. This could create geometries whereby a globally attracting situation would change to competitive exclusion of one species or a bi-stability (strong competition) scenario would change to a globally attracting situation. In summary the Allee effect, weak or strong opposes the fear effect, in the regimes of small fear, in that it qualitatively changes the dynamics from a purely fearful situation. However, the Allee effect reinforces the fear effect, in the regimes of large fear. Thus we purport quantification of the fear levels incited by introduced predators/competitors is a crucial step in designing effective bio-control strategies, if there are Allee effects present.

	\section{Appendix A}
	\begin{flalign}
		\begin{split}
			& e_{10}=\frac{2 \left(a \left(p +1\right) c +p^{2}-2 a -2 p +1\right) c \left(\left(a +2 p \right) c -p -1\right) a}{\left(a^{2} c^{2}-p^{2}+2 p -1\right) \left(a \,c^{2}+c p +c -2\right)}\\
			& e_{01}=\frac{2 \left(a \left(p +1\right) c +p^{2}-2 a -2 p +1\right) \left(\left(a +2 p \right) c -p -1\right) a}{\left(a^{2} c^{2}-p^{2}+2 p -1\right) \left(a \,c^{2}+c p +c -2\right)}\\
			& e_{20}=\frac{2 \left(a \left(p +1\right) c^{2}+\left(p^{2}-3 a -4 p +1\right) c +p +1\right) a}{a^{2} c^{2}-\left(p -1\right)^{2}}\\
			&e_{11}=-\frac{\left(a \left(p +1\right) c +p^{2}-2 a -2 p +1\right) }{\left(a c +p -1\right)^{2} \left(a c -p +1\right)^{2} \left(-1+c \right) \left(c p -1\right)} e_{110}\\
			&e_{110}=\left(a^{3} c^{4}+3 \left(\frac{4 p}{3}+a \right) \left(p +1\right) a \,c^{3}+e_{1100} +\left(-3 p^{3}+3 p^{2}+\left(4 a +3\right) p +4 a -3\right) c +2 \left(p -1\right)^{2}\right) a\\
			&e_{1100}=\left(4 p^{3}+\left(-a -8\right) p^{2}+\left(-14 a +4\right) p -6 a^{2}-a \right) c^{2}\\
			&e_{02}=-\frac{\left(\left(a +2 p \right) c -p -1\right) \left(a^{2} c^{2}+2 a \left(p +1\right) c +p^{2}-4 a -2 p +1\right)^{2} a}{2 \left(a c +p -1\right)^{2} \left(a c -p +1\right)^{2} \left(-1+c \right) \left(c p -1\right)}\\
			&e_{30}=-\frac{2 a \left(-2+a \,c^{2}+\left(p +1\right) c \right)}{a^{2} c^{2}-\left(p -1\right)^{2}}\\
			&e_{21}=-\frac{\left(a \left(p +1\right) c^{2}+\left(p^{2}-3 a -4 p +1\right) c +p +1\right) \left(a^{2} c^{2}+2 a \left(p +1\right) c +p^{2}-4 a -2 p +1\right) a \left(-2+a \,c^{2}+\left(p +1\right) c \right)}{\left(a c +p -1\right)^{2} \left(a c -p +1\right)^{2} \left(-1+c \right) \left(c p -1\right)}\\
			&e_{12}=\frac{\left(a^{2} c^{4} p -2 a \left(p +1\right) \left(p +a \right) c^{3}+\left(-3 p^{3}+6 p^{2}+\left(8 a -3\right) p +3 a^{2}\right) c^{2}-2 \left(p +1\right) \left(-p^{2}+a +2 p -1\right) c -\left(p -1\right)^{2}\right) e_{120}}{2 \left(a c +p -1\right)^{3} \left(a c -p +1\right)^{3} \left(-1+c \right)^{2} \left(c p -1\right)^{2}}\\
			&e_{120}=-\left(a^{2} c^{2}+2 a \left(p +1\right) c +p^{2}-4 a -2 p +1\right)^{2} a \left(-2+a \,c^{2}+\left(p +1\right) c \right)\\
			&e_{03}=\frac{\left(\left(a +2 p \right) c -p -1\right) \left(a^{2} c^{2}+2 a \left(p +1\right) c +p^{2}-4 a -2 p +1\right)^{3} a \left(-2+a \,c^{2}+\left(p +1\right) c \right)}{4 \left(a c +p -1\right)^{3} \left(a c -p +1\right)^{3} \left(-1+c \right)^{2} \left(c p -1\right)^{2}}\\ \nonumber
			&f_{10}=\frac{2 b \left(-1+c \right) \left(c p -1\right) c}{a \,c^{2}+c p +c -2}\\
			&f_{01}=\frac{2 \left(c p -1\right) \left(-1+c \right) b}{a \,c^{2}+c p +c -2}\\
		\end{split}&
	\end{flalign}
	
	\section{Appendix B}
	\label{sec:compd}
	Consider two-species Lotka-Volterra competition model under the influence of Allee and fear effects:
	\begin{equation}
		\left\{\begin{array}{l}
			\displaystyle\frac{\mathrm{d} x}{\mathrm{d} t} =x \left [(e-x)\displaystyle\frac{p(x)}{q(y)}-ay \right ]=xf(x,y)\equiv F(x,y), \vspace{2ex}\\
			\displaystyle\frac{\mathrm{d} y}{\mathrm{d} t} =by\left (f-y-cx \right )=yg(x,y)\equiv G(x,y).
		\end{array}\right.
		\label{update_model}
	\end{equation}
	
	By setting $p(x)\equiv 1$ and $q(y)\equiv 1$, we can derive the classic Lotka-Volterra ODE competition model, which involves two competing species, denoted as $x$ and $y$: 
	\begin{equation}
		\left\{\begin{array}{l}
			\displaystyle\frac{\mathrm{d} x}{\mathrm{d} t} = e x - x^2 - a xy, \vspace{2ex}\\
			\displaystyle\frac{\mathrm{d} y}{\mathrm{d} t}  = bf y - b y^2 - bc xy.
		\end{array}\right.
		\label{Eqn:1}
	\end{equation}
	The intrinsic (per capita) growth rates are represented by $e$ and $bf$, while the intraspecific competition rates are represented by $1$ and $b$, and the interspecific competition rates are represented by $a$ and $bc$.  All parameters considered are positive. The dynamics of this system are well studied \cite{Murray93}. We recap these briefly,
	
	We briefly recap the dynamics of the classical two species Lotka-Volterra ODE competition model,
	\begin{itemize}
		\item $E_0 = (0,0)$ is always unstable.
		\item $E_x = (e,0)$ is globally asymptotically stable if $\dfrac{e}{bf} > \max\left\lbrace\dfrac{1}{bc},\dfrac{a}{b}\right\rbrace$. Herein $x$ is said to competitively exclude $y$.
		\item $E_y = (0,f)$ is globally asymptotically stable if $\dfrac{e}{bf}<\min\left\lbrace\dfrac{1}{a},\dfrac{a}{bc}\right\rbrace$. Herein $y$ is said to competitively exclude $x$. 
		\item $E^* = \Big(\frac{e-fa}{1-ac},\frac{f-ce}{1-ac}\Big)$ exists when $b-abc \neq 0$. The positivity of the equilibrium holds if $bc<\frac{bf}{e}<\frac{b}{a}$ and is globally asymptotically stable if $b(1-ac)>0$. This is said to be the case of weak competition.
		
		\item If $b(1-ac)<0$, then $E^* = \Big(\frac{e-fa}{1-ac},\frac{f-ce}{1-ac}\Big)$ is unstable as a saddle. In this setting, one has initial condition dependent attraction to either $E_x(e,0)$ or $E_y(0,f)$. This is the case of strong competition.
	\end{itemize}

	Similarly, by setting $p(x)\equiv 1$ and $q(y)=1+py$, we can obtain the Lotka-Volterra competition model for two competing species, where the species $y$ is afraid of the species $x$ \cite{19}. On the other hand, if we set $q(y) \equiv 1$ and $p(x)=p-x$, we obtain a two species Lotka-Volterra ODE competition model that is subject to the Allee effect.

	\section*{References}
	{\footnotesize
		\bibliography{biblo}
	}
	
\end{document}